\newtheorem{theorem}{Theorem}[section]
\newtheorem{lemma}[theorem]{Lemma}
\newtheorem{proposition}[theorem]{Proposition}
\newtheorem{corollary}[theorem]{Corollary}
\newtheorem{definition}[theorem]{Definition}
\newtheorem{remark}[theorem]{Remark}
\DeclareMathOperator*{\argmax}{arg\,max}
\title{Computing a Best Response against a Maximum Disruption Attack}
\author{\`Alvarez, C. and Messegu\'e, A.}
\begin{document}

\maketitle

\begin{abstract}
Inspired by scenarios where the strategic network design and defense or immunisation are of the central importance,  Goyal et al. \cite{Goyal2} defined a new Network Formation Game with Attack and Immunisation. The authors  showed that despite the presence of attacks,  the game has high social welfare properties and  even though the equilibrium networks can contain cycles, the number of edges is strongly bounded.  Subsequently, Friedrich et al. \cite{Lenzner} provided a polynomial time algorithm for computing a best response strategy for the maximum carnage adversary which tries to kill as
many nodes as possible, and for the random attack adversary, but they left open the problem for the case of maximum disruption adversary.
This adversary  attacks the vulnerable region that minimises the post-attack social welfare.

In this paper we address our  efforts to this question.  We can show  that computing a best response strategy given a player $u$ and  the strategies of all players but $u$, is polynomial time solvable when the initial network  resulting from the given strategies is connected. Our algorithm is based on a dynamic programming and has some reminiscence to the  knapsack-problem, although is considerably more complex and involved.

\end{abstract}

\section{Introduction}

Strategic network formation arises in settings where agents
receive some benefit from being connected to other agents, but  also 
incur costs due to  the creation of  links. We focus our attention on 
the network formation game with attack and immunisation  defined in \cite{Goyal2} which is an extension of the well-known reachability model  introduced by Bala and Goyal in \cite{Goyal1}. This extension 
incorporates  a strong adversary  and immunisation. 
The adversary  attacks and destroys  a node of the network and  then
this attack is spread virus-like to neighboring non-immunised nodes and destroys them as well. Besides deciding the subset of nodes to whom to buy links,   each player has to decide whether she wants to buy immunisation against eventual attacks.

The benefit of a player is defined as  the expected size of her connected component post-attack and  her  cost depends on the number of links bought by the player and the cost of being immunised if it is the case.

In \cite{Goyal2} the authors provide  structural results for their model and raise the  open problem of settling the complexity of computing a best response strategy.

The existence of an efficient best response algorithm
for a network formation game is in general  rare. For  related network formation
models, e.g. \cite{Biloetal, Biloetal2015, Chauhanetal2016, CordLenzner2015, Ehsanietal2015, Fabrikantetal2003, Alvarezetal2016}  where players strive for a central position in the network, it
has been shown that the best response problem is indeed NP-hard.

\medskip
\noindent
{\bf Related results}.
Lenzner et al.  in \cite{Lenzner} showed  that the natural model defined by Goyal et al. in \cite{Goyal2} is one of the few examples of a tractable realistic model for strategic network formation and thereby answer an open question
by these authors. They provided a polynomial time algorithm  for computing a best response
strategy for  the \emph{maximum carnage adversary} which tries to kill as
many nodes as possible, and for the natural variant which employs \emph{random attack adversary}.  Settling the complexity of computing a best response strategy against  \emph{maximum disruption  adversary} was left as an open problem in \cite{Lenzner}.
This adversary  attacks the vulnerable region that minimises the post-attack social welfare. Notice that a naive approach to calculate the best response for a player would consider all $2^n$ possible subsets of links as well as the possibility of to be or not immunised. The algorithm presented in  \cite{Lenzner} circumvents this combinatorial explosion computing first a potential subset of  vulnerable candidates using a dynamic programming  if the node  is not immunised, or  a greedy programming  if the node being immunised. And second, when immunised nodes are considered, the given network is simplified in order to be tractable by  a dynamic programming approach.



\medskip
\noindent
{\bf Our Contribution}.
We assume that the outcome network  of the strategies of all players   but  the one of who wants to compute a best response, is connected. 
That is, the initial network after dropping any strategy for player $u$ is connected. Our main result is to show that in that situation  the  best response problem of $u$  given the strategies of all the rest of players can be computed in polynomial time  for the case of maximum disruption adversary.  

The key ideas  that lead us to prove it are the following:

\begin{enumerate}

\item  The definition of what we call \emph{delta value}. This parameter allow us to characterise the set of nodes attacked by the adversary. Besides this, it is also crucial the  definition of our  \emph{meta-tree}. This tree-like structure   underlying any configuration from the model conducts to  a first simplification on how can we think of a potential best response. 

\item The definition  of \emph{restricted strategy}  and the corresponding \emph{restricted utility}. 
Since we want to compute efficiently a best response for a given player, we partition the set of potential best responses into a collection of mutually disjoint subsets depending on different parameters whose  values are upper bounded  by a polynomial of the number of players. Hence, if  we know how to find a strategy achieving the maximum restricted utility in each subset, then we can select among them the one having the  maximum utility as a best response for the original problem. Moreover, 
we show that the restricted utility in a sub-tree $T$ can be expressed in terms of the restricted utilities of the  strategies restricted to each of the its sub-trees of $T_1,...,T_k$.
We also define  the natural  concept of restricted strategy having the maximum possible restricted utility, named as 
\emph{restricted BR-strategy}. Analogously to the restricted utility, 
we can exploit the structure of the meta-tree.


\item Finally,  two recurrence relations corresponding to the restricted BR-strategies and their respective restricted utilities can be given using the previous characterisations.
\end{enumerate}

Since the parameters of such recurrences can take a polynomial number (in $n$, the number of  players) of possible values taking at most polynomial values (in $n$),   then we can conclude that 
the Best Response problem  is polynomial time computable using a dynamic programming approach.

\medskip
\noindent
{\bf Organisation of the paper}.
In Section 2 we introduce the model  and we provide some extra definitions that will be fundamental to obtain our main result. 
In Section 3 we introduce the  concepts of \emph{restricted strategy} 
and \emph{restricted utility} and we show how to characterise recursively the restricted utility 
exploiting  the structure of the meta-tree, breaking the original problem into easier sub-problems to solve.
In Section 4, we consider   the natural  concept of restricted strategy having the maximum possible restricted utility, which we call \emph{restricted BR-strategy} and we show how to express  the utility of  a restricted BR-strategy recursively. 
Finally, in Section 5 we define two recurrence relations corresponding to the restricted BR-strategies and their respective 
restricted utilities that allow us to solve the Best Response problem in polynomial time using a dynamic programming approach.

\section{Best Response Strategies against a Maximum Disruption Adversary}

We start introducing the model as a strategic game following the definitions given in \cite{Lenzner,Goyal1}. 

First of all,  we consider $N = \left\{ 1,...,n\right\}$ the set of the $n$ players or agents that correspond to the nodes of the network. We use the terms player, node or agent interchangeably. Each player then buy links at price $\alpha >0$ to the other players and can buy immunisation at price $\beta > 0$, where $\alpha$ and $\beta$ are prefixed parameters of the model. We denote by $s_v=(S_v, i_v)$ the \emph{strategy} of player $v$ where $S_v \subseteq N \setminus \left\{v\right\}$, the \emph{link-strategy}, is the subset of players to which node $v$ buys links and $i_v \in \left\{ 0, 1\right\}$, the \emph{immunisation-value}, is the  value of player $v$ that indicates whether $v$ has bought immunisation.  

The \emph{strategy profile}, $\bold{s} = (s_1,...,s_n)$, is obtained considering the strategies of all the players where $s_v = (S_v,i_{v})$ for each $v \in N$. Then any strategy profile $\bold{s}$ gives place to the undirected graph $G(\bold{s}) = (N, \cup_{v \in N} \cup_{w \in S_v}\left\{(v,w) \right\})$.
Notice that the immunisation-values for each player induce a partition of the players into two sets which are $\mathcal{I},\mathcal{U}$ the immunised and the vulnerable set of players, respectively.

Once the network $G(\bold{s})$ is formed an adversary attacks a vulnerable player according to a strategy previously known by the players and such attack spreads through the network reaching all the vulnerable nodes that can be reached with a path of vulnerable nodes from the node that the adversary has attacked. The set of vulnerable nodes that the adversary can attack, the set of \emph{target nodes}, is denoted as $A(\bold{s})$. Then, $CC_v(a,\bold{s})$ is the connected component that can be reached from $v$ after the adversary attacks the player $a$ given that the players have adopted the respective strategies from $\bold{s}$. With this notation then the utility of a player $v$ is defined as $$U_v(\bold{s}) = -\alpha |S_v|-i_v \beta + \frac{1}{|A(\bold{s})|}\sum_{a \in A(\bold{s})} |CC_v(a,\bold{s})|$$ 
In this way we then define the \emph{social welfare} as $U(\bold{s}) = \sum_{v \in N} U_v(\bold{s})$. 

In \cite{Goyal1} the authors propose three distinct kind of adversaries: 

\begin{enumerate}
\item The \emph{maximum carnage} adversary selects with uniform probability any of the largest regions of contiguous vulnerable nodes and then attacks with uniform probability any of its vulnerable nodes.

\item The \emph{random attack} adversary attack with uniform probability any vulnerable node.

\item The \emph{maximum disruption} adversary attacks with uniform probability any vulnerable node from the regions minimising the post-attack social welfare.
\end{enumerate}

In this paper we deal with the maximum disruption adversary. Our aim is to show that  a \emph{best response}  for a given player $u$ from $N$ is  polynomial time computable. Therefore, given the tuple of the strategies of all the players except $u$, noted as $\bold{s}_{-u}$, we want to compute in polynomial time a strategy $(S,i)$ for $u$ such that if $\bold{s}_u = (S,i)$ then $U_u(\bold{s})$ has maximum value among all such possible strategies $(S,i)$.

For this reason it is more convenient to assume that we are given $\bold{s}_{-u}$ and, for any strategy $(S,i)$ for player $u$ we denote as $G(S,i)$ 
the corresponding undirected network after $u$ has adopted strategy $(S,i)$. Then, $\mathcal{U}[G(S,i)]$ and $\mathcal{I}[G(S,i)]$ are the vulnerable and immunised nodes from $G(S,i)$, respectively. 

We also define in a similar way $A(S,i)$, the set of target regions from $G(S,i)$ that the adversary attacks and $CC_z(a,S,i)$ is the connected component containing $z \in V(G(S,i))$ after the adversary attacks $a$ given that the strategy for $u$ is $(S,i)$. Then $U(u,S,i)$ is the utility of $u$ given that the strategy for $u$ is $(S,i)$:
\begin{equation*}U(u,S,i) = -\alpha |S| - i \cdot \beta + \frac{1}{|A(S,i)|}\sum_{a \in A(S,i)}|CC_u(a,S,i)|\end{equation*}

Finally, it is also useful to consider the following definition. For each node $z \in \mathcal{U}(G(S,i))$ we define the \emph{delta value}  of $z$ with respect the strategy $(S,i)$, noted as $\Delta(z,S,i)$, to be the sum of the squares of the size of the distinct connected components that we obtain after removing the node $z$ together with all vulnerable nodes connected to $z$ via a path of vulnerable nodes in $G(S,i)$. More specifically:
\begin{equation*}\Delta(z,S,i) = \sum_{t \in V(G)}|CC_t(z,S,i)|^2\end{equation*}
Then we define:
\begin{equation*}\Delta(S,i) = \min_{z \in \mathcal{U}[G(S,i)]}\Delta(z,S,i)\end{equation*} 

Next we provide the very first result that allow us to understand how the maximum disruption adversary behaves. More precisely, we see that the subset of attacked nodes by the maximum disruption adversary are precisely the vulnerable nodes that minimise their corresponding  delta value:

\begin{proposition}
$A(S,i) = \left\{ a \in \mathcal{U}(G(S,i)) \mid \Delta(a,S,i) = \Delta(S,i) \right\}$
\end{proposition}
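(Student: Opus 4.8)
The plan is to peel back the definition of the maximum disruption adversary and show directly that being a target node coincides with minimising the delta value. First I would record an elementary observation: for a vulnerable node $a$, the set destroyed by an attack on $a$ --- namely $a$ together with all vulnerable nodes joined to $a$ by a path of vulnerable nodes --- is exactly the maximal connected set of vulnerable nodes containing $a$, i.e.\ the \emph{vulnerable region} $R_a$ of $a$. Hence $R_a = R_{a'}$ when $a,a'$ lie in a common region and $R_a\cap R_{a'}=\emptyset$ otherwise; the sets $R_a$ partition $\mathcal{U}[G(S,i)]$; and both the post-attack network $G(S,i)-R_a$ and the quantity $\Delta(a,S,i)$ depend only on the region of $a$, not on the particular node $a$. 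Since the adversary attacks, uniformly at random, a vulnerable node belonging to some region that minimises the post-attack social welfare, the target set $A(S,i)$ is precisely the union of the regions minimising post-attack social welfare.

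The crux is a short computation of that post-attack social welfare. When $a$ is attacked, a surviving node $v$ has post-attack utility $-\alpha|S_v|-i_v\beta+|CC_v(a,S,i)|$, whereas a destroyed node $v\in R_a$ has post-attack utility $-\alpha|S_v|-i_v\beta$, its component being empty so that $|CC_v(a,S,i)|=0$. Summing over all $v\in N$,
\begin{equation*}
\mathrm{SW}\bigl(G(S,i)-R_a\bigr)=-\sum_{v\in N}\bigl(\alpha|S_v|+i_v\beta\bigr)+\sum_{v\in N}|CC_v(a,S,i)|=K+\Delta(a,S,i),
\end{equation*}
where $K:=-\sum_{v\in N}(\alpha|S_v|+i_v\beta)$ is independent of $a$, and the last equality is the definition of the delta value after grouping the nodes of $G(S,i)-R_a$ by their connected components (a component $C$ contains $|C|$ nodes, each with $|CC_v(a,S,i)|=|C|$, contributing $|C|^2$ in total).

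It remains to assemble these facts. Because $K$ does not depend on the attack, a region $R$ minimises the post-attack social welfare if and only if its common delta value $\Delta(a,S,i)$ (for $a\in R$) equals $\min_{z\in\mathcal{U}[G(S,i)]}\Delta(z,S,i)=\Delta(S,i)$. Taking the union over all minimising regions and using that every vulnerable node lies in exactly one region, we obtain $A(S,i)=\{a\in\mathcal{U}[G(S,i)]\mid\Delta(a,S,i)=\Delta(S,i)\}$. (If $\mathcal{U}[G(S,i)]=\emptyset$ both sides are empty and the statement is vacuous.)

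I do not expect a genuine difficulty here, as the argument is essentially unwinding definitions; the two points needing a little care are (i) confirming that the destroyed set, the post-attack graph, and $\Delta(a,S,i)$ are functions of the vulnerable region rather than of the individual attacked node --- the routine graph fact above --- and (ii) the modelling convention used in the displayed computation, namely that destroyed nodes still count their (negative) link and immunisation costs toward the social welfare, which is exactly what collapses the cost contribution into the attack-independent constant $K$. Were destroyed nodes instead omitted from the welfare sum, that contribution would vary with the attacked region and the clean equivalence with $\Delta$ would fail, so it is worth stating this convention explicitly before the computation.
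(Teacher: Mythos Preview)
Your proposal is correct and follows essentially the same route as the paper: both arguments reduce to the observation that $\sum_{v\in N}|CC_v(a,S,i)|=\Delta(a,S,i)$ by grouping surviving nodes by connected component, and that the link and immunisation costs contribute an attack-independent constant. The only cosmetic difference is that the paper phrases the computation as an average over a candidate attack set $A$ and bounds it below by $\Delta(S,i)$, whereas you work directly with a single attack; your framing is slightly more direct and your explicit remarks about region-dependence and the cost convention for destroyed nodes are points the paper leaves implicit.
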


\begin{proof} Let $G = G(S,i)$, (or $G= G(\bold{s})$ with $\bold{s}_u=(S,i)$) and let $A \subseteq V(G)$ with $A \neq \emptyset$ be a subset of nodes from $G$. Then:
\begin{equation*} 
\frac{1}{|A|} \sum_{v\in V(G)}\sum_{a\in A}|CC_v(a,S,i)|=
\frac{1}{|A|} \sum_{a \in A} \sum_{v \in V(G)} |CC_v(a,S,i)| =\frac{1}{|A|} \sum_{a \in A} \Delta(a,S,i) \geq \Delta(S,i)\end{equation*}

With equality iff $A \subseteq \left\{ a \in \mathcal{U}[G] \mid \Delta(a,S,i) = \Delta(S,i)\right\}$.

Therefore, the social welfare of $G=G(S,i)$, computed adding up the utilities $U_v(\bold{s})$ for each agent $v \in V(G)$, given that the adversary attacks the subset of nodes $A$ satisfies the next inequality: 
\begin{equation*} -\alpha |E(G)|-\beta |\mathcal{I}(G)| +\frac{1}{|A|} \sum_{v\in V(G)}\sum_{a\in A}|CC_v(a,S,i)| \geq  -\alpha |E(G)|-\beta |\mathcal{I}(G)| + \Delta(S,i)\end{equation*}

This allows us to deduce that the subset of nodes attacked by the maximum disruption adversary is exactly the subset of vulnerable nodes $a$ from $G$ such that $\Delta(a,S,i) = \Delta(S,i)$ which is what we wanted to see.
\end{proof}
\subsection{The meta-graph $G'$}

One of the first natural properties of a best response of a player $u$ is that $u$ buys at most one link to a specific region of contiguous vulnerable or immunised nodes. Let us introduce the concept of \emph{meta-graph}, that helps us to prove that this is indeed true.

\begin{definition}
Given a graph $G=G(S,i)$ the corresponding  \emph{ meta-graph} $G'=G'(S,i)$ is an undirected graph defined as follows:

(i) The set of vertices of $G'$, called \emph{meta-nodes} are exactly the maximally connected components of $\mathcal{U}[G]$ and   $\mathcal{I}[G]$. More precisely, given a node $w\in V(G)$ the corresponding meta-node containing $w$ in $G=G(S,i)$ is denoted by $W(w,S,i)$.
Then, the set of immunised and vulnerable meta-nodes are denoted by $\mathcal{I}[G']$ and $\mathcal{U}[G']$, respectively.

(ii) The set of edges of $G'$ are exactly the pair of meta-nodes $V_1V_2$ such that there exist nodes $v_1,v_2 \in V(G)$ such that satisfy $v_1 \in V_1$, $v_2 \in V_2$ and $v_1v_2 \in E(G)$. 
\end{definition}

Once the meta-graph has been introduced we consider the meta-node containing $u$, $W(u,S,i) \in V(G')$ which, depending on whether $u$ buys immunisation, is either a meta-node from $\mathcal{I}[G']$ or from $\mathcal{U}[G']$.  Now we take a look to a property that allows to compare the utility of distinct strategies given that there are some similitudes in the subsets $CC_u(v,\cdot,i)$ and the values $\Delta(v,\cdot,i)$ with $v\in \mathcal{U}[G]$.

\begin{lemma}
\label{lemm:trivial}
Let $(S,i), (S',i)$ be two strategies for $u$. If for every $v \in \mathcal{U}[G]$, $CC_u(v,S,i) = CC_u(v,S',i)$ and $\Delta(v,S,i) = \Delta(v,S',i)$,  then $U(u,S,i) - U(u,S',i) = -\alpha \left( |S| - |S'|\right)$. 
\end{lemma}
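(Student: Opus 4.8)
The plan is to unwind the definition of the utility $U(u,S,i)$ and observe that the two hypotheses kill every term except the link-cost term. Recall that
\[
U(u,S,i) = -\alpha|S| - i\cdot\beta + \frac{1}{|A(S,i)|}\sum_{a\in A(S,i)}|CC_u(a,S,i)|,
\]
and similarly for $(S',i)$. Since the immunisation-value $i$ is the same in both strategies, the $-i\beta$ terms cancel in the difference. So everything reduces to showing that the averaged-component-size term is identical for the two strategies, i.e. that
\[
\frac{1}{|A(S,i)|}\sum_{a\in A(S,i)}|CC_u(a,S,i)| = \frac{1}{|A(S',i)|}\sum_{a\in A(S',i)}|CC_u(a,S',i)|.
\]

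First I would argue that the two target sets coincide: $A(S,i) = A(S',i)$. By Proposition~2.1, $A(S,i) = \{a\in\mathcal{U}[G(S,i)] : \Delta(a,S,i) = \Delta(S,i)\}$, and likewise for $(S',i)$. The hypothesis gives $\Delta(v,S,i) = \Delta(v,S',i)$ for every $v\in\mathcal{U}[G]$; I need to be slightly careful that the ground set $\mathcal{U}[G]$ over which these quantities range is the same for both strategies — this is exactly what the notation "$v\in\mathcal{U}[G]$" in the statement is asserting (both $G(S,i)$ and $G(S',i)$ have the same vulnerable node set, which is forced since $u$ keeps immunisation-value $i$ and the strategies of all other players are fixed). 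Since $\Delta(\cdot,S,i)$ and $\Delta(\cdot,S',i)$ agree pointwise on this common set, their minima agree, so $\Delta(S,i) = \Delta(S',i)$, and hence the two sets of minimisers coincide: $A(S,i) = A(S',i) =: A$. Now for each $a\in A$ the hypothesis also gives $CC_u(a,S,i) = CC_u(a,S',i)$ as sets, so in particular $|CC_u(a,S,i)| = |CC_u(a,S',i)|$. Plugging these equalities into the averaged sum shows the two utility terms are equal, and therefore
\[
U(u,S,i) - U(u,S',i) = -\alpha|S| - (-\alpha|S'|) = -\alpha(|S|-|S'|),
\]
as claimed.

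This lemma is essentially bookkeeping, so there is no real obstacle; the only subtle point is the one flagged above — making sure that "$\mathcal{U}[G]$" and the implicit domains in the hypotheses refer to a set common to both $G(S,i)$ and $G(S',i)$, which is legitimate because fixing $\bold{s}_{-u}$ and the immunisation-value $i$ fixes the partition of $N$ into immunised and vulnerable nodes independently of the link-strategy. Once that is noted, the chain of equalities $\Delta(S,i)=\Delta(S',i)$, then $A(S,i)=A(S',i)$ via Proposition~2.1, then termwise equality of the $|CC_u(a,\cdot,i)|$, closes the argument.
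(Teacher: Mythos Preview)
Your proof is correct and follows essentially the same approach as the paper: first deduce $A(S,i)=A(S',i)$ from the pointwise equality of the $\Delta$-values (via Proposition~2.1), then use the pointwise equality of the $CC_u$ to conclude that the averaged-component terms coincide, leaving only the $-\alpha|S|$ difference. Your write-up is in fact slightly more careful, explicitly noting that $\mathcal{U}[G]$ is the same for both strategies and that the $-i\beta$ terms cancel.
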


\begin{proof} If $\Delta(v,S,i) = \Delta(v,S',i)$ for every $v \in \mathcal{U}[G]$ then $A(S,i) = A(S',i)$. If, moreover, $CC_u(v,S,i) = CC_u(v,S',i)$ for every $v\in \mathcal{U}[G]$ then, in particular, $CC_u(v,S,i) = CC_u(v,S',i)$ for every $v \in A(S,i) = A(S',i)$ and from here: 
\begin{equation*}\frac{1}{|A(S,i)|}\sum_{a \in A(S,i)}|CC_u(a,S,i)| = \frac{1}{|A(S',i)|}\sum_{a \in A(S',i)}|CC_u(a,S',i)|\end{equation*}
Then, the conclusion is clear using the definition of the utility $U(u,\cdot,i)$. \end{proof}

\begin{corollary}
\label{corol:simplification1}
Any potential Best Response contains at most one link to the same meta-node.
\end{corollary}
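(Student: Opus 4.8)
The plan is an exchange argument driven by Lemma~\ref{lemm:trivial}. It suffices to show that a strategy $(S,i)$ for $u$ in which $u$ buys two distinct links $uw_1,uw_2$ whose endpoints lie in the same meta-node, $W(w_1,S,i)=W(w_2,S,i)=:V$, cannot be a best response. So assume it is, set $S'=S\setminus\{w_2\}$ (that is, $u$ drops the link to $w_2$ and keeps everything else), and try to check that the pair $(S,i),(S',i)$ satisfies the hypotheses of Lemma~\ref{lemm:trivial}. Granting this, the lemma gives $U(u,S',i)-U(u,S,i)=-\alpha(|S'|-|S|)=\alpha>0$ since $\alpha>0$, contradicting that $(S,i)$ is a best response.

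The crux is therefore to verify that $CC_u(v,S,i)=CC_u(v,S',i)$ and $\Delta(v,S,i)=\Delta(v,S',i)$ for every $v\in\mathcal{U}[G]$. The guiding idea is that, because $w_1$ and $w_2$ lie in the same maximal connected region $V$ of nodes of the same (vulnerable or immunised) type, that region already contains a path joining $w_1$ and $w_2$, so in the meta-graph the edge $uw_2$ is ``parallel'' to $uw_1$; deleting it leaves both the set of meta-nodes and the set of meta-edges of $G'(S,i)$ unchanged, hence $G'(S',i)=G'(S,i)$ with the same meta-node sizes. One then observes that every quantity occurring in the hypothesis of Lemma~\ref{lemm:trivial} is a function of the meta-graph decorated with those sizes: an attack on a vulnerable node $a$ deletes exactly the vulnerable meta-node containing $a$, the surviving connected components are unions of meta-nodes, $CC_u(v,\cdot,i)$ is the union of the meta-nodes in $u$'s surviving component, and $\Delta(v,\cdot,i)$ is the sum of the squares of the sizes of those components. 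Equal decorated meta-graphs therefore yield equal $CC_u$ and equal $\Delta$, and Lemma~\ref{lemm:trivial} applies.

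The step I expect to be the genuine obstacle is making ``$G'(S',i)=G'(S,i)$'' airtight in the one delicate configuration, namely when $V$ is $u$'s own meta-node $W(u,S,i)$: there $uw_2$ may be an edge internal to $V$, and if $w_2$ is attached to the rest of $V$ only through $u$, deleting $uw_2$ splits $V$ and genuinely changes the meta-structure, so dropping that particular link is not harmless. I would deal with this by choosing which of the $\geq 2$ links of $u$ into $V$ to drop: if the removal of some such link keeps the subgraph induced on $V$ connected, drop that one and the argument above goes through verbatim; the only remaining case is the degenerate one in which every link $u$ buys inside $V$ is a bridge of the induced subgraph on $V$, which forces a very rigid structure on $V$, and there I would argue by a separate, direct comparison of utilities (obtained by rerouting or dropping such a link) that $(S,i)$ is still not optimal. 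Apart from this case analysis, everything reduces to the routine bookkeeping that $CC_u$ and $\Delta$ are meta-graph invariants, which is then fed into Lemma~\ref{lemm:trivial}.
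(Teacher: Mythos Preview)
Your exchange argument via Lemma~\ref{lemm:trivial} is exactly the paper's proof: it drops one of the two links, asserts that $CC_u(v,S,i)=CC_u(v,S',i)$ and $\Delta(v,S,i)=\Delta(v,S',i)$ for every vulnerable $v$, and applies the lemma. You are in fact more careful than the paper, which dispatches these equalities with the single word ``Clearly'' and never isolates the configuration $V=W(u,S,i)$ that you flag; your extra case analysis is therefore not needed to match the paper's level of rigour (and for the way the corollary is actually used downstream---treating a link-strategy as a choice of meta-nodes other than $u$'s own---the straightforward case already suffices).
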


\begin{proof} Let $v_1,v_2 \in V(G)$ be two nodes belonging to the same meta-node and suppose that $(S,i)$ is any strategy for $u$ with  $v_1,v_2 \in S$.  Consider $S' = S \setminus \left\{ v_1\right\}$. Clearly, $CC_u(v,S,i) = CC_u(v,S',i)$ and $\Delta(v,S,i) = \Delta(v,S',i)$ for every $v \in \mathcal{U}[G]$. Then, by Lemma \ref{lemm:trivial}, $U(u,S,i) - U(u,S',i) = -\alpha < 0$. Therefore, $U(u,S',i) > U(u,S,i)$ and $(S,i)$ cannot have maximum utility.
\end{proof}

In view of this we now redefine the following concepts:

(a) Thanks to Corollary \ref{corol:simplification1}, any link-strategy $S$ considered from any potential best response  for $u$ can be assumed without loss of generality to be a subset of $V(G')$, meaning that $u$ is buying one link for each meta-node from this subset and each such link is pointing to any node from $G$ inside such meta-node (it does not matter to which one). 



(b) Notice that any two nodes $z_1,z_2$ belonging to the same meta-node $z \in V(G')$ satisfy $\Delta(z_1,S,i) = \Delta(z_2,S,i)$ for any link-strategy $S$ and immunisation value $i$. Therefore, it makes sense to define $\Delta(z,S,i)$ as the sum of the squares of the size of the connected components we obtain after the removal of all the nodes from $z$ assuming that $u$ has adopted the link strategy $S$ and immunisation value $i$. 

(c) The attacked set $A(S,i)$ can be redefined as a subset of meta-nodes from $G'$ instead of nodes from $V(G)$.

\subsection{The Meta-Tree $T$}

Let $(S,i)$ be a strategy for player $u$ and $G'=G'(S,i)$ the corresponding meta-graph. 

We define $\mathcal{U}_1(G')$ to be the set of vulnerable meta-nodes from $V(G')$ that are not an articulation point from $V(G')$, that is, the set of vulnerable meta-nodes that when removed we obtain exactly the same number of connected components, and $\mathcal{U}_{\geq 2}(G')$ the set of vulnerable meta-nodes from $V(G')$ that are an articulation point in $V(G')$, that is, the set of vulnerable meta-nodes that when removed we obtain at least two distinct connected components in $V(G')$. Then, a connected sub-graph of at least two meta-nodes that remains connected whenever we remove any vulnerable meta-node is called to be a \emph{$2_{\mathcal{I}}-$vertex-connected component} of $G'$.


The next structure is crucial to simplify the problem of computing a Best Response in this model. The concept follows the same idea as in \cite{Lenzner} but we think our definition is much more simple and clear. 

\begin{definition}
The \emph{meta-tree} of $G'$, noted as $T=T(G')$ is  defined in the following way:

(i) The vertices of $T$, called \emph{meta-tree nodes} decompose into two subsets $V_{1}(G')$, $V_{\geq 2}(G')$ that consist of the maximal $2_{\mathcal{I}}$-vertex-connected components from $G'$ and  the meta-nodes from $\mathcal{U}_{\geq 2}(G')$, respectively.

(ii) The edges of $T$, called \emph{meta-tree edges} consist of the pairs $(w,W)$ with $W \in V_1(G')$ and $w \in V_{\geq 2}(G')$ such that $w \in W$. 

\end{definition}

Let $v \in V(T(G'))$. We will always assume that we root $T$ on the node containing $u$. When the context is clear we define $v_1,...,v_{k(v)}$ the children of $v$ with respect $T$ and $T_1(v),...,T_{k(v)}(v)$ the subtrees hanging from $v$. Then we also define $T(v) = \left\{ v \right\} \cup_{1 \leq j \leq k(v)}T_j(v)$ and $\overline{T(v)}$ the tree we obtain when removing $T(v)$ from $T$ which is the same as the connected component in which $W(u,S,i)$ belongs after removing $v$ from the graph. This notation will be useful specially in the next sections. 

\vskip 5pt

In the forthcoming subsections the most common situations that we will be dealing are: either working with the empty link-strategy for player $u$, working with a general strategy $(S,i)$ for player $u$, or comparing between two strategies $(S,i)$, $(S',i')$. In this last case, the most common scenario will consist in comparing $(S,i)$ with $(S',i)$, i.e., two strategies with the same immunisation value. In all these situations it is really important to distinguish between the next three levels of abstraction:

1. The first level of abstraction corresponds to the original network after $u$ adopts $(S,i)$. This network is $G(S,i)$ and the  nodes  from this graph, $V(G(S,i))$, constitute the most basic kind of nodes. If the context is clear we might write $G$ instead of $G(S,i)$.

2. The second level of abstraction corresponds to the meta-graph $G'(S,i)$. This is the network in which we merge connected nodes that are neighbours having the same immunisation value into bigger nodes which we call \emph{meta-nodes}. If the context is clear we might write $G'$ instead of $G'(S,i)$.

3. Finally the third level of abstraction is the meta-tree that is obtained considering the maximal $2_{\mathcal{I}}-$vertex-connected components of the meta-graph. In most of the cases this network will be noted as $T(G'(S,i))$ and in this third level we talk about meta-tree nodes and meta-tree edges. If the context is clear we might write $T$ instead of $T(G')$.

\subsection{Simplifying the Set of Possible Best Responses}

We continue obtaining some results that help us simplify  how best responses  can be assumed to be like. 

\textbf{The delete-simplification.} Scenario (a): Let $(S,i)$ be a strategy for player $u$. Suppose that $v_1,v_2 \in S$ are meta-nodes belonging to the same $2_{\mathcal{I}}$-vertex-connected component from $G'(\emptyset,i)$. Furthermore, suppose that $v_1 \in \mathcal{I}[G'(\emptyset,i)]$ and let $S' = S \setminus \left\{v_1\right\}$.

\begin{center}

\tikzset{every picture/.style={line width=0.75pt}} 

\begin{tikzpicture}[x=0.75pt,y=0.75pt,yscale=-0.75,xscale=0.75]

\draw   (100,167) .. controls (100,141.04) and (121.04,120) .. (147,120) .. controls (172.96,120) and (194,141.04) .. (194,167) .. controls (194,192.96) and (172.96,214) .. (147,214) .. controls (121.04,214) and (100,192.96) .. (100,167) -- cycle ;
\draw    (219,55) .. controls (184.52,62.88) and (133.56,91.13) .. (130.12,153.14) ;
\draw [shift={(130,156)}, rotate = 271.79] [fill={rgb, 255:red, 0; green, 0; blue, 0 }  ][line width=0.08]  [draw opacity=0] (8.93,-4.29) -- (0,0) -- (8.93,4.29) -- cycle    ;
\draw    (219,55) .. controls (195.48,69.7) and (154.67,116.09) .. (156.81,162.18) ;
\draw [shift={(157,165)}, rotate = 265.14] [fill={rgb, 255:red, 0; green, 0; blue, 0 }  ][line width=0.08]  [draw opacity=0] (8.93,-4.29) -- (0,0) -- (8.93,4.29) -- cycle    ;
\draw  [fill={rgb, 255:red, 250; green, 242; blue, 242 }  ,fill opacity=1 ] (213,55) .. controls (213,51.69) and (215.69,49) .. (219,49) .. controls (222.31,49) and (225,51.69) .. (225,55) .. controls (225,58.31) and (222.31,61) .. (219,61) .. controls (215.69,61) and (213,58.31) .. (213,55) -- cycle ;
\draw  [fill={rgb, 255:red, 250; green, 242; blue, 242 }  ,fill opacity=1 ] (153.81,168.19) .. controls (153.81,166.43) and (155.24,165) .. (157,165) .. controls (158.76,165) and (160.19,166.43) .. (160.19,168.19) .. controls (160.19,169.96) and (158.76,171.39) .. (157,171.39) .. controls (155.24,171.39) and (153.81,169.96) .. (153.81,168.19) -- cycle ;
\draw  [fill={rgb, 255:red, 7; green, 0; blue, 0 }  ,fill opacity=1 ] (126.91,159.09) .. controls (126.91,157.38) and (128.29,156) .. (130,156) .. controls (131.71,156) and (133.09,157.38) .. (133.09,159.09) .. controls (133.09,160.8) and (131.71,162.19) .. (130,162.19) .. controls (128.29,162.19) and (126.91,160.8) .. (126.91,159.09) -- cycle ;
\draw    (157,80) -- (163,95) ;
\draw    (152,89) -- (167,85) ;

\draw (115.2,160.6) node [anchor=north west][inner sep=0.75pt]   [align=left] {$\displaystyle  \begin{array}{{>{\displaystyle}l}}
v_{1}\\
\end{array}$};
\draw (135,170) node [anchor=north west][inner sep=0.75pt]   [align=left] {$\displaystyle  \begin{array}{{>{\displaystyle}l}}
v_{2}\\
\end{array}$};
\draw (225,38) node [anchor=north west][inner sep=0.75pt]   [align=left] {$\displaystyle  \begin{array}{{>{\displaystyle}l}}
u\\
\end{array}$};
\draw (169,237) node [anchor=north west][inner sep=0.75pt]   [align=left] {$\displaystyle ( a)$\\};

\end{tikzpicture}

\end{center}

We now examine some properties of this scenario that allow us to simplify the link-strategies to be considered in the forthcoming sections.  

\begin{lemma}
\label{lemm:delete}
Let $v \in \mathcal{U}(G(S,i))$. Then, for any $z \in V(G)$, $CC_z(v,S,i) = CC_z(v,S',i)$ implying  $\Delta(v,S,i) = \Delta(v,S',i)$, too. 
\end{lemma}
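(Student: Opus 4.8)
The plan is to exploit that $S$ and $S'$ differ in a single link. At the level of the base graph, $G(S',i)$ is obtained from $G(S,i)$ by deleting exactly one edge, the link $e$ joining $u$ to a fixed vertex of the meta-node $v_1$; every other edge, and in particular every edge not incident to $u$, is common to both. Since $v_1\in\mathcal{I}[G'(\emptyset,i)]$ is immunised, $e$ has an immunised endpoint, hence it lies on no path of vulnerable vertices, so deleting $e$ does not alter the decomposition of the vulnerable vertices into maximal vulnerable components. Consequently, for every $v\in\mathcal{U}(G(S,i))$ the set $W_v$ of vertices destroyed by an attack on $v$ is the same under $(S,i)$ and $(S',i)$, and the two post-attack graphs $G(S,i)-W_v$ and $G(S',i)-W_v$ still differ in at most the single edge $e$ --- and in $e$ only if both of its endpoints survive the attack.

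I would then reduce the equality of all components to one connectivity check. If $u\in W_v$, which can only occur when $u$ is vulnerable, then $e$ is already absent from $G(S,i)-W_v$, so the two post-attack graphs coincide and there is nothing to prove. Otherwise $u$ survives; $v_1$ survives as well, being immunised; and $G(S,i)-W_v$ is $G(S',i)-W_v$ together with the one extra edge $e=uv_1$. Adding a single edge can only merge two components into one, so $CC_z(v,S,i)=CC_z(v,S',i)$ for every $z\in V(G)$ precisely when $u$ and $v_1$ already lie in a common component of $G(S',i)-W_v$; the identity $\Delta(v,S,i)=\Delta(v,S',i)$ then follows at once from $\Delta(v,\cdot,i)=\sum_{t\in V(G)}|CC_t(v,\cdot,i)|^2$.

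It remains to verify that $u$ and $v_1$ stay connected in $G(S',i)-W_v$, and here I would use that $v_1$ and $v_2$ lie in a common $2_{\mathcal{I}}$-vertex-connected component $B$ of $G'(\emptyset,i)$. When $u\notin W_v$, the set $W_v$ is a single vulnerable meta-node of $G(\emptyset,i)$, so by definition of $B$ the subgraph induced by $B$ minus $W_v$ is still connected and contains $v_1$; since $G(\emptyset,i)\subseteq G(S',i)$ this yields a $v_1$–$v_2$ path in $G(S',i)-W_v$ as long as $v_2$ also survives, and as $u$ keeps its link to $v_2$ in $S'$ we conclude $u\sim v_1$. The one delicate point --- and the main obstacle --- is the subcase $v_2\in W_v$, i.e.\ the attack destroys the meta-node $v_2$ itself: when $u$ is vulnerable this cannot happen while $u\notin W_v$, since $u$'s link to $v_2$ would drag $u$ into $v_2$'s vulnerable meta-node, so the whole argument is really driven by a careful analysis of how $u$'s two links to $v_1$ and $v_2$ merge meta-nodes, and it is in pinning down this case distinction that the work lies.
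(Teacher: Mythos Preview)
Your single-edge framing is sound, and the connectivity argument via the $2_{\mathcal I}$-block $B$ is correct whenever $v_2$ survives the attack; you also correctly rule out $v_2\in W_v$ when $u$ is vulnerable. The case you leave open, namely $i=1$ with $v_2$ vulnerable and $v\in v_2$, is not merely delicate: the lemma as stated is actually false there. Take $G'(\emptyset,1)$ to be a path $u,p,q,v_2$ glued at $v_2$ to a $4$-cycle on $v_2,v_1,d,c$, with $u,q,v_1,c$ immunised and $p,v_2,d$ vulnerable. Then $\{v_1,v_2,c,d\}$ is a maximal $2_{\mathcal I}$-component, yet for $S=\{v_1,v_2\}$, $S'=\{v_2\}$ and an attack on $v_2$ one gets $CC_u(v_2,S,1)=\{u,p,q,v_1,c,d\}$ while $CC_u(v_2,S',1)=\{u,p,q\}$.

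The paper's proof shares this blind spot: its assertion ``it can only happen that $CC_1=CC_2$'' silently presumes $v_2\notin CC_v$, which fails exactly in the situation above. What rescues the \emph{application} is that in Corollary~\ref{corol:simplification2} the delete-simplification is invoked only after the swap-simplification has already forced every endpoint of $S$ to be immunised, so in practice $v_2\in\mathcal I[G'(\emptyset,i)]$ and hence $v_2\notin W_v$ for every vulnerable $v$. Add that hypothesis to scenario~(a) and your argument is complete---and arguably cleaner than the paper's, since you isolate the single relevant edge rather than stripping both links and rebuilding from $G(S_0,i)$.
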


\begin{proof} Let us define $S_0 = S \setminus \left( \left\{ v_1 \right\} \cup \left\{v_2\right\} \right)$ so that $S = S_0 \cup \left\{ v_1 \right\} \cup \left\{v_2 \right\}$ and $S' = S_0 \cup \left\{v_2\right\}$. 
Consider the graph $G(S_0,i)$. Let  $CC_v = W(v,S_0,i)$ and let $CC$ be $\left\{CC_v \right\}$ together with the collection of connected components obtained from $G(S_0,i)$ after disconnecting $W(v,S_0,i)$. Then, let $CC_u,CC_{1},CC_2$  the connected components from $CC$ in which $u,v_1,v_2$ belong, respectively. Since $v_1,v_2$ belong to the same maximal $2_{\mathcal{I}}-$vertex-connected component from $G'(\emptyset,i)$ it can only happen that $CC_1=CC_2$. Moreover, $v_1\in \mathcal{I}(G(S,i))$  implies that $CC_1 \neq CC_v$. Then we distinguish only two possible cases:

\begin{center}

\tikzset{every picture/.style={line width=0.75pt}} 

\begin{tikzpicture}[x=0.75pt,y=0.75pt,yscale=-0.75,xscale=0.75]

\draw    (180,49) .. controls (147.49,61.81) and (100.44,85.28) .. (97.12,147.15) ;
\draw [shift={(97,150)}, rotate = 271.79] [fill={rgb, 255:red, 0; green, 0; blue, 0 }  ][line width=0.08]  [draw opacity=0] (8.93,-4.29) -- (0,0) -- (8.93,4.29) -- cycle ;
\draw    (180,49) .. controls (161.38,67.62) and (127.39,108.33) .. (124.16,156.07) ;
\draw [shift={(124,159)}, rotate = 278.65] [fill={rgb, 255:red, 0; green, 0; blue, 0 }  ][line width=0.08]  [draw opacity=0] (8.93,-4.29) -- (0,0) -- (8.93,4.29) -- cycle   ;
\draw  [fill={rgb, 255:red, 250; green, 242; blue, 242 }  ,fill opacity=1 ] (180,49) .. controls (180,45.69) and (182.69,43) .. (186,43) .. controls (189.31,43) and (192,45.69) .. (192,49) .. controls (192,52.31) and (189.31,55) .. (186,55) .. controls (182.69,55) and (180,52.31) .. (180,49) -- cycle ;
\draw  [fill={rgb, 255:red, 250; green, 242; blue, 242 }  ,fill opacity=1 ] (120.81,162.19) .. controls (120.81,160.43) and (122.24,159) .. (124,159) .. controls (125.76,159) and (127.19,160.43) .. (127.19,162.19) .. controls (127.19,163.96) and (125.76,165.39) .. (124,165.39) .. controls (122.24,165.39) and (120.81,163.96) .. (120.81,162.19) -- cycle ;
\draw  [fill={rgb, 255:red, 7; green, 0; blue, 0 }  ,fill opacity=1 ] (93.91,153.09) .. controls (93.91,151.38) and (95.29,150) .. (97,150) .. controls (98.71,150) and (100.09,151.38) .. (100.09,153.09) .. controls (100.09,154.8) and (98.71,156.19) .. (97,156.19) .. controls (95.29,156.19) and (93.91,154.8) .. (93.91,153.09) -- cycle ;
\draw    (124,74) -- (130,89) ;
\draw    (119,83) -- (134,79) ;
\draw   (115,66) .. controls (125,61) and (164,62) .. (152,92) .. controls (140,122) and (154,208) .. (141,211) .. controls (128,214) and (97,216) .. (62,207) .. controls (27,198) and (52,156.25) .. (68,127) .. controls (84,97.75) and (105,71) .. (115,66) -- cycle ;
\draw   (232,75) .. controls (242,70) and (276,62) .. (292,105) .. controls (308,148) and (332,162) .. (331,192) .. controls (330,222) and (316,209) .. (273,213) .. controls (230,217) and (247,195) .. (238,157) .. controls (229,119) and (222,80) .. (232,75) -- cycle ;
\draw   (171,72) .. controls (184,51) and (223,79) .. (223,119) .. controls (223,159) and (238,187) .. (237,204) .. controls (236,221) and (173,214) .. (159,213) .. controls (145,212) and (154.75,191.13) .. (156,154) .. controls (157.25,116.88) and (158,93) .. (171,72) -- cycle ;
\draw    (186,55) -- (189,66) ;
\draw    (192,49) -- (232,75) ;
\draw    (180,49) -- (147,69) ;
\draw    (426,161) .. controls (435.85,137.36) and (468.02,109.84) .. (503.38,148.2) ;
\draw [shift={(505,150)}, rotate = 228.72] [fill={rgb, 255:red, 0; green, 0; blue, 0 }  ][line width=0.08]  [draw opacity=0] (8.93,-4.29) -- (0,0) -- (8.93,4.29) -- cycle    ;
\draw    (426,161) .. controls (433.88,190.55) and (483.48,203.61) .. (526.83,164.04) ;
\draw [shift={(528.81,162.19)}, rotate = 143.29] [fill={rgb, 255:red, 0; green, 0; blue, 0 }  ][line width=0.08]  [draw opacity=0] (8.93,-4.29) -- (0,0) -- (8.93,4.29) -- cycle    ;
\draw  [fill={rgb, 255:red, 250; green, 242; blue, 242 }  ,fill opacity=1 ] (508,47) .. controls (508,43.69) and (510.69,41) .. (514,41) .. controls (517.31,41) and (520,43.69) .. (520,47) .. controls (520,50.31) and (517.31,53) .. (514,53) .. controls (510.69,53) and (508,50.31) .. (508,47) -- cycle ;
\draw  [fill={rgb, 255:red, 250; green, 242; blue, 242 }  ,fill opacity=1 ] (525.61,159) .. controls (525.61,157.24) and (527.04,155.81) .. (528.81,155.81) .. controls (530.57,155.81) and (532,157.24) .. (532,159) .. controls (532,160.76) and (530.57,162.19) .. (528.81,162.19) .. controls (527.04,162.19) and (525.61,160.76) .. (525.61,159) -- cycle ;
\draw  [fill={rgb, 255:red, 7; green, 0; blue, 0 }  ,fill opacity=1 ] (501.91,153.09) .. controls (501.91,151.38) and (503.29,150) .. (505,150) .. controls (506.71,150) and (508.09,151.38) .. (508.09,153.09) .. controls (508.09,154.8) and (506.71,156.19) .. (505,156.19) .. controls (503.29,156.19) and (501.91,154.8) .. (501.91,153.09) -- cycle ;
\draw    (451,124) -- (457,139) ;
\draw    (446,131) -- (461,132) ;
\draw   (443,64) .. controls (453,59) and (492,60) .. (480,90) .. controls (468,120) and (482,206) .. (469,209) .. controls (456,212) and (425,214) .. (390,205) .. controls (355,196) and (380,154.25) .. (396,125) .. controls (412,95.75) and (433,69) .. (443,64) -- cycle ;
\draw   (560,73) .. controls (570,68) and (604,60) .. (620,103) .. controls (636,146) and (660,160) .. (659,190) .. controls (658,220) and (644,207) .. (601,211) .. controls (558,215) and (575,193) .. (566,155) .. controls (557,117) and (550,78) .. (560,73) -- cycle ;
\draw   (499,70) .. controls (512,49) and (551,77) .. (551,117) .. controls (551,157) and (566,185) .. (565,202) .. controls (564,219) and (501,212) .. (487,211) .. controls (473,210) and (482.75,189.13) .. (484,152) .. controls (485.25,114.88) and (486,91) .. (499,70) -- cycle ;
\draw    (514,53) -- (517,64) ;
\draw    (520,47) -- (560,73) ;
\draw    (508,47) -- (475,67) ;
\draw  [fill={rgb, 255:red, 250; green, 242; blue, 242 }  ,fill opacity=1 ] (420,161) .. controls (420,157.69) and (422.69,155) .. (426,155) .. controls (429.31,155) and (432,157.69) .. (432,161) .. controls (432,164.31) and (429.31,167) .. (426,167) .. controls (422.69,167) and (420,164.31) .. (420,161) -- cycle ;

\draw (89.2,164.6) node [anchor=north west][inner sep=0.75pt]   [align=left] {$v_1$};
\draw (116,175) node [anchor=north west][inner sep=0.75pt]   [align=left] {$v_2$};
\draw (192,30) node [anchor=north west][inner sep=0.75pt]   [align=left] {$u$};
\draw (52,223) node [anchor=north west][inner sep=0.75pt]   [align=left] {$CC_1=CC_2$};
\draw (91,14) node [anchor=north west][inner sep=0.75pt]   [align=left] {$CC_{u} =CC_{v}$};
\draw (157,265) node [anchor=north west][inner sep=0.75pt]   [align=left] {$\displaystyle  \begin{array}{{>{\displaystyle}l}}
Case\ 1\\
\end{array}$};
\draw (497.2,160.6) node [anchor=north west][inner sep=0.75pt]   [align=left] {$v_1$};
\draw (524,170) node [anchor=north west][inner sep=0.75pt]   [align=left] {$v_2$};
\draw (410,170) node [anchor=north west][inner sep=0.75pt]   [align=left] {$u$};
\draw (482,221) node [anchor=north west][inner sep=0.75pt]   [align=left] {$CC_1=CC_2$};
\draw (504,20) node [anchor=north west][inner sep=0.75pt]   [align=left] {$CC_v$};
\draw (485,263) node [anchor=north west][inner sep=0.75pt]   [align=left] {$\displaystyle  \begin{array}{{>{\displaystyle}l}}
Case\ 2\\
\end{array}$};
\draw (396,225) node [anchor=north west][inner sep=0.75pt]   [align=left] {$CC_u$};

\end{tikzpicture}

\end{center}

(1) $CC_u = CC_v$. Then it is clear from the figure that for any $z \in V(G)$, $CC_z(v,S,i) = CC_z(v,S',i)$ because  $v_1$ is immunised by hypothesis, implying $\Delta(v,S',i) =\Delta(v,S,i)$. 

(2) $CC_u \neq CC_v$. Then, again, it is clear from the figure that for any $z \in V(G)$, $CC_z(v,S,i) = CC_z(v,S',i)$, implying $\Delta(v,S',i) =\Delta(v,S,i)$, too. 
\end{proof}

\textbf{The swap-simplification.}  We now examine some other scenarios  in which we cap apply a swap movement. Let $(S,i)$ be a strategy for player $u$. Then we consider:

Scenario (b.i): Suppose that $v_1,v_2 \in \mathcal{I}[G'(\emptyset,i)]$. Suppose that $v_1 \in S$, $v_2 \not \in S$ and consider the strategy $S' = S \cup \left\{v_2\right\} \setminus \left\{v_1\right\}$. Furthermore, suppose that $v_1,v_2$ are meta-nodes belonging to the same maximal  $2_{\mathcal{I}}$-vertex-connected component from $G'(\emptyset,i)$.

Scenario (b.ii): Suppose that $v_1 \in \mathcal{U}_1[G'(\emptyset,i)]$ and $v_2 \in \mathcal{I}[G'(\emptyset,i)]$. Suppose that $v_1 \in S$, $v_2 \not \in S$ and consider the link-strategy $S' = S \cup \left\{v_2\right\} \setminus \left\{v_1\right\}$.   Furthermore, suppose that $v_1,v_2$ are meta-nodes belonging to the same maximal $2_{\mathcal{I}}$-vertex-connected component from $G'(\emptyset,i)$.

Scenario (b.iii): Suppose  that $v_1 \in \mathcal{U}_{\geq 2}[G'(\emptyset,i)]$, $v_2 \in \mathcal{I}[G'(\emptyset,i)]$ and $v_2$ is any neighbour of $v_1$ in $G'(\emptyset,i)$ contained in any simple and connected path (in $G'(\emptyset,i)$ which we assume to be connected) from $v_1$ to $u$.  Consider the strategy $S' = S \cup \left\{v_2\right\} \setminus \left\{v_1\right\}$.   Furthermore, suppose that $v_1,v_2$ are meta-nodes belonging to the same maximal $2_{\mathcal{I}}$-vertex-connected component from $G'(\emptyset,i)$.

\begin{center}

\tikzset{every picture/.style={line width=0.75pt}} 

\begin{tikzpicture}[x=0.75pt,y=0.75pt,yscale=-0.75,xscale=0.75]

\draw   (99,158) .. controls (99,132.04) and (120.04,111) .. (146,111) .. controls (171.96,111) and (193,132.04) .. (193,158) .. controls (193,183.96) and (171.96,205) .. (146,205) .. controls (120.04,205) and (99,183.96) .. (99,158) -- cycle ;
\draw    (218,46) .. controls (183.52,53.88) and (132.56,82.13) .. (129.12,144.14) ;
\draw [shift={(129,147)}, rotate = 271.79] [fill={rgb, 255:red, 0; green, 0; blue, 0 }  ][line width=0.08]  [draw opacity=0] (8.93,-4.29) -- (0,0) -- (8.93,4.29) -- cycle    ;
\draw  [dash pattern={on 4.5pt off 4.5pt}]  (218,46) .. controls (194.48,60.7) and (153.67,107.09) .. (155.81,153.18) ;
\draw [shift={(156,156)}, rotate = 273.6] [fill={rgb, 255:red, 0; green, 0; blue, 0 }  ][line width=0.08]  [draw opacity=0] (8.93,-4.29) -- (0,0) -- (8.93,4.29) -- cycle    ;
\draw  [fill={rgb, 255:red, 250; green, 242; blue, 242 }  ,fill opacity=1 ] (212,46) .. controls (212,42.69) and (214.69,40) .. (218,40) .. controls (221.31,40) and (224,42.69) .. (224,46) .. controls (224,49.31) and (221.31,52) .. (218,52) .. controls (214.69,52) and (212,49.31) .. (212,46) -- cycle ;
\draw  [fill={rgb, 255:red, 7; green, 0; blue, 0 }  ,fill opacity=1 ] (152.81,159.19) .. controls (152.81,157.43) and (154.24,156) .. (156,156) .. controls (157.76,156) and (159.19,157.43) .. (159.19,159.19) .. controls (159.19,160.96) and (157.76,162.39) .. (156,162.39) .. controls (154.24,162.39) and (152.81,160.96) .. (152.81,159.19) -- cycle ;
\draw  [fill={rgb, 255:red, 7; green, 0; blue, 0 }  ,fill opacity=1 ] (125.91,150.09) .. controls (125.91,148.38) and (127.29,147) .. (129,147) .. controls (130.71,147) and (132.09,148.38) .. (132.09,150.09) .. controls (132.09,151.8) and (130.71,153.19) .. (129,153.19) .. controls (127.29,153.19) and (125.91,151.8) .. (125.91,150.09) -- cycle ;
\draw    (156,71) -- (162,86) ;
\draw    (151,80) -- (166,76) ;
\draw   (271,158) .. controls (271,132.04) and (292.04,111) .. (318,111) .. controls (343.96,111) and (365,132.04) .. (365,158) .. controls (365,183.96) and (343.96,205) .. (318,205) .. controls (292.04,205) and (271,183.96) .. (271,158) -- cycle ;
\draw    (390,46) .. controls (355.53,53.88) and (304.56,82.13) .. (301.12,144.14) ;
\draw [shift={(301,147)}, rotate = 271.79] [fill={rgb, 255:red, 0; green, 0; blue, 0 }  ][line width=0.08]  [draw opacity=0] (8.93,-4.29) -- (0,0) -- (8.93,4.29) -- cycle    ;
\draw  [dash pattern={on 4.5pt off 4.5pt}]  (390,46) .. controls (366.48,60.7) and (325.67,107.09) .. (327.81,153.18) ;
\draw [shift={(328,156)}, rotate = 265.14] [fill={rgb, 255:red, 0; green, 0; blue, 0 }  ][line width=0.08]  [draw opacity=0] (8.93,-4.29) -- (0,0) -- (8.93,4.29) -- cycle    ;
\draw  [fill={rgb, 255:red, 250; green, 242; blue, 242 }  ,fill opacity=1 ] (384,46) .. controls (384,42.69) and (386.69,40) .. (390,40) .. controls (393.31,40) and (396,42.69) .. (396,46) .. controls (396,49.31) and (393.31,52) .. (390,52) .. controls (386.69,52) and (384,49.31) .. (384,46) -- cycle ;
\draw  [fill={rgb, 255:red, 7; green, 0; blue, 0 }  ,fill opacity=1 ] (324.81,159.19) .. controls (324.81,157.43) and (326.24,156) .. (328,156) .. controls (329.76,156) and (331.19,157.43) .. (331.19,159.19) .. controls (331.19,160.96) and (329.76,162.39) .. (328,162.39) .. controls (326.24,162.39) and (324.81,160.96) .. (324.81,159.19) -- cycle ;
\draw  [fill={rgb, 255:red, 255; green, 255; blue, 255 }  ,fill opacity=1 ] (297.91,150.09) .. controls (297.91,148.38) and (299.29,147) .. (301,147) .. controls (302.71,147) and (304.09,148.38) .. (304.09,150.09) .. controls (304.09,151.8) and (302.71,153.19) .. (301,153.19) .. controls (299.29,153.19) and (297.91,151.8) .. (297.91,150.09) -- cycle ;
\draw    (328,71) -- (334,86) ;
\draw    (323,80) -- (338,76) ;
\draw   (458,160) .. controls (458,134.04) and (479.04,113) .. (505,113) .. controls (530.96,113) and (552,134.04) .. (552,160) .. controls (552,185.96) and (530.96,207) .. (505,207) .. controls (479.04,207) and (458,185.96) .. (458,160) -- cycle ;
\draw    (577,48) .. controls (495.24,57.85) and (474.61,162.78) .. (498.86,211.8) ;
\draw [shift={(500,214)}, rotate = 241.56] [fill={rgb, 255:red, 0; green, 0; blue, 0 }  ][line width=0.08]  [draw opacity=0] (8.93,-4.29) -- (0,0) -- (8.93,4.29) -- cycle    ;
\draw   [dash pattern={on 4.5pt off 4.5pt}] (577,48) .. controls (553.48,62.7) and (512.67,109.09) .. (514.81,155.18) ;
\draw [shift={(515,158)}, rotate = 265.14] [fill={rgb, 255:red, 0; green, 0; blue, 0 }  ][line width=0.08]  [draw opacity=0] (8.93,-4.29) -- (0,0) -- (8.93,4.29) -- cycle    ;
\draw  [fill={rgb, 255:red, 250; green, 242; blue, 242 }  ,fill opacity=1 ] (571,48) .. controls (571,44.69) and (573.69,42) .. (577,42) .. controls (580.31,42) and (583,44.69) .. (583,48) .. controls (583,51.31) and (580.31,54) .. (577,54) .. controls (573.69,54) and (571,51.31) .. (571,48) -- cycle ;
\draw  [fill={rgb, 255:red, 7; green, 0; blue, 0 } ,fill opacity=1 ] (511.81,161.19) .. controls (511.81,159.43) and (513.24,158) .. (515,158) .. controls (516.76,158) and (518.19,159.43) .. (518.19,161.19) .. controls (518.19,162.96) and (516.76,164.39) .. (515,164.39) .. controls (513.24,164.39) and (511.81,162.96) .. (511.81,161.19) -- cycle ;
\draw    (515,73) -- (521,88) ;
\draw    (510,82) -- (525,78) ;
\draw    (500,217.09) -- (483,199) ;
\draw    (501,197) -- (500,217.09) ;
\draw    (519,200) -- (500,217.09) ;
\draw  [fill={rgb, 255:red, 250; green, 242; blue, 242 }  ,fill opacity=1 ] (496.91,217.09) .. controls (496.91,215.38) and (498.29,214) .. (500,214) .. controls (501.71,214) and (503.09,215.38) .. (503.09,217.09) .. controls (503.09,218.8) and (501.71,220.19) .. (500,220.19) .. controls (498.29,220.19) and (496.91,218.8) .. (496.91,217.09) -- cycle ;

\draw (121.2,151.6) node [anchor=north west][inner sep=0.75pt]   [align=left] {$\displaystyle  \begin{array}{{>{\displaystyle}l}}
v_{1}\\
\end{array}$};
\draw (148,161) node [anchor=north west][inner sep=0.75pt]   [align=left] {$\displaystyle  \begin{array}{{>{\displaystyle}l}}
v_{2}\\
\end{array}$};
\draw (224,29) node [anchor=north west][inner sep=0.75pt]   [align=left] {$\displaystyle  \begin{array}{{>{\displaystyle}l}}
u\\
\end{array}$};
\draw (136,239) node [anchor=north west][inner sep=0.75pt]   [align=left] {$\displaystyle ( b.i)$\\};
\draw (293.2,151.6) node [anchor=north west][inner sep=0.75pt]   [align=left] {$\displaystyle  \begin{array}{{>{\displaystyle}l}}
v_{1}\\
\end{array}$};
\draw (320,161) node [anchor=north west][inner sep=0.75pt]   [align=left] {$\displaystyle  \begin{array}{{>{\displaystyle}l}}
v_{2}\\
\end{array}$};
\draw (396,29) node [anchor=north west][inner sep=0.75pt]   [align=left] {$\displaystyle  \begin{array}{{>{\displaystyle}l}}
u\\
\end{array}$};
\draw (308,240) node [anchor=north west][inner sep=0.75pt]   [align=left] {$\displaystyle ( b.ii)$\\};
\draw (486.2,215.6) node [anchor=north west][inner sep=0.75pt]   [align=left] {$\displaystyle  \begin{array}{{>{\displaystyle}l}}
v_{1}\\
\end{array}$};
\draw (507,163) node [anchor=north west][inner sep=0.75pt]   [align=left] {$\displaystyle  \begin{array}{{>{\displaystyle}l}}
v_{2}\\
\end{array}$};
\draw (583,31) node [anchor=north west][inner sep=0.75pt]   [align=left] {$\displaystyle  \begin{array}{{>{\displaystyle}l}}
u\\
\end{array}$};
\draw (481,241) node [anchor=north west][inner sep=0.75pt]   [align=left] {$\displaystyle ( b.iii)$\\};

\end{tikzpicture}

\end{center}

\begin{lemma}
\label{lemm:swap}
Let $v \in \mathcal{U}(G(S,i))$. Then, for any $z \in V(G)$, $CC_z(v,S,i) = CC_z(v,S',i)$ if $v \not \in W(u,S,i)$ and $CC_z(v,S,i) \subseteq CC_z(v,S',i)$ if $v \in W(u,S,i)$. This implies that  if $v \not \in W(u,S,i)$, then $\Delta(v,S,i) = \Delta(v,S',i)$. Otherwise, $\Delta(v,S,i) \leq \Delta(v,S',i)$.
\end{lemma}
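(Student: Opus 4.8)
The plan is to follow the template of Lemma~\ref{lemm:delete}. I would fix the immunisation value $i$ and put $S_0 = S\setminus\{v_1\}$; since $v_2\notin S$ this gives $S = S_0\cup\{v_1\}$ and $S' = S_0\cup\{v_2\}$, so writing $G_0 = G(S_0,i)$ the two networks in question are $G(S,i) = G_0 + e_1$ and $G(S',i) = G_0 + e_2$, where $e_1 = (u,x_1)$ with $x_1\in v_1$ and $e_2 = (u,x_2)$ with $x_2\in v_2$. Attacking $v$ deletes the maximal set of vulnerable nodes joined to $v$ by vulnerable paths; denote this set by $M_S$, $M_{S'}$, $M_0$ in $G(S,i)$, $G(S',i)$, $G_0$ respectively. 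Two easy observations open the argument: in every scenario $v_2\in\mathcal{I}[G'(\emptyset,i)]$, so $e_2$ has an immunised endpoint and cannot merge vulnerable meta-nodes, whence $M_{S'} = M_0$; and since $G(S,i)\supseteq G_0$, adding edges cannot split a vulnerable meta-node, so $M_S\supseteq M_0$. Finally, $v\in W(u,S,i)$ is equivalent to $u\in M_S$, and this is the case distinction I would use.

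Consider first $u\in M_S$, i.e.\ $v\in W(u,S,i)$. Then $u$ is destroyed by the attack, so the edge $e_1$, incident to $u$, disappears and $G(S,i)-M_S = G_0-M_S$. Since $M_{S'} = M_0\subseteq M_S$ and $G(S',i)\supseteq G_0$, the graph $G_0-M_S$ is a subgraph of $G(S',i)-M_{S'}$, so the component of any $z$ in the former is contained in its component in the latter; that is, $CC_z(v,S,i)\subseteq CC_z(v,S',i)$ for all $z$. No structural hypothesis on $v_1,v_2$ is used here.

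Consider now $u\notin M_S$ (hence also $u\notin M_0$). Since $u\notin M_0$, none of $u$'s links can touch $M_0$ — a link $(u,w)$ with $w\in M_0$ would force $u\in M_0$ if $u$ were vulnerable and is irrelevant to vulnerable meta-nodes if $u$ is immunised — so $M_0$ is already the vulnerable meta-node of $v$ in $G(\emptyset,i)$, a single vulnerable meta-node of $G'(\emptyset,i)$, and the same reasoning shows that adding $e_1$ does not enlarge it: $M_S = M_0 = M_{S'}$. Thus both $G(S,i)-M_0$ and $G(S',i)-M_0$ equal $G_0-M_0$ with one further edge incident to $u$ (namely $e_1$, resp.\ $e_2$), unless $v = v_1$, in which case $e_1$ is absent and $G(S,i)-M_0 = G_0-M_0$. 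Generically these two graphs have identical $z$-components for every $z$ exactly when $x_1$ and $x_2$ lie in the same component of $G_0-M_0$, and this is precisely where the hypothesis enters: $v_1$ and $v_2$ lie in a common maximal $2_{\mathcal{I}}$-vertex-connected component $C$ of $G'(\emptyset,i)$, so removing the vulnerable meta-node $M_0$ from $G'(\emptyset,i)$ leaves $v_1$ and $v_2$ in one component — by the defining property of $C$ if $M_0\in C$, and because $C$ is a connected induced subgraph avoiding $M_0$ if $M_0\notin C$ — and passing to the supergraph $G_0\supseteq G(\emptyset,i)$ only improves connectivity. In the degenerate sub-case $v = v_1$ one instead needs that $u$ and $x_2$ stay in the same component of $G_0-M_0$, which holds in scenario (b.ii) because $v_1\in\mathcal{U}_1[G'(\emptyset,i)]$ is not an articulation point of the connected graph $G'(\emptyset,i)$, and in scenario (b.iii) because $v_2$ was chosen on a simple path from $v_1$ to $u$. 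Either way $CC_z(v,S,i) = CC_z(v,S',i)$ for all $z$.

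The statements about $\Delta$ follow at once from the definition of $\Delta$: equality of all the $CC_z(v,\cdot,i)$ yields $\Delta(v,S,i) = \Delta(v,S',i)$, and the inclusions $CC_z(v,S,i)\subseteq CC_z(v,S',i)$ (with the vertices of $M_S\supseteq M_{S'}$ contributing nothing on the left) yield $\Delta(v,S,i)\le\Delta(v,S',i)$. The part I expect to require the most care is the case $u\notin M_S$ — first verifying that the attacked region is unchanged, then the degenerate sub-case $v = v_1$, where $e_1$ is the deleted edge and one must invoke the exact conditions imposed on $v_1$ and $v_2$ in scenarios (b.ii) and (b.iii) to place $v_2$ on $u$'s side of the attack. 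Everything else is routine and, as in Lemma~\ref{lemm:delete}, is clearest from the accompanying figures.
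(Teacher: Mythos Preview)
Your proof is correct and follows essentially the same strategy as the paper: set $S_0 = S\setminus\{v_1\}$, compare the effect of adding $e_1$ versus $e_2$ to $G_0$, and use the $2_{\mathcal{I}}$-vertex-connected hypothesis to place $v_1,v_2$ (or $u,v_2$) in the same component after the attack. Your primary case split on $u\in M_S$ versus $u\notin M_S$ is organized directly along the lemma's conclusion rather than the paper's split on $CC_1=CC_2$ versus $CC_1\neq CC_2$, which makes the logic a bit cleaner, but the underlying analysis is the same.
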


\begin{proof} Let $S_0 = S \setminus \left\{ v_1\right\}$ so that $S = S_0 \cup \left\{ v_1 \right\}$ and $S' = S_0 \cup \left\{ v_2 \right\}$. 
Consider the graph $G(S_0,i)$. Let  $CC_v = W(v,S_0,i)$ and let $CC$ be $\left\{CC_v \right\}$ together with the collection of connected components obtained from $G(S_0,i)$ after disconnecting $W(v,S_0,i)$. Then, let $CC_u,CC_{1},CC_2$  the connected components from $CC$ in which $u,v_1,v_2$ belong, respectively. First of all, since $v_2\in \mathcal{I}(G(S,i))$ then $CC_2 \neq CC_v$. Then it can only happen the following:

\begin{center}

\tikzset{every picture/.style={line width=0.75pt}} 

\begin{tikzpicture}[x=0.75pt,y=0.75pt,yscale=-0.75,xscale=0.75]

\draw    (101.46,49) .. controls (78.53,61.8) and (45.34,85.28) .. (43,147.15) ;
\draw [shift={(42.91,150)}, rotate = 271.26] [fill={rgb, 255:red, 0; green, 0; blue, 0 }  ][line width=0.08]  [draw opacity=0] (8.93,-4.29) -- (0,0) -- (8.93,4.29) -- cycle    ;
\draw  [dash pattern={on 4.5pt off 4.5pt}]  (101.46,49) .. controls (88.33,67.62) and (64.35,108.33) .. (62.07,156.07) ;
\draw [shift={(61.96,159)}, rotate = 276.63] [fill={rgb, 255:red, 0; green, 0; blue, 0 }  ][line width=0.08]  [draw opacity=0] (8.93,-4.29) -- (0,0) -- (8.93,4.29) -- cycle    ;
\draw  [fill={rgb, 255:red, 250; green, 242; blue, 242 }  ,fill opacity=1 ] (100.43,49.42) .. controls (100.43,46.33) and (102.79,43.83) .. (105.69,43.83) .. controls (108.6,43.83) and (110.96,46.33) .. (110.96,49.42) .. controls (110.96,52.5) and (108.6,55) .. (105.69,55) .. controls (102.79,55) and (100.43,52.5) .. (100.43,49.42) -- cycle ;
\draw  [fill={rgb, 255:red, 253; green, 249; blue, 249 }  ,fill opacity=1 ] (39.37,153.85) .. controls (39.37,151.72) and (40.95,150) .. (42.91,150) .. controls (44.87,150) and (46.46,151.72) .. (46.46,153.85) .. controls (46.46,155.98) and (44.87,157.7) .. (42.91,157.7) .. controls (40.95,157.7) and (39.37,155.98) .. (39.37,153.85) -- cycle ;
\draw    (61.96,74) -- (66.19,89) ;
\draw    (58.43,83) -- (69.01,79) ;
\draw   (55.61,66) .. controls (62.67,61) and (90.18,62) .. (81.71,92) .. controls (73.25,122) and (83.12,208) .. (73.95,211) .. controls (64.78,214) and (42.91,216) .. (18.22,207) .. controls (-6.46,198) and (11.17,156.25) .. (22.46,127) .. controls (33.74,97.75) and (48.56,71) .. (55.61,66) -- cycle ;
\draw   (138.14,75) .. controls (145.2,70) and (169.18,62) .. (180.47,105) .. controls (191.75,148) and (208.68,162) .. (207.98,192) .. controls (207.27,222) and (197.4,209) .. (167.07,213) .. controls (136.73,217) and (148.72,195) .. (142.38,157) .. controls (136.03,119) and (131.09,80) .. (138.14,75) -- cycle ;
\draw   (95.11,72) .. controls (104.28,51) and (131.79,79) .. (131.79,119) .. controls (131.79,159) and (142.38,187) .. (141.67,204) .. controls (140.97,221) and (96.52,214) .. (86.65,213) .. controls (76.77,212) and (83.65,191.12) .. (84.53,154) .. controls (85.41,116.87) and (85.94,93) .. (95.11,72) -- cycle ;
\draw    (105.69,55) -- (107.81,66) ;
\draw    (109.93,49) -- (138.14,75) ;
\draw    (101.46,49) -- (78.18,69) ;
\draw    (277.14,170) .. controls (283.86,146.48) and (305.74,119.12) .. (329.85,156.62) ;
\draw [shift={(331.33,159)}, rotate = 238.94] [fill={rgb, 255:red, 0; green, 0; blue, 0 }  ][line width=0.08]  [draw opacity=0] (8.93,-4.29) -- (0,0) -- (8.93,4.29) -- cycle    ;
\draw  [dash pattern={on 4.5pt off 4.5pt}]  (277.14,170) .. controls (282.52,199.4) and (318.08,212.84) .. (349.31,177.28) ;
\draw [shift={(351.21,175.04)}, rotate = 136.81] [fill={rgb, 255:red, 0; green, 0; blue, 0 }  ][line width=0.08]  [draw opacity=0] (8.93,-4.29) -- (0,0) -- (8.93,4.29) -- cycle    ;
\draw  [fill={rgb, 255:red, 250; green, 242; blue, 242 }  ,fill opacity=1 ] (331.01,56) .. controls (331.01,53.06) and (333.39,50.68) .. (336.32,50.68) .. controls (339.25,50.68) and (341.62,53.06) .. (341.62,56) .. controls (341.62,58.94) and (339.25,61.32) .. (336.32,61.32) .. controls (333.39,61.32) and (331.01,58.94) .. (331.01,56) -- cycle ;
\draw    (294.29,133) -- (298.4,148) ;
\draw    (290.86,140) -- (301.15,141) ;
\draw   (288.8,73) .. controls (295.66,68) and (322.41,69) .. (314.18,99) .. controls (305.95,129) and (315.55,215) .. (306.64,218) .. controls (297.72,221) and (276.45,223) .. (252.44,214) .. controls (228.43,205) and (245.58,163.25) .. (256.56,134) .. controls (267.53,104.75) and (281.94,78) .. (288.8,73) -- cycle ;
\draw   (369.06,82) .. controls (375.92,77) and (399.25,69) .. (410.22,112) .. controls (421.2,155) and (437.67,169) .. (436.98,199) .. controls (436.29,229) and (426.69,216) .. (397.19,220) .. controls (367.69,224) and (379.35,202) .. (373.18,164) .. controls (367.01,126) and (362.2,87) .. (369.06,82) -- cycle ;
\draw   (327.22,79) .. controls (336.14,58) and (362.89,86) .. (362.89,126) .. controls (362.89,166) and (373.18,194) .. (372.49,211) .. controls (371.81,228) and (328.59,221) .. (318.98,220) .. controls (309.38,219) and (316.07,198.13) .. (316.93,161) .. controls (317.78,123.87) and (318.3,100) .. (327.22,79) -- cycle ;
\draw    (336.32,61.32) -- (339.57,73) ;
\draw    (341.62,56) -- (369.06,82) ;
\draw    (331.01,56) -- (310.75,76) ;
\draw  [fill={rgb, 255:red, 250; green, 242; blue, 242 }  ,fill opacity=1 ] (273.02,169.77) .. controls (273.02,166.58) and (275.48,164) .. (278.51,164) .. controls (281.54,164) and (284,166.58) .. (284,169.77) .. controls (284,172.95) and (281.54,175.53) .. (278.51,175.53) .. controls (275.48,175.53) and (273.02,172.95) .. (273.02,169.77) -- cycle ;
\draw  [fill={rgb, 255:red, 253; green, 249; blue, 249 }  ,fill opacity=1 ] (58.41,162.85) .. controls (58.41,160.72) and (60,159) .. (61.96,159) .. controls (63.92,159) and (65.51,160.72) .. (65.51,162.85) .. controls (65.51,164.98) and (63.92,166.7) .. (61.96,166.7) .. controls (60,166.7) and (58.41,164.98) .. (58.41,162.85) -- cycle ;
\draw  [fill={rgb, 255:red, 253; green, 249; blue, 249 }  ,fill opacity=1 ] (327.79,162.85) .. controls (327.79,160.72) and (329.37,159) .. (331.33,159) .. controls (333.29,159) and (334.88,160.72) .. (334.88,162.85) .. controls (334.88,164.98) and (333.29,166.7) .. (331.33,166.7) .. controls (329.37,166.7) and (327.79,164.98) .. (327.79,162.85) -- cycle ;
\draw  [fill={rgb, 255:red, 253; green, 249; blue, 249 }  ,fill opacity=1 ] (347.66,171.19) .. controls (347.66,169.07) and (349.25,167.34) .. (351.21,167.34) .. controls (353.17,167.34) and (354.76,169.07) .. (354.76,171.19) .. controls (354.76,173.32) and (353.17,175.04) .. (351.21,175.04) .. controls (349.25,175.04) and (347.66,173.32) .. (347.66,171.19) -- cycle ;
\draw    (550.88,165.85) .. controls (602.34,127.98) and (568.93,81.92) .. (558.77,61.43) ;
\draw [shift={(557.62,59)}, rotate = 66.28] [fill={rgb, 255:red, 0; green, 0; blue, 0 }  ][line width=0.08]  [draw opacity=0] (8.93,-4.29) -- (0,0) -- (8.93,4.29) -- cycle    ;
\draw  [dash pattern={on 4.5pt off 4.5pt}]  (545.39,160.08) .. controls (541.59,149.15) and (532.47,142) .. (549.83,115.15) ;
\draw [shift={(551.21,113.04)}, rotate = 113.38] [fill={rgb, 255:red, 0; green, 0; blue, 0 }  ][line width=0.08]  [draw opacity=0] (8.93,-4.29) -- (0,0) -- (8.93,4.29) -- cycle    ;
\draw  [fill={rgb, 255:red, 250; green, 242; blue, 242 }  ,fill opacity=1 ] (547.01,59) .. controls (547.01,56.06) and (549.39,53.68) .. (552.32,53.68) .. controls (555.25,53.68) and (557.62,56.06) .. (557.62,59) .. controls (557.62,61.94) and (555.25,64.32) .. (552.32,64.32) .. controls (549.39,64.32) and (547.01,61.94) .. (547.01,59) -- cycle ;
\draw    (566.29,140) -- (570.4,155) ;
\draw    (562.86,147) -- (573.15,148) ;
\draw   (504.8,76) .. controls (511.66,71) and (538.41,72) .. (530.18,102) .. controls (521.95,132) and (531.55,218) .. (522.64,221) .. controls (513.72,224) and (492.45,226) .. (468.44,217) .. controls (444.43,208) and (461.58,166.25) .. (472.56,137) .. controls (483.53,107.75) and (497.94,81) .. (504.8,76) -- cycle ;
\draw   (585.06,85) .. controls (591.92,80) and (615.25,72) .. (626.22,115) .. controls (637.2,158) and (653.67,172) .. (652.98,202) .. controls (652.29,232) and (642.69,219) .. (613.19,223) .. controls (583.69,227) and (595.35,205) .. (589.18,167) .. controls (583.01,129) and (578.2,90) .. (585.06,85) -- cycle ;
\draw   (543.22,82) .. controls (552.14,61) and (578.89,89) .. (578.89,129) .. controls (578.89,169) and (589.18,197) .. (588.49,214) .. controls (587.81,231) and (544.59,224) .. (534.98,223) .. controls (525.38,222) and (532.07,201.13) .. (532.93,164) .. controls (533.78,126.87) and (534.3,103) .. (543.22,82) -- cycle ;
\draw    (552.32,64.32) -- (555.57,76) ;
\draw    (557.62,59) -- (585.06,85) ;
\draw    (547.01,59) -- (526.75,79) ;
\draw  [fill={rgb, 255:red, 250; green, 242; blue, 242 }  ,fill opacity=1 ] (539.9,165.85) .. controls (539.9,162.67) and (542.36,160.08) .. (545.39,160.08) .. controls (548.42,160.08) and (550.88,162.67) .. (550.88,165.85) .. controls (550.88,169.03) and (548.42,171.62) .. (545.39,171.62) .. controls (542.36,171.62) and (539.9,169.03) .. (539.9,165.85) -- cycle ;
\draw  [fill={rgb, 255:red, 253; green, 249; blue, 249 }  ,fill opacity=1 ] (547.66,109.19) .. controls (547.66,107.07) and (549.25,105.34) .. (551.21,105.34) .. controls (553.17,105.34) and (554.76,107.07) .. (554.76,109.19) .. controls (554.76,111.32) and (553.17,113.04) .. (551.21,113.04) .. controls (549.25,113.04) and (547.66,111.32) .. (547.66,109.19) -- cycle ;

\draw (34.76,160.6) node [anchor=north west][inner sep=0.75pt]   [align=left] {$v_1$};
\draw (50.67,174) node [anchor=north west][inner sep=0.75pt]   [align=left] {$v_2$};
\draw (108.31,32) node [anchor=north west][inner sep=0.75pt]   [align=left] {$u$};
\draw (-1.35,223) node [anchor=north west][inner sep=0.75pt]   [align=left] {$CC_1=CC_2$};
\draw (14.01,16) node [anchor=north west][inner sep=0.75pt]   [align=left] {$CC_{u} =CC_{v}$};
\draw (70.4,265) node [anchor=north west][inner sep=0.75pt]   [align=left] {$\displaystyle  \begin{array}{{>{\displaystyle}l}}
Case\ 1.1\\
\\
\end{array}$};
\draw (323.16,167.6) node [anchor=north west][inner sep=0.75pt]   [align=left] {$v_1$};
\draw (349.66,178.19) node [anchor=north west][inner sep=0.75pt]   [align=left] {$v_2$};
\draw (264.43,178) node [anchor=north west][inner sep=0.75pt]   [align=left] {$u$};
\draw (302.21,230) node [anchor=north west][inner sep=0.75pt]   [align=left] {$CC_{1} =CC_{2}$};
\draw (314.47,33) node [anchor=north west][inner sep=0.75pt]   [align=left] {$CC_v$};
\draw (300.19,272) node [anchor=north west][inner sep=0.75pt]   [align=left] {$\displaystyle  \begin{array}{{>{\displaystyle}l}}
Case\ 1.2\\
\\
\end{array}$};
\draw (251.22,227) node [anchor=north west][inner sep=0.75pt]   [align=left] {$CC_u$};
\draw (538.66,86.19) node [anchor=north west][inner sep=0.75pt]   [align=left] {$v_2$};
\draw (534.93,172) node [anchor=north west][inner sep=0.75pt]   [align=left] {$u$};
\draw (518.21,233) node [anchor=north west][inner sep=0.75pt]   [align=left] {$CC_2=CC_v$};
\draw (538.47,33) node [anchor=north west][inner sep=0.75pt]   [align=left] {$v_1$};
\draw (524.19,275) node [anchor=north west][inner sep=0.75pt]   [align=left] {$\displaystyle  \begin{array}{{>{\displaystyle}l}}
Case\ 2\\
\\
\end{array}$};
\draw (560.21,24) node [anchor=north west][inner sep=0.75pt]   [align=left] {$CC_v=CC_1$};

\end{tikzpicture}

\end{center}

(1) $CC_1 = CC_2$. We distinguish two subcases:

(1.1) $CC_u = CC_v$. Then it is clear from the figure that for any $z \in V(G)$, $CC_z(v,S,i) \subseteq CC_z(v,S',i)$. Then it happens $\Delta(v,S,i) \leq \Delta(v,S',i)$. Notice that in this situation it holds $u \in W(v_1,S,i) = W(v,S,i)$ so $v \in W(u,S,i)$. 

(1.2) $CC_u \neq CC_v$. Then looking at the figure above we deduce that for any $z \in V(G)$, $CC_z(v,S,i) = CC_z(v,S',i)$. Then we clearly have $\Delta(v,S,i) = \Delta(v,S',i)$. Notice that in this situation it holds $u \not \in W(v,S,i)$ so $v \not \in W(u,S,i)$.

(2) $CC_1 \neq CC_2$. Then because of the restrictions regarding $u,v_1$ and $v_2$, we must have $CC_1 = CC_v$ and $CC_2 = CC_u$. Since $CC_2 \neq CC_v$ then it is clear from the figure that for any $z \in V(G)$, $CC_z(v,S,i) \subseteq CC_z(v,S',i)$ implying $\Delta(v,S,i) \leq  \Delta(v,S',i)$, too if $i=0$ or $CC_z(v,S,i) = CC_z(v,S',i)$ implying $\Delta(v,S,i) = \Delta(v,S',i)$ if $i=1$. Notice that in the situation $i=0$ it holds $u \in W(v_1,S,i) = W(v,S,i)$ so $v \in W(u,S,i)$ whereas when $i=1$ then $v \not \in W(u,S,i)$.

\end{proof}

\textbf{The meta-tree simplification.} Finally, with the help of the previous results we can prove that in all these four scenarios the utility for player $u$ with strategy $(S',i)$ is greater than or equal the one for the strategy $(S,i)$. This allows to make a significant simplification.  But before showing the main result of this subsection consider the following observation.

If we suppose that $A(S,0) = \left\{W(u,S,0)\right\}$ then $CC_u(\left\{W(u,S,i)\right\},S,0) = \emptyset$ and therefore $U(u,S,0) \leq 0$. However, $U(u,\emptyset,0) \geq 0$. Therefore:

\begin{remark}
\label{rem:1}
 In order to compute a best response for $u$ it is enough if we consider strategies $(S,i)$ such that $A(S,i) \neq \left\{ W(u,S,i) \right\}$. 
 \end{remark}
 
 Then we are ready to prove the following Lemma:

\begin{lemma}
In all the previous four scenarios it holds $U(u,S',i) \geq U(u,S,i)$. 
\end{lemma}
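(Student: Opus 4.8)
The plan is to dispatch scenario (a) with Lemma~\ref{lemm:delete} and the swap scenarios (b.i)--(b.iii) with Lemma~\ref{lemm:swap}, reducing in every case the inequality $U(u,S',i)\ge U(u,S,i)$ to a statement about the averaged term $\frac{1}{|A(\cdot,i)|}\sum_{a}|CC_u(a,\cdot,i)|$: the immunisation value is unchanged, and the number of bought links never increases, with $|S'|=|S|-1$ in (a) and $|S'|=|S|$ in (b.i)--(b.iii). Scenario (a) is then immediate, since Lemma~\ref{lemm:delete} gives $CC_z(v,S,i)=CC_z(v,S',i)$ and $\Delta(v,S,i)=\Delta(v,S',i)$ for all $v\in\mathcal{U}[G]$ and all $z$, so Lemma~\ref{lemm:trivial} yields $U(u,S,i)-U(u,S',i)=-\alpha(|S|-|S'|)=-\alpha<0$, in fact a strict improvement.

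For (b.i)--(b.iii), I would split $\mathcal{U}[G]$ as $B\sqcup D$ with $D=\mathcal{U}[G]\cap W(u,S,i)$ and $B$ its complement. If $D=\emptyset$ (equivalently $i=1$), then all vulnerable nodes lie in $B$, where Lemma~\ref{lemm:swap} guarantees the components and the $\Delta$-values agree under $S$ and $S'$, so Lemma~\ref{lemm:trivial} gives $U(u,S',i)=U(u,S,i)$. Otherwise $i=0$ and $D=W(u,S,i)$ is a vulnerable meta-node: Lemma~\ref{lemm:swap} gives $\Delta(v,S,i)=\Delta(v,S',i)$ and $CC_u(v,S,i)=CC_u(v,S',i)$ for $v\in B$, and $\Delta(v,S,i)\le\Delta(v,S',i)$ for $v\in D$; moreover each $v\in D$ satisfies $W(v,S,i)=W(u,S,i)\ni u$, so attacking $v$ destroys $u$ and $CC_u(v,S,i)=\emptyset$. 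Using that $\Delta$ is constant on a meta-node, set $m=\min_{v\in B}\Delta(v,S,i)$ (the same under $S$ and $S'$) and $\delta=\Delta(v,S,i)$ for $v\in D$, so $\Delta(S,i)=\min(m,\delta)$ and $\Delta(S',i)=\min(m,\delta')$ with $\delta'=\min_{v\in D}\Delta(v,S',i)\ge\delta$. Recalling $A(S,i)=\{a\in\mathcal{U}(G(S,i)):\Delta(a,S,i)=\Delta(S,i)\}$ (and similarly for $S'$), the case $m>\delta$ would give $A(S,i)=D=\{W(u,S,i)\}$, which Remark~\ref{rem:1} lets us exclude; so $m\le\delta$. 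If $m<\delta$ then $\Delta(S',i)=\Delta(S,i)=m$ and $A(S',i)=A(S,i)=\{v\in B:\Delta(v,S,i)=m\}$, on which the components agree, so the averaged term is literally unchanged and $U(u,S',i)=U(u,S,i)$. If $m=\delta$ then $A(S,i)=A_B\sqcup D$ and $A(S',i)=A_B\sqcup D'$ with $A_B=\{v\in B:\Delta(v,S,i)=m\}$ and $D'=\{v\in D:\Delta(v,S',i)=m\}\subseteq D$, and $A_B\ne\emptyset$ by Remark~\ref{rem:1}. Putting $X=\sum_{a\in A_B}|CC_u(a,S,i)|=\sum_{a\in A_B}|CC_u(a,S',i)|\ge 0$, the $D$-terms vanish under $S$, so the $S$-benefit equals $X/(|A_B|+|D|)$, while the $S'$-benefit is at least $X/(|A_B|+|D'|)\ge X/(|A_B|+|D|)$ because the discarded $D'$-terms are non-negative and $|D'|\le|D|$. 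Since the link-cost and immunisation terms are equal, $U(u,S',i)\ge U(u,S,i)$.

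The part needing the most care is the correct reading of Lemma~\ref{lemm:swap} for the present purpose: that vulnerable nodes outside $W(u,S,i)$ induce identical components and identical $\Delta$-values under $S$ and $S'$; that vulnerable nodes inside $W(u,S,i)$ contribute a zero component to $u$'s benefit under $S$; and that consequently the attacked set can differ between $S$ and $S'$ only within $D$, where the monotonicity $\Delta(v,S,i)\le\Delta(v,S',i)$ can only remove nodes from the attacked set. Once this bookkeeping is in place, the only quantitative estimate is the elementary $\frac{X}{|A_B|+|D'|}\ge\frac{X}{|A_B|+|D|}$, and the conclusion follows.
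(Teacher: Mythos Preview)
Your proof is correct and follows essentially the same approach as the paper: both invoke Lemma~\ref{lemm:delete} for scenario~(a) and Lemma~\ref{lemm:swap} for scenarios~(b.i)--(b.iii), then split on whether $u$ is immunised and, when $i=0$, on how $\Delta(u,S,i)$ compares to $\Delta(S,i)$, using Remark~\ref{rem:1} to exclude the degenerate case $A(S,i)=\{W(u,S,i)\}$. Your treatment of the case $m=\delta$ is slightly more streamlined than the paper's, which further splits into subcases (a.i) $\Delta(u,S',i)>\Delta(u,S,i)$ and (a.ii) $\Delta(u,S',i)=\Delta(u,S,i)$; your single inequality $\frac{X+\sum_{a\in D'}|CC_u(a,S',i)|}{|A_B|+|D'|}\ge\frac{X}{|A_B|+|D|}$ handles both at once.
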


\begin{proof}  We must distinguish between $u$ being vulnerable or immunised. 

\vskip 5pt 

First of all, if $u$ is immunised then using Lemma \ref{lemm:delete} and Lemma \ref{lemm:swap} we deduce that $\Delta(v,S,i) = \Delta(v,S',i)$ if $v \not \in W(u,S,i)$ implying that $A(S',i) = A(S,i)$. Moreover, by the same two lemmas we know that it holds $CC_u(a,S',i)= CC_u(a,S,i)$ for every $a \in A(S,i) = A(S',i)$, implying in this case $U(u,S',i) \leq U(u,S,i)$.

\vskip 5pt
Now we address the case $u$ vulnerable:

(a) If $\Delta(S,i) = \Delta(u,S,i)$, again, by a direct consequence of Lemma \ref{lemm:delete} and Lemma \ref{lemm:swap} we know that $\Delta(S,i) \leq \Delta(S',i)$.  Therefore we have only two sub-cases to consider.

(a.i) If $\Delta(u,S',i) > \Delta(u,S,i)$. By Lemma \ref{lemm:delete} and Lemma \ref{lemm:swap} we know that $\Delta(v,S,i) = \Delta(v,S',i)$ if $v$ is vulnerable and $v \not \in W(u,S,i)$.  But we can discard the case $A(S,i) = \left\{W(u,S,i)\right\}$ due to the previous remark so $|A(S,i)| > 1$. Then $\Delta(S',i) = \Delta(S,i)$ and combining all these results we reach to $A(S',i) = A(S,i) \setminus \left\{W(u,S,i)\right\}$. In fact, by the same two previous lemmas we know that it holds $CC_u(a,S',i)= CC_u(a,S,i)$ for every $a \in A(S,i) \setminus \left\{W(u,S,i)\right\} = A(S',i)$. Moreover, $CC_u(\left\{W(u,S,i)\right\},S,i) = \emptyset$ because by hypothesis $u$ is vulnerable. Hence:
\begin{gather*}U(u,S',i) - U(u,S,i) = \\
= - \alpha |S'| +\alpha |S|+ \frac{1}{|A(S,i)|-1} \sum_{a \in A(S,i)}|CC_u(a,S,i)| - \frac{1}{|A(S,i)|}\sum_{a \in A(S,i)}|CC_u(a,S,i)| >\\
>-\alpha |S'|+\alpha|S|\end{gather*}

(a.ii) Otherwise, if $\Delta(u,S',i) = \Delta(u,S,i)$ then by Lemma \ref{lemm:delete} and Lemma \ref{lemm:swap} we know that it must hold $W(u,S,i)=W(u,S',i)$. Then, again, by the same lemmas, $\Delta(v,S,i) = \Delta(v,S',i)$ if $v$ is vulnerable and $v \not \in W(u,S,i) = W(u,S',i)$. In conclusion $A(S',i) = A(S,i)$. Furthermore, by considering the same lemmas we know that $CC_u(a,S,i) = CC_u(a,S',i)$ for every $a \in A(S,i) \setminus \left\{W(u,S,i)\right\}=A(S',i) \setminus \left\{W(u,S',i)\right\}$ and $CC_u(a,S,i) \subseteq CC_u(a,S',i)$ for any $a \in W(u,S,i) =W(u,S',i)$. From here we obtain  \begin{equation*}U(u,S',i)-U(u,S,i) \geq -\alpha |S'|+\alpha|S|\end{equation*}

(b) If $\Delta(S,i) < \Delta(u,S,i)$. By Lemma \ref{lemm:delete} and Lemma \ref{lemm:swap} $\Delta(v,S,i) = \Delta(v,S',i)$ for any $v \not \in W(u,S,i)$ and $\Delta(v,S,i) \leq \Delta(v,S',i)$ if $v \in W(u,S,i)$. This implies $A(S,i) = A(S',i)$.  Furthermore, by considering the same lemmas we know that $CC_u(a,S,i) = CC_u(a,S',i)$ for every $a \in A(S,i)=A(S',i)$. Here we obtain \begin{equation*}U(u,S',i)-U(u,S,i) = -\alpha |S'|+\alpha|S|\end{equation*}

Finally, the conclusion is clear because in all the four scenarios $|S| \geq |S'|$. \end{proof}

Now, as a consequence of these results we reach the following corollary:

\begin{corollary}
\label{corol:simplification2}
We can assume without loss of generality that any potential best response $S$ for $u$ satisfies:

- $S$ points only to immunised meta-nodes. 

- $S$ contains at most one link to each component of $V_1(G')$.

- It does not matter towards which immunised meta-node from $V_1(G')$  $S$ is pointing to.  

\end{corollary}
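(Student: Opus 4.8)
The plan is to start from a utility-maximising strategy and to simplify it with the three operations already introduced in this subsection. Fix the immunisation value $i$; it is enough to produce, among the strategies $(S,i)$ maximising $U(u,\cdot,i)$, one satisfying the three stated properties, and then take the better of $i\in\{0,1\}$. Pick such a maximiser $(S,i)$ with $|S|$ minimum. By the last Lemma above (which gives $U(u,S',i)\ge U(u,S,i)$ in each of the four scenarios), applying the delete-simplification of Scenario~(a) or one of the swap-simplifications of Scenarios~(b.i), (b.ii), (b.iii) to $(S,i)$ again yields a maximiser $(S',i)$; and since Scenario~(a) has $|S'|=|S|-1$, minimality of $|S|$ rules it out completely. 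In particular, no two links of $S$ may point to meta-nodes of one maximal $2_{\mathcal{I}}$-vertex-connected component of $G'(\emptyset,i)$ whenever at least one of those two target meta-nodes is immunised (that is exactly the configuration of Scenario~(a)).

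First I would establish that $S$ points only to immunised meta-nodes. Suppose a link of $S$ points to a meta-node $a$ that is vulnerable in $G'(\emptyset,i)$ (we may assume $a\ne W(u,\emptyset,i)$, since a link inside $u$'s own meta-node joins two nodes already in one connected component and is redundant). Two distinct vulnerable meta-nodes are never adjacent in a meta-graph---they would lie in one maximal connected set of vulnerable nodes and hence coincide---so every $G'(\emptyset,i)$-neighbour of $a$ is immunised; and by the observation of the previous paragraph none of them lies in $S$ (were an immunised neighbour $w$ of $a$ in $S$, Scenario~(a) applied to $w$ and $a$ would contradict minimality of $|S|$). Since $G'(\emptyset,i)$ is connected with at least two meta-nodes, $a$ has such a neighbour $v_2$: immunised and outside $S$, chosen on a simple $a$--$u$ path when $a\in\mathcal{U}_{\ge 2}(G'(\emptyset,i))$. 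Then Scenario~(b.ii) (if $a\in\mathcal{U}_1(G'(\emptyset,i))$) or Scenario~(b.iii) (if $a\in\mathcal{U}_{\ge 2}(G'(\emptyset,i))$) replaces the link $a$ by the immunised link $v_2$: this keeps $|S|$, keeps $(S,i)$ a minimum-size maximiser, and strictly lowers the number of links of $S$ to vulnerable meta-nodes, so after at most $|S|$ repetitions every link of $S$ points to an immunised meta-node.

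The second bullet is then immediate: with all links immunised, Scenario~(a) would apply to any two links landing in a common maximal $2_{\mathcal{I}}$-vertex-connected component, so there is at most one link per component of $V_1(G'(\emptyset,i))$; and there are no links to the meta-nodes of $V_{\ge 2}(G'(\emptyset,i))$ at all, since those are vulnerable. For the third bullet, take two immunised meta-nodes $v_1,v_2$ of one component $W\in V_1(G'(\emptyset,i))$ with $v_1\in S$, $v_2\notin S$; Scenario~(b.i) applied to $(v_1,v_2)$ gives $U(u,S',i)\ge U(u,S,i)$ for $S'=S\cup\{v_2\}\setminus\{v_1\}$, and Scenario~(b.i) applied to $(v_2,v_1)$ starting from $S'$ gives the reverse inequality, so $U(u,S',i)=U(u,S,i)$ and the precise immunised target inside $W$ is irrelevant.

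The part I expect to require the most care is not any individual deduction but the bookkeeping that makes the above a genuine terminating procedure: one must fix an order on the moves so that a later swap does not recreate an already-removed configuration (the lexicographic pair given by $|S|$ and then the number of links of $S$ to vulnerable meta-nodes serves as a potential), and at each step one must recheck the hypotheses of the Scenario being applied for the current strategy---that the two meta-nodes involved share a maximal $2_{\mathcal{I}}$-vertex-connected component of $G'(\emptyset,i)$, that the swap target lies on a simple path to $u$ when Scenario~(b.iii) is used, and, via Remark~\ref{rem:1}, that one stays in the regime $A(\cdot,i)\ne\{W(u,\cdot,i)\}$ in which the last Lemma was proved (which is automatic once we note that otherwise $(\emptyset,i)$ is already an optimal choice). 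None of this needs an idea beyond the three simplification lemmas.
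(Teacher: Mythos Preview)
Your proposal is correct and follows essentially the same approach as the paper: both reduce the three claims to the four scenarios (a), (b.i), (b.ii), (b.iii) via the preceding Lemma. The paper's own proof is a three-line sketch that simply cites which scenario handles which bullet; you supply the bookkeeping the paper omits---the minimum-$|S|$ selection, the check that the swap target is not already in $S$, and the termination potential---so your argument is more complete but not genuinely different.
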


\begin{proof} First, in order to see that we can assume wlog that $S$ points only to immunised meta-nodes apply Lemma \ref{lemm:swap} in scenarios (b.ii) and (b.iii) from the swap-simplification subsection. 

Secondly, in order to see that we can assume wlog that $S$ contains at most one link to each component of $V_1(G'(S,i))$ apply Lemma \ref{lemm:delete} in scenario (a) from the delete-simplification subsection to every link from $S$.

Finally, in order to see that we can change wlog towards which immunised meta-node is $S$ pointing to inside each component from $V_1(G')$ apply Lemma \ref{lemm:swap} in scenario (b.i) from the swap-simplification subsection.

\end{proof}

Consider an immunisation value $i$, and a link-strategy $S$ corresponding to any potential best response strategy for $u$. So far, we can summarise that we have distinguished three distinct levels of abstraction for $S$:

1. The first level of abstraction corresponds to our starting point, writing $S \subseteq V(G)$ meaning that $S$ consists of a collection of nodes from $G(\emptyset,i)$. 

2. The second level of abstraction consists in thinking of $S$ as a collection of distinct meta-nodes from $G'(\emptyset,i)$, writing $S \subseteq V(G')$ meaning that $S$ consists of a collection of meta-nodes from $G'(\emptyset,i)$, since we know that any such link-strategy does not contain two links to distinct nodes from the same meta-node. We reached this level of abstraction after Corollary \ref{corol:simplification1}.  

3. In the last level of abstraction we think $S$ as a subset of meta-tree nodes from $V_1(G'(\emptyset,i))$ and we write $S \subseteq V_1(G')$ meaning that $S$ points to any immunised meta-node inside each of the meta-tree vertices from $S$, since we can assume wlog that any such link-strategy does not contain any link to any vulnerable meta-node from $\mathcal{U}_{\geq 2}(G')$ and it contains, at most, one link to any immunised meta-node from $V_1(G'(\emptyset,i))$ (it does not matter towards which).  We reached this level of abstraction after Corollary \ref{corol:simplification2}.

Taking into the account these three levels of abstraction, in the forthcoming sections, without loss of generality, we always make the next assumption: 

\begin{theorem}
\label{thm:1}
Any potential best response for $u$ can be assumed without loss of generality to be a subset of $V_1(G')$, meaning that the corresponding strategy consists in selecting any immunised meta-node inside each of the components from such subset. 
\end{theorem}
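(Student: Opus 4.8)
The statement is essentially a consolidation of the simplifications already proved in this section, so my plan is to assemble Corollary~\ref{corol:simplification1} and Corollary~\ref{corol:simplification2} (together with the lemma preceding the latter and Remark~\ref{rem:1}) into a single normal form for link-strategies. Fix the immunisation value $i$ and let $(S,i)$ be an arbitrary candidate best response. First I would invoke Corollary~\ref{corol:simplification1}: deleting a redundant link to a meta-node already hit by $S$ strictly increases the utility, so we may assume that $S$ contains at most one link per meta-node of $G'(\emptyset,i)$, i.e. $S\subseteq V(G')$, each element of $S$ being read as ``buy one link to some node inside that meta-node''.

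Next I would apply the three parts of Corollary~\ref{corol:simplification2} in turn, each time replacing $S$ by a strategy of no larger cost and no smaller utility (the lemma preceding Corollary~\ref{corol:simplification2} gives $U(u,S',i)\ge U(u,S,i)$ in every scenario, and in each scenario $|S'|\le|S|$). The first part, via the swap-simplification in scenarios (b.ii) and (b.iii), lets us push every link pointing to a vulnerable meta-node onto an immunised meta-node lying on a path towards $u$, so we may assume $S$ points only to immunised meta-nodes. The second part, via the delete-simplification in scenario (a), lets us delete, one component of $V_1(G'(\emptyset,i))$ at a time, all but one of the links inside that $2_{\mathcal I}$-vertex-connected component, so $S$ has at most one link per component of $V_1(G')$. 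The third part, via the swap-simplification in scenario (b.i), shows the particular immunised meta-node chosen inside a component does not affect the utility. Combining the three, $S$ is determined up to utility by the set of components of $V_1(G')$ that it touches; identifying that set with a subset of $V_1(G')$ yields the asserted normal form, and iterating over the finitely many candidate strategies proves the theorem.

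The point that I expect to require the most care — and which I regard as the real content here — is checking that the successive rewritings are mutually compatible: each simplification is applied to a strategy already in the normal form produced by the previous one, so one must verify that it preserves the earlier properties. Concretely, after moving all links onto immunised meta-nodes, the subsequent delete- and swap-steps only ever remove or relocate links among immunised meta-nodes, so the ``immunised only'' property persists; and since all these operations alter only links incident to $u$ and keep those links pointing to immunised meta-nodes, the graph $G'(\emptyset,i)$ — hence the decomposition into $V_1(G'(\emptyset,i))$ and $\mathcal U_{\ge2}(G'(\emptyset,i))$ against which every scenario is stated — is unchanged throughout. Finally I would note that Remark~\ref{rem:1} lets us discard the degenerate strategies with $A(S,i)=\{W(u,S,i)\}$, which is implicitly invoked when the preceding lemma is applied, so no best response is lost by passing to the normal form.
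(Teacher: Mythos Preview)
Your proposal is correct and mirrors the paper's approach: the theorem is presented there as a direct consolidation of Corollary~\ref{corol:simplification1} and Corollary~\ref{corol:simplification2} (via the ``three levels of abstraction'' discussion), with no separate proof given. Your write-up is in fact more careful than the paper in making explicit that the successive delete- and swap-simplifications are compatible and are all stated relative to the fixed graph $G'(\emptyset,i)$.
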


Moreover, given two link-strategies $S,S'$ we consider $S = S'$ as an equality in the third level of abstraction although it could be that $S$ and $S'$ point to distinct endpoints in the first or in the second level of abstraction. 

%

\section{Restricted Strategies and Restricted Utility: A Top-Down Characterisation}

As stated in Theorem \ref{thm:1}, we can assume without loss of generality that any potential best response $(S,i)$ satisfies that $S \subseteq V_1(G')$. 
Despite of this result, there still seems to be an exponential number of possible combinations to explore. For this reason we consider the following idea:

Since we want to find a strategy $(S,i)$ that maximises the value $U(u,S,i)$, we can partition the set of potential best responses into a collection of mutually disjoint subsets depending on the possible values of $\Delta$ and $|A|$. If we know how to find any strategy achieving the maximum restricted utility in each subset, then among such strategies we can pick the one having a maximum utility as a best response for the original problem.

Moreover, we will see that the best response strategy for the objective function when restricted to a sub-tree $T(v)$ can be computed by finding the best response strategies restricted to each of the sub-trees $T_1(v),...,T_{k(v)}(v)$. In this way we can exploit the structure of the meta-tree, breaking the original problem into easier sub-problems to solve.





\subsection{Utility and Restricted Utility}

\label{subsec:strategy}

In this subsection we introduce the partition into which we split the collection of all potential best response strategies. The collection of strategies that we  considering corresponds to the sets of $(\Delta_0,m,T(v))-$strategies where:

(1) The parameter $\Delta_0$ corresponds to the minimum delta value any vulnerable meta-node from the network has. Trivially, $1 \leq \Delta_0 \leq n^2$. 

(2) The subtree $T(v)$ from $T$ in which we are focusing our attention. Recall that we  always consider that $T$ is rooted at the node containing $u$. Trivially $v \in V(T)$. This parameter is really useful because we want to be able to solve the problem for a subtree $T(v)$ by solving the corresponding sub-problems for the subtrees $T_1(v),...,T_{k(v)}(v)$ in which $T(v)$ decomposes. 

(3) The parameter $m$ is the number of meta-nodes from $\mathcal{U}[T(v)]$ achieving a delta value equal to $\Delta_0$.

Before providing the formal definition of a $(\Delta_0,m,T(v))-$strategy, however, we first introduce the concept of restricted attack set.

\begin{definition}

Let $S\subseteq V_1(G')$, $i \in \left\{0,1\right\}$ and $v\in V(T)$. We define the restricted $(\Delta_0,T(v))-$attack set with respect $(S,i)$, using the notation $A(S,i,\Delta_0,T(v))$, as the subset of vulnerable meta-nodes from $T(v)$ having a delta value equal to $\Delta_0$, that is to say, the subset of vulnerable meta-nodes $z   \in \mathcal{U}[T(v)]$ with $\Delta(z,S,i)= \Delta_0$.  

\end{definition}

Now we are ready to define what is a $(\Delta_0,m,T(v))-$strategy. 

\begin{definition}

Let $S\subseteq V_1(G')$, $i \in \left\{0,1\right\}$ and $v\in V(T)$. We say that a strategy $(S,i)$ is a $(\Delta_0,m,T(v))$-strategy iff 

(i) $\Delta(z,S,i) \geq \Delta_0$ for every $z \in \mathcal{U}[G'] \cap T(v)$.

(ii) $m$ equals the number of nodes from $\mathcal{U}[T(v)]$ achieving a delta value equal to $\Delta_0$. 

\end{definition}

Once we have a clear understanding of what is a $(\Delta_0, m,T(v))-$strategy let us introduce the concept of restricted utility:

\begin{definition}
Let $S \subseteq V_1(G')$ and $i\in \left\{0,1\right\}$ and suppose that $(S,i)$ is a $(\Delta_0,m,T(u))-$strategy. We define the restricted $(\Delta_0,|A|,T(v))-$utility of $u$ with respect $S$ as:
\begin{equation*}U(u,S,\Delta_0,|A|,T(v)) = -\alpha |S| + \frac{1}{|A|} \sum_{a \in A(S,1,\Delta_0,T(v)) } |CC_u(a,S,1)|\end{equation*}
\end{definition}

Notice that in the definition of restricted utility we assume that the immunisation value is $i=1$  when we consider the set $A(S,i,\Delta_0,T(v))$ and  the values $|CC_u(a,S,i)|$. 


In the next proposition it is shown that 
there is an intimate relationship between the situations $i=0$ and $i=1$. 

\begin{proposition}
\label{prop:restricted}
Let $(S,i)$ be any potential best-response for player $u$. Let $\Delta_0 = \Delta(S,i)$ and $|A| = |A(S,1,\Delta_0,T(u))|$. 

If $\Delta(S,0) < \Delta(u,\emptyset,0)$ then $i=0$ and 
$U(u,S,i) = U(u,S,\Delta_0,|A|,T(u))$. 

Else, 
$U(u,S,0) = U(u,S,\Delta_0,|A|+1,T(u))$ and $U(u,S,1)=U(u,S,\Delta_0,|A|,T(u))-\beta$.
\end{proposition}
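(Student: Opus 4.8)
The plan is to split into the two cases suggested by the statement according to whether $\Delta(S,0) < \Delta(u,\emptyset,0)$, and in each case to compare the network $G(S,0)$ with the network $G(S,1)$, which differ only in the immunisation status of the meta-node $W(u,\cdot,\cdot)$ containing $u$. The key observation I would carry through is that removing (or not) the immunisation of $u$ changes at most one thing about the attack set, namely whether the meta-node containing $u$ is itself a target, and that it never changes the connected components $CC_z(v,\cdot)$ for a vulnerable $v$ lying outside $W(u,\cdot)$. So first I would record: for every vulnerable meta-node $z$ distinct from the one containing $u$, $\Delta(z,S,0) = \Delta(z,S,1)$, since the relevant path of vulnerable nodes defining $CC_\cdot(z,\cdot)$ never uses $u$'s immunisation status. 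This means the ``external'' part of the attack set and the ``external'' values $|CC_u(a,S,\cdot)|$ agree for $i=0$ and $i=1$, which is exactly what lets the restricted utility (defined with $i=1$) compute the true utility for both.

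Next I would handle the first case, $\Delta(S,0) < \Delta(u,\emptyset,0)$. Here I claim $W(u,S,0)$, being vulnerable, has delta value $\Delta(u,S,0)$; and I would argue $\Delta(u,S,0) \ge \Delta(u,\emptyset,0)$ — buying links to immunised meta-nodes (Theorem~\ref{thm:1}) can only break $G$ further around $u$'s vulnerable region, hence can only increase the sum of squares — so $\Delta(u,S,0) > \Delta(S,0) = \Delta_0$, i.e. $u$'s meta-node is \emph{not} attacked. Consequently, switching to $i=1$ would remove a non-target meta-node and pay an extra $\beta$ with no benefit, so $i=0$ is forced for a best response; this gives $i=0$. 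Moreover since $W(u,S,0) \notin A$, we have $A(S,0) = A(S,1,\Delta_0,T(u))$ and the values $|CC_u(a,S,0)| = |CC_u(a,S,1)|$ for every target $a$ (all targets are external to $W(u,\cdot)$, where the components coincide by the first paragraph). Plugging into the definitions of $U(u,S,0)$ and $U(u,S,\Delta_0,|A|,T(u))$ — both are $-\alpha|S|$ plus $\frac{1}{|A|}\sum$ over the same set of the same component sizes, with no $\beta$ term — yields $U(u,S,0) = U(u,S,\Delta_0,|A|,T(u))$.

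For the second case, $\Delta(S,0) \ge \Delta(u,\emptyset,0)$, I would argue that now $W(u,S,0)$ \emph{does} attain the minimum delta value: one shows $\Delta(u,S,0) \le \Delta(u,\emptyset,0) \le \Delta(S,0)$... more carefully, that $\Delta(S,0) = \Delta(u,S,0) = \Delta_0$ in this regime, so $u$'s meta-node is a target when $i=0$. Hence for $i=0$, $A(S,0) = A(S,1,\Delta_0,T(u)) \cup \{W(u,S,0)\}$, a disjoint union of size $|A|+1$, and $|CC_u(W(u,S,0),S,0)| = 0$ since $u$ sits in the attacked region; the external targets contribute exactly $\sum_{a \in A(S,1,\Delta_0,T(u))}|CC_u(a,S,1)|$. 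Therefore $U(u,S,0) = -\alpha|S| + \frac{1}{|A|+1}\sum_{a\in A(S,1,\Delta_0,T(u))}|CC_u(a,S,1)| = U(u,S,\Delta_0,|A|+1,T(u))$. For $i=1$, $W(u,S,1)$ is immunised hence never a target, so $A(S,1) = A(S,1,\Delta_0,T(u))$ and the components for external targets are unchanged, giving $U(u,S,1) = -\alpha|S| - \beta + \frac{1}{|A|}\sum_{a\in A(S,1,\Delta_0,T(u))}|CC_u(a,S,1)| = U(u,S,\Delta_0,|A|,T(u)) - \beta$.

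The main obstacle I expect is the bookkeeping in the second case showing $\Delta(S,0) = \Delta(u,S,0)$ exactly — i.e. that when $u$'s own vulnerable region does not beat $\Delta(u,\emptyset,0)$, no \emph{other} vulnerable meta-node can have a strictly smaller delta value than $W(u,S,0)$. This needs the earlier simplifications: $S$ points only to immunised meta-nodes (Theorem~\ref{thm:1}), so adding $S$ to the empty strategy leaves all vulnerable regions not containing $u$ intact (their delta values are the $\Delta(z,\emptyset,0)$ for the untouched part), while it can only refine the region containing $u$; combined with the monotonicity $\Delta(u,S,0) \ge \Delta(u,\emptyset,0)$ and with the hypothesis $\Delta(u,\emptyset,0) \le \Delta(S,0) \le \min_z \Delta(z,S,0)$, one pins down $\Delta_0 = \Delta(u,S,0)$. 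I would also double-check the boundary where $A(S,1,\Delta_0,T(u))$ is empty, invoking Remark~\ref{rem:1} to discard the degenerate attack set $\{W(u,S,i)\}$ so that the denominators $|A|$ and $|A|+1$ are positive and the displayed formulas are well-defined.
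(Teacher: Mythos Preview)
Your approach is essentially the paper's: compare $G(S,0)$ with $G(S,1)$, observe that they agree away from $W(u,\cdot)$, and split on whether $u$'s own vulnerable region is a target. The structure and conclusions are correct.

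There is one wobbly point you should clean up. You argue $\Delta(u,S,0)\ge \Delta(u,\emptyset,0)$ in the first case (``buying links \dots can only break $G$ further'', which is not what buying links does) and then $\Delta(u,S,0)\le \Delta(u,\emptyset,0)$ in the second case; these cannot both be strict, and neither justification is sound as written. The paper's observation removes all the difficulty at once: since by Theorem~\ref{thm:1} every endpoint of $S$ is an immunised meta-node, the vulnerable region $W(u,S,0)$ equals $W(u,\emptyset,0)$, and removing it also removes every edge $u$ bought, so the post-removal graph is identical for $S$ and for $\emptyset$. Hence $\Delta(u,S,0)=\Delta(u,\emptyset,0)$ \emph{exactly}. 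With this equality in hand, the first case gives $\Delta(u,S,0)>\Delta(S,0)$ directly, and in the second case the hypothesis $\Delta(S,0)\ge \Delta(u,\emptyset,0)$ combined with $\Delta(S,0)\le \Delta(u,S,0)=\Delta(u,\emptyset,0)$ forces $\Delta(S,0)=\Delta(u,S,0)$ immediately---so the ``main obstacle'' you flag dissolves. Your invocation of Remark~\ref{rem:1} for the degenerate attack set is exactly what the paper does.
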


\begin{proof}  One first should notice the following fact: by Theorem \ref{thm:1}, we can assume wlog that all the endpoints of $S$ point to immunised meta-nodes and thus $\Delta(u,S,0) = \Delta(u,\emptyset,0)$. Therefore:

(i) If $\Delta(S,0) < \Delta(u,\emptyset,0)$, then, $\Delta(u,S,0) = \Delta(u,\emptyset,0) >   \Delta(S,0)$ meaning that the meta-node $W(u,S,0)$ is not attacked in this scenario. Similarly, when $i=1$ the meta-node $W(u,S,1)$ is not attacked because $u$ is immunised. Therefore, excluding the meta-nodes $W(u,S,0)$ and $W(u,S,1)$ the corresponding meta-graphs $G'(S,0)$ and $G'(S,1)$ are identical implying $A(S,0) = A(S,1)$ and $CC_u(a,S,0) = CC_u(a,S,1)$ for every $a \in A(S,0)=A(S,1)$, which leads to $U(u,S,0) = U(u,S,1)+\beta$ implying that $(S,1)$ cannot be a best response strategy and, therefore, $i=0$ as we wanted to see. Hence:

\begin{align*} U(u,S,0) =& -\alpha |S|+\frac{1}{|A(S,0)|} \sum_{a \in A(S,0)}|CC_u(a,S,0)| =\\
=& -\alpha |S|+\frac{1}{|A(S,1)|} \sum_{a \in A(S,1)}|CC_u(a,S,1)|=U(u,S,\Delta_0,|A|,T(u))\end{align*}

(ii) If $\Delta(S,0) = \Delta(u,\emptyset,0)$, then, when $i=0$ the meta-node $W(u,S,0)$ is attacked in this scenario whereas when $i=1$ the meta-node $W(u,S,1)$ is not attacked because $u$ is immunised. By remark \ref{rem:1} we can exclude the case $A(S,0) = \left\{ W(u,S,0)\right\}$ and, then, apart from the meta-nodes $W(u,S,0)$ and $W(u,S,1)$ the corresponding meta-graphs $G'(S,0)$ and $G'(S,1)$ are identical implying in this situation that $A(S,1) = A(S,0) \setminus \left\{ W(u,S,0)\right\}$,  $CC_u(a,S,0) = CC_u(a,S,1)$ for every $a \in A(S,1)$ and $CC_u(W(u,S,0),S,0)= \emptyset$. With these results then
\begin{align*} U(u,S,0) &= -\alpha |S| + \frac{1}{|A(S,0)|}\sum_{a \in A(S,0)}|CC_u(a,S,0)|=\\
&= -\alpha |S| + \frac{1}{|A(S,1)|+1}\sum_{a \in A(S,1)}|CC_u(a,S,1)| = U(u,S,\Delta_0,|A|+1,T(u))\end{align*}
And:
\begin{align*}U(u,S,1) =& -\alpha |S| - \beta + \frac{1}{|A(S,1)|} \sum_{a \in A(S,1)}|CC_u(a,S,1)| =\\
=& U(u,S,\Delta_0,|A|,T(u))-\beta\end{align*}
\end{proof}

Hence, in order to compute the best response for $u$ in polynomial time, it is enough to show that the maximum restricted utility can be computed in polynomial time, given that we know the values of the parameters $\Delta_0, |A|$. 

{\bf Remark:} Since the restricted utility function is defined by taking the immunisation value equal to $1$, then in the forthcoming sections  we delete the reference to  the immunisation value, understanding that we are assuming wlog that such value equals $1$. This means that we will write $\Delta(z,S) = \Delta(z,S,1)$, $CC_u(a,S) = CC_u(a,S,1)$, $W(u,S) = W(u,S,1)$, and so on.

\subsection{A Formula for the Restricted Utility}

The aim of this subsection is to obtain a formula for the restricted utility $U(u,S,\Delta_0,|A|,T(v))$ in terms of the restricted utilities $U(u,S_j,\Delta_0,|A|,T(v_j))$ for the corresponding subtrees of $T(v)$. To achieve the formulae given in Proposition \ref{lemm:restrictedutility} we first need to prove some technical lemmas. 

The following three technical lemmas relate the restricted strategies in $T(v)$ with the restricted strategies in the subtrees $T(v_i)$.

\begin{lemma}
\label{lemm:crucial}
Let $S\subseteq V_1(G')$ and $v\in V(T)$. Then $\Delta(z,S) = \Delta(z,S\cap T(v))$ for every $z \in \mathcal{U}[G'] \cap T(v)$. 
\end{lemma}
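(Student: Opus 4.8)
The plan is to prove the stronger claim that $CC_t(z,S)=CC_t(z,S\cap T(v))$ for every $t$ whenever $z\in\mathcal{U}[G']\cap T(v)$; the asserted equality of delta values then follows at once from the definition of $\Delta$. We may assume that $v$ is not the root $R_u$ of $T$ (the meta-tree node containing $u$), since otherwise $T(v)=T$ and $S\cap T(v)=S$. Recall that by our standing convention the immunisation value is $1$, so $u$ is immunised and both $S$ and $S\cap T(v)$ link only to immunised meta-nodes; hence adjoining their links to $G(\emptyset,1)$ creates no edge between two vulnerable nodes, and the vulnerable region destroyed by an attack on $z$ is the same in $G(\emptyset,1)$, in $G(S\cap T(v),1)$ and in $G(S,1)$, namely the vertex set of the meta-node $z$. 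Consequently $CC_t(z,S)$ is just the connected component of $t$ in the graph $G(S,1)-z$ obtained by deleting the vertices of $z$, and likewise for $S\cap T(v)$.

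The next step is to reduce the claim to a statement about the single fixed graph $G(\emptyset,1)$. Since $V(T)$ is the disjoint union of $V(T(v))$ and $V(\overline{T(v)})$, every $w\in S\setminus T(v)$ is a meta-tree node of $\overline{T(v)}$, and $G(S,1)$ differs from $G(S\cap T(v),1)$ precisely by the links $e_w=(u,x_w)$ for $w\in S\setminus T(v)$, where $x_w$ denotes the node $u$ links to inside $w$; by our conventions $x_w$ lies in an immunised meta-node. Since $u$ and every $x_w$ are immunised while the vertices of $z$ are vulnerable, every $e_w$ survives the deletion of $z$, so $G(S,1)-z$ is obtained from $G(S\cap T(v),1)-z$ by adding the edges $e_w$. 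Adding an edge joining two vertices that already lie in the same connected component changes nothing, and $G(\emptyset,1)-z\subseteq G(S\cap T(v),1)-z$; therefore it suffices to show that, for each $w\in S\setminus T(v)$, the vertices $u$ and $x_w$ lie in the same connected component of $G(\emptyset,1)-z$.

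To prove this, distinguish whether $z$ is an articulation point of $G'(\emptyset,1)$, which is connected by the standing assumption. If $z\in\mathcal{U}_1(G')$, then $G'(\emptyset,1)-z$ is connected, hence so is $G(\emptyset,1)-z$, and there is nothing more to do. If $z\in\mathcal{U}_{\geq 2}(G')$, then $z$ is itself a meta-tree node with $z\in V(T(v))$, so $z\notin V(\overline{T(v)})$; the root $R_u$ and the meta-tree node $w$ both lie in $V(\overline{T(v)})$, which induces a connected subtree of $T$ avoiding $z$, so $R_u$ and $w$ lie in the same connected component of $T-z$. By the correspondence between the meta-tree and vertex connectivity in $G'(\emptyset,1)$ — deleting the vulnerable cut vertex $z$ from $G'$ groups the remaining meta-nodes exactly as deleting $z$ groups $V(T)\setminus\{z\}$ — the meta-node containing $u$ and the meta-node containing $x_w$ lie in the same component of $G'(\emptyset,1)-z$, whence $u$ and $x_w$ lie in the same component of $G(\emptyset,1)-z$. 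This completes the argument.

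The only non-routine ingredient is this last structural fact: that the meta-tree is genuinely the block-cut tree of $G'$ with respect to its vulnerable cut vertices, so that deleting a vulnerable articulation meta-node $z$ splits the remaining meta-nodes of $G'$ exactly as deleting $z$ splits the meta-tree $T$. Everything else — reducing to $G(\emptyset,1)$, observing that the added links are redundant after removing $z$, and the non-articulation case — is straightforward once this correspondence is available.
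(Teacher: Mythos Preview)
Your proof is correct and follows essentially the same idea as the paper's: links with endpoints in $\overline{T(v)}$ join $u$ to vertices that already lie in $u$'s component after removing $z$, so they are redundant for the component structure and hence for $\Delta$. The paper's own argument is a two-line sketch (and its phrase ``after disconnecting $v$'' is a slip for ``after disconnecting $z$''); you have made the reasoning precise by (i) proving the stronger statement $CC_t(z,S)=CC_t(z,S\cap T(v))$ for all $t$, and (ii) splitting into the cases $z\in\mathcal{U}_1(G')$ versus $z\in\mathcal{U}_{\geq 2}(G')$ and invoking the block--cut correspondence between $T$ and $G'$ explicitly. This extra care is useful, since the $CC_t$ equality is exactly what the paper later needs in Lemma~\ref{lemm:crucial2}.
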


\begin{proof} If $v = W(u,S)$ the result is trivial. Otherwise, let $S_v = S \cap T(v)$ and $\overline{S_v} = S \cap \overline{T(v)}$ and take $z \in \mathcal{U}[G'] \cap T(v)$. Since $v$ is an articulation point of $T(v)$ then this means that all the endpoints of $S$ belonging to $\overline{T(v)}$ must be in the connected component in which $W(u,S)$ belongs after disconnecting $v$. Then $\Delta(z,S) =\Delta(z,S_v) = \Delta(z,S \cap T(v))$.  

\begin{center}

\tikzset{every picture/.style={line width=0.75pt}} 

\begin{tikzpicture}[x=0.75pt,y=0.75pt,yscale=-0.75,xscale=0.75]

\draw    (141,45.82) .. controls (130.17,77.5) and (130,101.26) .. (125.14,160.28) ;
\draw [shift={(124.91,163)}, rotate = 274.76] [fill={rgb, 255:red, 0; green, 0; blue, 0 }  ][line width=0.08]  [draw opacity=0] (8.93,-4.29) -- (0,0) -- (8.93,4.29) -- cycle    ;
\draw    (146.21,47.35) .. controls (151.88,91.09) and (148.16,124.62) .. (144.2,169.25) ;
\draw [shift={(143.96,172)}, rotate = 275.11] [fill={rgb, 255:red, 0; green, 0; blue, 0 }  ][line width=0.08]  [draw opacity=0] (8.93,-4.29) -- (0,0) -- (8.93,4.29) -- cycle    ;
\draw  [fill={rgb, 255:red, 253; green, 249; blue, 249 }  ,fill opacity=1 ] (121.37,166.85) .. controls (121.37,164.72) and (122.95,163) .. (124.91,163) .. controls (126.87,163) and (128.46,164.72) .. (128.46,166.85) .. controls (128.46,168.98) and (126.87,170.7) .. (124.91,170.7) .. controls (122.95,170.7) and (121.37,168.98) .. (121.37,166.85) -- cycle ;
\draw   (138.19,39) .. controls (139.69,40) and (141,40) .. (146.21,39) .. controls (151.43,38) and (149,29.98) .. (169,40.98) .. controls (189,51.98) and (220,129) .. (247,165) .. controls (274,201) and (272,234) .. (266,239.45) .. controls (260,244.9) and (97.69,253.45) .. (73,244.45) .. controls (48.31,235.45) and (72.71,138.87) .. (84,109.62) .. controls (95.29,80.37) and (101.69,76.34) .. (113.08,61.84) .. controls (124.46,47.34) and (124.3,48.38) .. (128.77,44.23) .. controls (133.25,40.09) and (136.69,38) .. (138.19,39) -- cycle ;
\draw   (203,141) .. controls (212,141.98) and (232,169) .. (236,184) .. controls (240,199) and (247,192) .. (249,220) .. controls (251,248) and (221.33,236) .. (191,240) .. controls (160.67,244) and (156.75,217) .. (169.38,182) .. controls (182,147) and (194,140.02) .. (203,141) -- cycle ;
\draw  [fill={rgb, 255:red, 253; green, 249; blue, 249 }  ,fill opacity=1 ] (140.41,175.85) .. controls (140.41,173.72) and (142,172) .. (143.96,172) .. controls (145.92,172) and (147.51,173.72) .. (147.51,175.85) .. controls (147.51,177.98) and (145.92,179.7) .. (143.96,179.7) .. controls (142,179.7) and (140.41,177.98) .. (140.41,175.85) -- cycle ;
\draw  [fill={rgb, 255:red, 250; green, 242; blue, 242 }  ,fill opacity=1 ] (193.43,142.42) .. controls (193.43,139.33) and (195.79,136.83) .. (198.69,136.83) .. controls (201.6,136.83) and (203.96,139.33) .. (203.96,142.42) .. controls (203.96,145.5) and (201.6,148) .. (198.69,148) .. controls (195.79,148) and (193.43,145.5) .. (193.43,142.42) -- cycle ;
\draw  [fill={rgb, 255:red, 250; green, 242; blue, 242 }  ,fill opacity=1 ] (138.19,39) .. controls (138.19,34.39) and (141.78,30.65) .. (146.21,30.65) .. controls (150.65,30.65) and (154.24,34.39) .. (154.24,39) .. controls (154.24,43.61) and (150.65,47.35) .. (146.21,47.35) .. controls (141.78,47.35) and (138.19,43.61) .. (138.19,39) -- cycle ;
\draw    (152,45.62) .. controls (170.72,69.62) and (176.82,122.37) .. (175,186.08) ;
\draw [shift={(174.91,189)}, rotate = 271.84] [fill={rgb, 255:red, 0; green, 0; blue, 0 }  ][line width=0.08]  [draw opacity=0] (8.93,-4.29) -- (0,0) -- (8.93,4.29) -- cycle    ;
\draw    (154.24,39) .. controls (174.69,54.74) and (186.64,135.51) .. (192.69,186.68) ;
\draw [shift={(192.96,189)}, rotate = 262.87] [fill={rgb, 255:red, 0; green, 0; blue, 0 }  ][line width=0.08]  [draw opacity=0] (8.93,-4.29) -- (0,0) -- (8.93,4.29) -- cycle    ;
\draw  [fill={rgb, 255:red, 253; green, 249; blue, 249 }  ,fill opacity=1 ] (171.37,192.85) .. controls (171.37,190.72) and (172.95,189) .. (174.91,189) .. controls (176.87,189) and (178.46,190.72) .. (178.46,192.85) .. controls (178.46,194.98) and (176.87,196.7) .. (174.91,196.7) .. controls (172.95,196.7) and (171.37,194.98) .. (171.37,192.85) -- cycle ;
\draw  [fill={rgb, 255:red, 253; green, 249; blue, 249 }  ,fill opacity=1 ] (189.41,192.85) .. controls (189.41,190.72) and (191,189) .. (192.96,189) .. controls (194.92,189) and (196.51,190.72) .. (196.51,192.85) .. controls (196.51,194.98) and (194.92,196.7) .. (192.96,196.7) .. controls (191,196.7) and (189.41,194.98) .. (189.41,192.85) -- cycle ;
\draw   (136,144.98) .. controls (145,145.97) and (204,164.75) .. (211,173.75) .. controls (218,182.75) and (217,174.75) .. (219,202.75) .. controls (221,230.75) and (151.33,195.98) .. (121,199.98) .. controls (90.67,203.98) and (90,173.98) .. (98.84,154.85) .. controls (107.68,135.72) and (127,144) .. (136,144.98) -- cycle ;
\draw    (61,256.38) .. controls (63.85,250.68) and (73.02,241.9) .. (86.79,231.2) ;
\draw [shift={(89,229.5)}, rotate = 142.6] [fill={rgb, 255:red, 0; green, 0; blue, 0 }  ][line width=0.08]  [draw opacity=0] (8.93,-4.29) -- (0,0) -- (8.93,4.29) -- cycle    ;
\draw    (251,259.38) .. controls (248.11,250.7) and (253.59,237.9) .. (236.9,220.84) ;
\draw [shift={(235,218.97)}, rotate = 43.45] [fill={rgb, 255:red, 0; green, 0; blue, 0 }  ][line width=0.08]  [draw opacity=0] (8.93,-4.29) -- (0,0) -- (8.93,4.29) -- cycle    ;
\draw   (487,58.38) .. controls (516,59.38) and (550,130) .. (554,145) .. controls (558,160) and (565,205.38) .. (567,233.38) .. controls (569,261.38) and (508.33,244) .. (478,248) .. controls (447.67,252) and (468.38,200) .. (471,162.38) .. controls (473.62,124.77) and (458,57.38) .. (487,58.38) -- cycle ;
\draw   (429,60.38) .. controls (438,61.37) and (444,89) .. (448,104) .. controls (452,119) and (453,112.38) .. (455,140.38) .. controls (457,168.38) and (447.33,157.38) .. (417,161.38) .. controls (386.67,165.38) and (400.62,140.77) .. (401,101.38) .. controls (401.38,62) and (420,59.4) .. (429,60.38) -- cycle ;
\draw   (361.38,63.62) .. controls (370.38,64.6) and (380.38,92) .. (384.38,107) .. controls (388.38,122) and (389,117.38) .. (391,145.38) .. controls (393,173.38) and (379.71,158.62) .. (349.38,162.62) .. controls (319.04,166.62) and (323.38,135.38) .. (336,100.38) .. controls (348.62,65.38) and (352.38,62.63) .. (361.38,63.62) -- cycle ;
\draw  [fill={rgb, 255:red, 250; green, 242; blue, 242 }  ,fill opacity=1 ] (506.43,130.42) .. controls (506.43,127.33) and (508.79,124.83) .. (511.69,124.83) .. controls (514.6,124.83) and (516.96,127.33) .. (516.96,130.42) .. controls (516.96,133.5) and (514.6,136) .. (511.69,136) .. controls (508.79,136) and (506.43,133.5) .. (506.43,130.42) -- cycle ;
\draw  [fill={rgb, 255:red, 250; green, 242; blue, 242 }  ,fill opacity=1 ] (403.43,35.42) .. controls (403.43,32.33) and (405.79,29.83) .. (408.69,29.83) .. controls (411.6,29.83) and (413.96,32.33) .. (413.96,35.42) .. controls (413.96,38.5) and (411.6,41) .. (408.69,41) .. controls (405.79,41) and (403.43,38.5) .. (403.43,35.42) -- cycle ;
\draw    (403.43,35.42) -- (361.38,63.62) ;
\draw    (408.69,41) -- (419,61.38) ;
\draw    (480,60.48) -- (413.96,35.42) ;
\draw  [fill={rgb, 255:red, 250; green, 242; blue, 242 }  ,fill opacity=1 ] (223.43,196.42) .. controls (223.43,193.33) and (225.79,190.83) .. (228.69,190.83) .. controls (231.6,190.83) and (233.96,193.33) .. (233.96,196.42) .. controls (233.96,199.5) and (231.6,202) .. (228.69,202) .. controls (225.79,202) and (223.43,199.5) .. (223.43,196.42) -- cycle ;
\draw  [fill={rgb, 255:red, 250; green, 242; blue, 242 }  ,fill opacity=1 ] (528.19,227) .. controls (528.19,222.39) and (531.78,218.65) .. (536.21,218.65) .. controls (540.65,218.65) and (544.24,222.39) .. (544.24,227) .. controls (544.24,231.61) and (540.65,235.35) .. (536.21,235.35) .. controls (531.78,235.35) and (528.19,231.61) .. (528.19,227) -- cycle ;
\draw    (543,221.38) .. controls (554.5,204.09) and (552.33,196.29) .. (547.05,182.96) ;
\draw [shift={(546,180.33)}, rotate = 68.2] [fill={rgb, 255:red, 0; green, 0; blue, 0 }  ][line width=0.08]  [draw opacity=0] (8.93,-4.29) -- (0,0) -- (8.93,4.29) -- cycle    ;
\draw    (536.21,218.65) .. controls (536.71,206.33) and (537.92,200.82) .. (530.7,172.21) ;
\draw [shift={(530,169.48)}, rotate = 78.01] [fill={rgb, 255:red, 0; green, 0; blue, 0 }  ][line width=0.08]  [draw opacity=0] (8.93,-4.29) -- (0,0) -- (8.93,4.29) -- cycle    ;
\draw    (531,220.48) .. controls (527.69,191.81) and (517.32,158.5) .. (481.65,117.37) ;
\draw [shift={(480,115.48)}, rotate = 48.62] [fill={rgb, 255:red, 0; green, 0; blue, 0 }  ][line width=0.08]  [draw opacity=0] (8.93,-4.29) -- (0,0) -- (8.93,4.29) -- cycle    ;
\draw    (528.19,227) .. controls (515.69,195.69) and (469.88,122.49) .. (428.53,106.4) ;
\draw [shift={(426,105.48)}, rotate = 26.88] [fill={rgb, 255:red, 0; green, 0; blue, 0 }  ][line width=0.08]  [draw opacity=0] (8.93,-4.29) -- (0,0) -- (8.93,4.29) -- cycle    ;
\draw   (445,93.48) .. controls (454,94.47) and (489,80.48) .. (494,91.48) .. controls (499,102.48) and (500,94.38) .. (498,115.38) .. controls (496,136.38) and (445,132.58) .. (430,132.98) .. controls (415,133.38) and (400.16,115.62) .. (409,96.48) .. controls (417.84,77.35) and (436,92.5) .. (445,93.48) -- cycle ;
\draw  [fill={rgb, 255:red, 253; green, 249; blue, 249 }  ,fill opacity=1 ] (418.9,105.48) .. controls (418.9,103.36) and (420.49,101.63) .. (422.45,101.63) .. controls (424.41,101.63) and (426,103.36) .. (426,105.48) .. controls (426,107.61) and (424.41,109.33) .. (422.45,109.33) .. controls (420.49,109.33) and (418.9,107.61) .. (418.9,105.48) -- cycle ;
\draw  [fill={rgb, 255:red, 253; green, 249; blue, 249 }  ,fill opacity=1 ] (476.45,111.63) .. controls (476.45,109.51) and (478.04,107.78) .. (480,107.78) .. controls (481.96,107.78) and (483.55,109.51) .. (483.55,111.63) .. controls (483.55,113.76) and (481.96,115.48) .. (480,115.48) .. controls (478.04,115.48) and (476.45,113.76) .. (476.45,111.63) -- cycle ;
\draw  [fill={rgb, 255:red, 253; green, 249; blue, 249 }  ,fill opacity=1 ] (526.45,165.63) .. controls (526.45,163.51) and (528.04,161.78) .. (530,161.78) .. controls (531.96,161.78) and (533.55,163.51) .. (533.55,165.63) .. controls (533.55,167.76) and (531.96,169.48) .. (530,169.48) .. controls (528.04,169.48) and (526.45,167.76) .. (526.45,165.63) -- cycle ;
\draw  [fill={rgb, 255:red, 253; green, 249; blue, 249 }  ,fill opacity=1 ] (542.45,176.48) .. controls (542.45,174.36) and (544.04,172.63) .. (546,172.63) .. controls (547.96,172.63) and (549.55,174.36) .. (549.55,176.48) .. controls (549.55,178.61) and (547.96,180.33) .. (546,180.33) .. controls (544.04,180.33) and (542.45,178.61) .. (542.45,176.48) -- cycle ;
\draw    (437,215.48) .. controls (433.14,205.93) and (433,203.64) .. (441.99,181.92) ;
\draw [shift={(443,179.48)}, rotate = 112.62] [fill={rgb, 255:red, 0; green, 0; blue, 0 }  ][line width=0.08]  [draw opacity=0] (8.93,-4.29) -- (0,0) -- (8.93,4.29) -- cycle    ;
\draw    (520,272.48) .. controls (514.21,258.97) and (518.66,264.09) .. (517.19,239.3) ;
\draw [shift={(517,236.48)}, rotate = 85.91] [fill={rgb, 255:red, 0; green, 0; blue, 0 }  ][line width=0.08]  [draw opacity=0] (8.93,-4.29) -- (0,0) -- (8.93,4.29) -- cycle    ;
\draw   (534,130) .. controls (543,131) and (544,130) .. (548,147) .. controls (552,164) and (554,179.48) .. (554,182.48) .. controls (554,185.48) and (560,195.08) .. (545,195.48) .. controls (530,195.88) and (512,145) .. (518,140) .. controls (524,135) and (525,129) .. (534,130) -- cycle ;
\draw  [dash pattern={on 4.5pt off 4.5pt}] (561,143.83) .. controls (561,155.83) and (574,222.38) .. (567,233.38) .. controls (560,244.38) and (561.44,197.74) .. (558,187.83) .. controls (554.56,177.93) and (551,144.45) .. (547,135) .. controls (543,125.55) and (536.27,130.76) .. (527,126.83) .. controls (517.73,122.91) and (507.68,123.26) .. (506.43,130.42) .. controls (505.18,137.57) and (528.33,241.83) .. (498,245.83) .. controls (467.67,249.83) and (455,254.83) .. (458,195.83) .. controls (461,136.83) and (327,189.83) .. (323,171.83) .. controls (319,153.83) and (321.22,81) .. (341.61,61) .. controls (362,41) and (371.75,37.26) .. (382,26) .. controls (392.25,14.74) and (389,12) .. (404,8) .. controls (419,4) and (437,14.75) .. (463.25,26.13) .. controls (489.5,37.5) and (489.69,38.16) .. (501.03,44.14) .. controls (512.38,50.13) and (561,131.83) .. (561,143.83) -- cycle ;

\draw (140.31,6) node [anchor=north west][inner sep=0.75pt]   [align=left] {$\displaystyle  \begin{array}{{>{\displaystyle}l}}
u\\
\end{array}$};
\draw (190.31,117) node [anchor=north west][inner sep=0.75pt]   [align=left] {$\displaystyle  \begin{array}{{>{\displaystyle}l}}
v\\
\end{array}$};
\draw (192.96,189) node [anchor=north west][inner sep=0.75pt]   [align=left] {$\displaystyle S_{x}$};
\draw (96.96,166) node [anchor=north west][inner sep=0.75pt]   [align=left] {$\displaystyle \overline{S_{x}}$};
\draw (239,256) node [anchor=north west][inner sep=0.75pt]   [align=left] {$\displaystyle X$};
\draw (46,255) node [anchor=north west][inner sep=0.75pt]   [align=left] {$\displaystyle \overline{X}$};
\draw (500.31,100) node [anchor=north west][inner sep=0.75pt]   [align=left] {$\displaystyle  \begin{array}{{>{\displaystyle}l}}
v\\
\end{array}$};
\draw (400.31,3) node [anchor=north west][inner sep=0.75pt]   [align=left] {$\displaystyle  \begin{array}{{>{\displaystyle}l}}
z\\
\end{array}$};
\draw (220.31,177) node [anchor=north west][inner sep=0.75pt]   [align=left] {$z$};
\draw (535.31,211) node [anchor=north west][inner sep=0.75pt]   [align=left] {$\displaystyle  \begin{array}{{>{\displaystyle}l}}
u\\
\end{array}$};
\draw (428,214) node [anchor=north west][inner sep=0.75pt]   [align=left] {$\displaystyle X$};
\draw (513,270) node [anchor=north west][inner sep=0.75pt]   [align=left] {$\displaystyle \overline{X}$};
\draw (415.45,109.33) node [anchor=north west][inner sep=0.75pt]   [align=left] {$\displaystyle S_{x}$};
\draw (524,132.83) node [anchor=north west][inner sep=0.75pt]   [align=left] {$\displaystyle \overline{S}_{x}$};

\end{tikzpicture}

\end{center}

\end{proof}

\begin{lemma}
\label{lemm:closedbyintersection}
Let $v \in V(T)$, let $S \subseteq V_1(G')$ be a $(\Delta_0,m,T(v))-$strategy and let $m_i = |A(S,\Delta_0,T(v_i))|$. Then $S \cap T(v_i)$ is a $(\Delta_0,m_i,T(v_i))-$strategy.
\end{lemma}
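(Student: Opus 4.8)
The plan is to obtain the statement as an almost immediate corollary of Lemma~\ref{lemm:crucial}, applied to the meta-tree node $v_i$ rather than to $v$, and then to unwind the two clauses in the definition of a $(\Delta_0,m_i,T(v_i))$-strategy. Throughout I use the convention of the remark preceding Section~3.2, so the immunisation value is taken to be $1$.

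First I would record the inputs. Since $v_i$ is a child of $v$ in the rooted meta-tree $T$, we have $v_i\in V(T)$ and $T(v_i)\subseteq T(v)$; moreover $S\cap T(v_i)\subseteq S\subseteq V_1(G')$, so $S\cap T(v_i)$ is a legitimate link-strategy in the third level of abstraction. Applying Lemma~\ref{lemm:crucial} with the link-strategy $S$ and the node $v_i$ then gives
\[
\Delta(z,S)=\Delta\bigl(z,S\cap T(v_i)\bigr)\qquad\text{for every }z\in\mathcal{U}[G']\cap T(v_i),
\]
i.e., deleting from $S$ the endpoints lying outside $T(v_i)$ leaves unchanged the delta value of every meta-node inside $T(v_i)$.

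Next I would verify the two defining conditions. For clause~(i): from $T(v_i)\subseteq T(v)$ we get $\mathcal{U}[G']\cap T(v_i)\subseteq\mathcal{U}[G']\cap T(v)$, and since $(S,i)$ is a $(\Delta_0,m,T(v))$-strategy we have $\Delta(z,S)\geq\Delta_0$ for every $z$ in this smaller set; combining with the displayed equality yields $\Delta(z,S\cap T(v_i))\geq\Delta_0$ for every $z\in\mathcal{U}[G']\cap T(v_i)$. For clause~(ii): the displayed equality shows that a meta-node $z\in\mathcal{U}[T(v_i)]$ has $\Delta(z,S\cap T(v_i))=\Delta_0$ if and only if it has $\Delta(z,S)=\Delta_0$, so the set of meta-nodes of $\mathcal{U}[T(v_i)]$ with delta value $\Delta_0$ with respect to $S\cap T(v_i)$ is exactly $A(S,\Delta_0,T(v_i))$, which has $m_i$ elements by definition. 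Clauses~(i) and~(ii) together are precisely the assertion that $S\cap T(v_i)$ is a $(\Delta_0,m_i,T(v_i))$-strategy.

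I do not expect a genuine obstacle here: the substantive work is entirely carried by Lemma~\ref{lemm:crucial}, and what remains is bookkeeping. The one point worth a sentence of care is the applicability of that lemma to $v_i$ and its trivial case $v_i=W(u,S)$: because $T$ is rooted at $W(u,S)$ and $v_i$ is a child of $v$, the node $v_i$ is a strict descendant of the root, hence $v_i\neq W(u,S)$, which is exactly the non-trivial branch of Lemma~\ref{lemm:crucial} whose conclusion we invoked (and if $v$ has no children the statement is vacuous).
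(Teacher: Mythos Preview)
Your proof is correct and follows essentially the same approach as the paper: apply Lemma~\ref{lemm:crucial} at $v_i$ to obtain $\Delta(z,S)=\Delta(z,S\cap T(v_i))$ for all $z\in\mathcal{U}[G']\cap T(v_i)$, then read off clauses~(i) and~(ii) of the definition. Your final remark about $v_i\neq W(u,S)$ is true but unnecessary, since Lemma~\ref{lemm:crucial} already covers the trivial case $v=W(u,S)$ as well.
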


\begin{proof} First, by Lemma \ref{lemm:crucial}, $\Delta(z,S) = \Delta(z,S \cap T(v_i))$ for all $z \in \mathcal{U}[G'] \cap T(v_i)$. Since $S$ is a $(\Delta_0,m,T(v))-$strategy then $\Delta(z,S) \geq \Delta_0$ and, from here, $\Delta(z,S \cap T(v_i)) = \Delta(z,S) \geq \Delta_0$ for every $z \in \mathcal{U}[G'] \cap T(v_i)$. 

Moreover, by definition, $m_i = |A(S,\Delta_0,T(v_i))|$. Now the conclusion is clear.

\begin{center}

\tikzset{every picture/.style={line width=0.75pt}} 

\begin{tikzpicture}[x=0.75pt,y=0.75pt,yscale=-0.75,xscale=0.75]

\draw   (320.19,48) .. controls (321.69,49) and (323,49) .. (328.21,48) .. controls (333.43,47) and (331,38.98) .. (351,49.98) .. controls (371,60.98) and (402,138) .. (429,174) .. controls (456,210) and (454,243) .. (448,248.45) .. controls (442,253.9) and (279.69,262.45) .. (255,253.45) .. controls (230.31,244.45) and (254.71,147.87) .. (266,118.62) .. controls (277.29,89.37) and (283.69,85.34) .. (295.08,70.84) .. controls (306.46,56.34) and (306.3,57.38) .. (310.77,53.23) .. controls (315.25,49.09) and (318.69,47) .. (320.19,48) -- cycle ;
\draw   (385,150) .. controls (394,150.98) and (414,178) .. (418,193) .. controls (422,208) and (429,201) .. (431,229) .. controls (433,257) and (403.33,245) .. (373,249) .. controls (342.67,253) and (338.75,226) .. (351.38,191) .. controls (364,156) and (376,149.02) .. (385,150) -- cycle ;
\draw  [fill={rgb, 255:red, 250; green, 242; blue, 242 }  ,fill opacity=1 ] (375.43,151.42) .. controls (375.43,148.33) and (377.79,145.83) .. (380.69,145.83) .. controls (383.6,145.83) and (385.96,148.33) .. (385.96,151.42) .. controls (385.96,154.5) and (383.6,157) .. (380.69,157) .. controls (377.79,157) and (375.43,154.5) .. (375.43,151.42) -- cycle ;
\draw  [fill={rgb, 255:red, 250; green, 242; blue, 242 }  ,fill opacity=1 ] (320.19,48) .. controls (320.19,43.39) and (323.78,39.65) .. (328.21,39.65) .. controls (332.65,39.65) and (336.24,43.39) .. (336.24,48) .. controls (336.24,52.61) and (332.65,56.35) .. (328.21,56.35) .. controls (323.78,56.35) and (320.19,52.61) .. (320.19,48) -- cycle ;
\draw   (318,153.98) .. controls (327,154.97) and (386,173.75) .. (393,182.75) .. controls (400,191.75) and (399,183.75) .. (401,211.75) .. controls (403,239.75) and (333.33,204.98) .. (303,208.98) .. controls (272.67,212.98) and (272,182.98) .. (280.84,163.85) .. controls (289.68,144.72) and (309,153) .. (318,153.98) -- cycle ;
\draw    (312,265.82) .. controls (314.85,260.12) and (320.41,247.33) .. (333.82,236.24) ;
\draw [shift={(336,234.5)}, rotate = 142.6] [fill={rgb, 255:red, 0; green, 0; blue, 0 }  ][line width=0.08]  [draw opacity=0] (8.93,-4.29) -- (0,0) -- (8.93,4.29) -- cycle    ;
\draw    (433,268.38) .. controls (430.11,259.7) and (435.59,246.9) .. (418.9,229.84) ;
\draw [shift={(417,227.97)}, rotate = 43.45] [fill={rgb, 255:red, 0; green, 0; blue, 0 }  ][line width=0.08]  [draw opacity=0] (8.93,-4.29) -- (0,0) -- (8.93,4.29) -- cycle    ;
\draw   (373.14,125.2) .. controls (385.43,126.45) and (412.75,160.75) .. (418.21,179.8) .. controls (423.67,198.84) and (433.23,189.95) .. (435.96,225.5) .. controls (438.7,261.05) and (398.42,248.74) .. (357,253.82) .. controls (315.58,258.9) and (309.98,221.69) .. (327.22,177.26) .. controls (344.46,132.82) and (360.85,123.95) .. (373.14,125.2) -- cycle ;
\draw  [fill={rgb, 255:red, 250; green, 242; blue, 242 }  ,fill opacity=1 ] (360.07,127.86) .. controls (360.07,123.47) and (363.41,119.91) .. (367.53,119.91) .. controls (371.66,119.91) and (375,123.47) .. (375,127.86) .. controls (375,132.26) and (371.66,135.82) .. (367.53,135.82) .. controls (363.41,135.82) and (360.07,132.26) .. (360.07,127.86) -- cycle ;
\draw    (371,133.82) -- (378,147.12) ;
\draw    (248,154.82) .. controls (251.78,164.27) and (258.24,168.36) .. (273.28,171.31) ;
\draw [shift={(276,171.82)}, rotate = 190.01] [fill={rgb, 255:red, 0; green, 0; blue, 0 }  ][line width=0.08]  [draw opacity=0] (8.93,-4.29) -- (0,0) -- (8.93,4.29) -- cycle    ;
\draw    (458,179.82) .. controls (441.77,204.65) and (404.55,202.12) .. (384.69,200.97) ;
\draw [shift={(382,200.82)}, rotate = 3.01] [fill={rgb, 255:red, 0; green, 0; blue, 0 }  ][line width=0.08]  [draw opacity=0] (8.93,-4.29) -- (0,0) -- (8.93,4.29) -- cycle    ;

\draw (324.31,18) node [anchor=north west][inner sep=0.75pt]   [align=left] {$\displaystyle  \begin{array}{{>{\displaystyle}l}}
u\\
\end{array}$};
\draw (347.14,132.2) node [anchor=north west][inner sep=0.75pt]   [align=left] {$\displaystyle  \begin{array}{{>{\displaystyle}l}}
v_{i}\\
\end{array}$};
\draw (235.96,134) node [anchor=north west][inner sep=0.75pt]   [align=left] {$\displaystyle S$};
\draw (423,268) node [anchor=north west][inner sep=0.75pt]   [align=left] {$\displaystyle T( v_{i})$};
\draw (295,269) node [anchor=north west][inner sep=0.75pt]   [align=left] {$\displaystyle T( v)$};
\draw (340.84,101.53) node [anchor=north west][inner sep=0.75pt]   [align=left] {$\displaystyle  \begin{array}{{>{\displaystyle}l}}
v\\
\end{array}$};
\draw (436,156) node [anchor=north west][inner sep=0.75pt]   [align=left] {$\displaystyle S\cap T( v_{i})$};

\end{tikzpicture}

\end{center}

\end{proof}

\begin{lemma}
\label{lemm:crucial2}
Let $S \subseteq V_1(G')$ be a $(\Delta_0,m,T(v))-$strategy and suppose that $a   \in A(S,\Delta_0,T(v_i))$. Then $CC_u(a,S) = CC_u(a,S\cap T(v_i))$. 
\end{lemma}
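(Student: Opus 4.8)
The plan is to mimic the articulation‑point argument used for Lemma~\ref{lemm:crucial}. Write $S_{v_i}=S\cap T(v_i)$ and $\overline{S_{v_i}}=S\setminus T(v_i)=S\cap \overline{T(v_i)}$, so that $G(S)$ is obtained from $G(S_{v_i})$ by adding exactly the links of $\overline{S_{v_i}}$; each such link joins $u$ to some immunised meta-node lying inside a $V_1$-component of $\overline{T(v_i)}$ (recall we have fixed the immunisation value to $1$). The first step is to record that the attack on $a$ deletes the same set of nodes in both networks: since the links of $\overline{S_{v_i}}$ are incident only to immunised nodes, the subgraph induced on the vulnerable nodes is unchanged, hence $W:=W(a,S)=W(a,S_{v_i})$, and $W$ is precisely the vulnerable meta-node $a$, which belongs to the region of $T(v_i)$. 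The only way $W$ can meet $\overline{T(v_i)}$ at all is when $v_i\in V_{\geq 2}$ and $a=v_i$, in which case $a$ is exactly the gateway meta-node shared by $T(v_i)$ and $\overline{T(v_i)}$.

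One inclusion is immediate: $G(S)\setminus W$ contains every edge of $G(S_{v_i})\setminus W$, so $CC_u(a,S_{v_i})\subseteq CC_u(a,S)$. For the reverse inclusion I would show that adding the links of $\overline{S_{v_i}}$ cannot enlarge the connected component of $u$. Every endpoint of such a link is $u$ itself or a node of an immunised meta-node of $\overline{T(v_i)}$, and none of these endpoints is in $W$ (the targeted meta-nodes are immunised while $a$ is vulnerable). Hence it suffices to verify that all of them already lie in $CC_u(a,S_{v_i})$, and for this it is enough to check that, after deleting $W$, the meta-nodes of $\overline{T(v_i)}$ that survive stay connected to $W(u,S_{v_i})$ in $G'(S_{v_i})$; equivalently, since no link of $S_{v_i}$ has both endpoints in $\overline{T(v_i)}$, that the meta-nodes of $\overline{T(v_i)}$ remain connected in $G'(\emptyset)$ after removing $W$.

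This connectivity claim is the heart of the matter and the step I expect to require the most care, because a $V_{\geq 2}$-meta-node is simultaneously a meta-tree node and a member of several $V_1$-components, so one must keep track of which meta-nodes the set $W$ actually removes from $\overline{T(v_i)}$. If $a$ lies strictly below the gateway, then $W$ removes no meta-node of $\overline{T(v_i)}$ and the claim is trivial; this covers in particular the whole case $v_i\in V_1$, since then the gateway is the parent meta-node $v\in\mathcal{U}_{\geq 2}$, whose meta-tree node is $v$ itself and hence is not in $T(v_i)$, so $v\notin\mathcal{U}[T(v_i)]$ and $a\neq v$. The remaining case is $v_i\in V_{\geq 2}$ and $a=v_i$: here $\overline{T(v_i)}$ contains the parent $V_1$-component $C$ of $v_i$, and $C\setminus\{v_i\}$ is connected because $C$ is a $2_{\mathcal{I}}$-vertex-connected component and $v_i$ is vulnerable, while $C\setminus\{v_i\}$ still reaches the rest of $\overline{T(v_i)}$ through the parent meta-node of $C$ (a meta-node different from $v_i$); thus $\overline{T(v_i)}$ stays connected after deleting $W$. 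In all cases we conclude that the links of $\overline{S_{v_i}}$ leave the component of $u$ in $G(S_{v_i})\setminus W$ unchanged, which gives $CC_u(a,S)=CC_u(a,S_{v_i})$, as wanted.
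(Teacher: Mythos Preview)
Your proposal is correct and follows essentially the same approach as the paper: both arguments hinge on the observation that every endpoint of a link in $S\setminus T(v_i)$ lies in $\overline{T(v_i)}$, which remains connected to $u$ after the vulnerable meta-node $a\in T(v_i)$ is removed, so these links cannot change $CC_u(a,\cdot)$. The paper compresses this into two sentences (``$v_i$ is on the path connecting $a$ with $W(u,S)$, therefore all the endpoints of $S$ that do not belong to $T(v_i)$ are contained inside the connected component in which $W(u,S)$ belongs when removing $a$''), whereas you spell out the inclusion argument and explicitly verify the gateway case $a=v_i$ with $v_i\in V_{\geq 2}(G')$ via $2_{\mathcal I}$-vertex-connectivity of the parent component; this extra care is justified and fills in a step the paper leaves implicit.
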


\begin{proof} $T$ is rooted on $W(u,S)$ and $v_i$ is on the path connecting $a$ with $W(u,S)$, because we are assuming $a \in A(S,\Delta_0,T(v_i))$. Therefore, all the endpoints of $S$ that do not belong to $T(v_i)$ are contained inside the connected component in which $W(u,S)$ belongs when removing $a$. From here the conclusion.

\end{proof}

For a given node $v \in V_1(G')$, any vulnerable meta-node $w\in \mathcal{U}_1(G')$ from $v$ has a delta value independent of the strategy $S$. Let $\widetilde{m}(v,\Delta_0)$ be defined the number of meta-nodes from $\mathcal{U}_1(G')$ contained in $v$ having delta value equal to $\Delta_0$ if $v \in V_1(G')$ and $0$ otherwise. Then, the following formulae will be useful later:

\begin{proposition}
\label{lemm:restrictedutility}
Let $v \in V(T)$,  $S_i$ be $(\Delta_0,m_i,T(v_i))-$strategies for every $i =1,...,k(v)$ and let $S = \cup_{j=1}^{k(v)}S_j$. If $v\in V_{\geq 2}(G') \cap A(S,\Delta_0,T(v))$: 
\begin{equation*}U(u,S,\Delta_0,|A|,T(v)) = \sum_{j=1}^{k(v)} U(u,S_j,\Delta_0,|A|,T(v_j)) + \frac{1}{|A|} \left(|\overline{T(v)}|+\sum_{S_j \neq \emptyset} |T(v_j)| \right)\end{equation*}
Otherwise, 
\begin{equation*}U(u,S,\Delta_0,|A|,T(v)) = \sum_{j=1}^{k(v)} U(u,S_j,\Delta_0,|A|,T(v_j)) + \frac{1}{|A|} \sqrt{ \Delta_0} \cdot \widetilde{m}(v,\Delta_0)\end{equation*}
\end{proposition}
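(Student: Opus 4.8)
The plan is to split the restricted attack set $A(S,\Delta_0,T(v))$ along the decomposition $T(v)=\{v\}\cup T_1(v)\cup\dots\cup T_{k(v)}(v)$, handle the part coming from the subtrees with Lemmas~\ref{lemm:crucial} and~\ref{lemm:crucial2}, and then compute directly the contribution of the vulnerable meta-nodes lying inside $v$ itself. I first record the bookkeeping: since each $S_j$ is supported on $T(v_j)$ and the subtrees $T(v_1),\dots,T(v_{k(v)})$ are pairwise disjoint and none of them contains $v$, the sets $S_j$ are pairwise disjoint, $S\cap T(v_j)=S_j$, $v\notin S$ (so no link cost is attached to $v$), and $|S|=\sum_{j=1}^{k(v)}|S_j|$; moreover $S$ is a $(\Delta_0,m,T(v))$-strategy for the appropriate $m$, so $U(u,S,\Delta_0,|A|,T(v))$ is well defined. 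Writing $A_v$ for the set of vulnerable meta-nodes contained in $v$ whose delta value equals $\Delta_0$, the decomposition of $\mathcal{U}[T(v)]$ gives
\[ A(S,\Delta_0,T(v)) \;=\; A_v \;\cup\; \bigcup_{j=1}^{k(v)} A(S,\Delta_0,T(v_j)), \]
a disjoint union.

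Next I use Lemma~\ref{lemm:crucial} to get $\Delta(z,S)=\Delta(z,S_j)$ for $z\in\mathcal{U}[G']\cap T(v_j)$, hence $A(S,\Delta_0,T(v_j))=A(S_j,\Delta_0,T(v_j))$, and Lemma~\ref{lemm:crucial2} to get $CC_u(a,S)=CC_u(a,S_j)$ for $a\in A(S_j,\Delta_0,T(v_j))$. Plugging these into the definition of the restricted utility and separating the $A_v$ part from the subtree parts yields
\[ U(u,S,\Delta_0,|A|,T(v)) = -\alpha|S| + \frac{1}{|A|}\sum_{a\in A_v}|CC_u(a,S)| + \frac{1}{|A|}\sum_{j=1}^{k(v)}\sum_{a\in A(S_j,\Delta_0,T(v_j))}|CC_u(a,S_j)|. \]
Rewriting each inner sum as $|A|\bigl(U(u,S_j,\Delta_0,|A|,T(v_j))+\alpha|S_j|\bigr)$ and using $|S|=\sum_j|S_j|$, the $\alpha$-terms cancel and one is left with $\sum_{j=1}^{k(v)}U(u,S_j,\Delta_0,|A|,T(v_j))+\frac{1}{|A|}\sum_{a\in A_v}|CC_u(a,S)|$. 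So everything reduces to identifying the local term $\sum_{a\in A_v}|CC_u(a,S)|$ in the two cases.

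If $v\in V_{\geq 2}(G')\cap A(S,\Delta_0,T(v))$ then $v$ is a single vulnerable meta-node with $\Delta(v,S)=\Delta_0$, so $A_v=\{v\}$ and the local term is $|CC_u(v,S)|$. The point is a connectivity analysis on the meta-tree: since $T$ is rooted at $W(u,S)\in\overline{T(v)}$, deleting the meta-node $v$ from $G(S)$ leaves $\overline{T(v)}$ connected and containing $u$, and each subtree $T(v_j)$ becomes a separate region; by Corollary~\ref{corol:simplification2} every link of $S$ emanates from $u$'s immunised meta-node and points to an immunised meta-node of some $V_1$-component, and distinct subtrees are joined to the rest of $T$ only through $v$, so $T(v_j)$ is reattached to $u$'s component exactly when it contains a link of $S$, i.e. exactly when $S_j\neq\emptyset$. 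Hence $|CC_u(v,S)|=|\overline{T(v)}|+\sum_{S_j\neq\emptyset}|T(v_j)|$, which is the first formula. In the complementary case, if $v\in V_{\geq 2}(G')$ but $\Delta(v,S)>\Delta_0$ then $A_v=\emptyset$ and $\widetilde{m}(v,\Delta_0)=0$, so both sides carry no local term; and if $v\in V_1(G')$ then the vulnerable meta-nodes inside $v$ are exactly the elements of $\mathcal{U}_1(G')$ inside $v$, and each such $w$, not being an articulation point of $G'$, satisfies $\Delta(w,S)=(n-|w|)^2$ regardless of $S$ and $|CC_u(w,S)|=n-|w|$; for $w\in A_v$ this forces $n-|w|=\sqrt{\Delta_0}$, so the local term is $\sqrt{\Delta_0}\cdot|A_v|=\sqrt{\Delta_0}\cdot\widetilde{m}(v,\Delta_0)$, which is the second formula.

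I expect the main obstacle to be exactly that connectivity analysis in the case $v\in V_{\geq 2}(G')\cap A$: one must argue carefully, using the tree structure of $T$ rooted at $W(u,S)$ and the fact that all links of $S$ leave $u$'s meta-node, that after deleting $v$ the component of $u$ is precisely $\overline{T(v)}\cup\bigcup_{S_j\neq\emptyset}T(v_j)$ and nothing more, and keeping the size bookkeeping (how the shared articulation meta-nodes are counted across $\overline{T(v)}$, $v$, and the $T(v_j)$) consistent throughout.
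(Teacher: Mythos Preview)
Your proof is correct and follows essentially the same approach as the paper's: decompose $A(S,\Delta_0,T(v))$ along the subtrees, invoke Lemmas~\ref{lemm:crucial} and~\ref{lemm:crucial2} to reduce the subtree terms to the $U(u,S_j,\Delta_0,|A|,T(v_j))$, and compute the local contribution from the meta-nodes inside $v$ directly. Your writeup is in fact more explicit than the paper's in justifying the connectivity claim $CC_u(v,S)=\overline{T(v)}\cup\bigcup_{S_j\neq\emptyset}T(v_j)$ and the identity $|CC_u(w,S)|=n-|w|=\sqrt{\Delta_0}$ for $w\in\mathcal{U}_1(G')\cap A_v$, which the paper simply states.
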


\begin{proof} On the one hand, if $a \in A(S,\Delta_0,T(v_i))$ then we have $CC_u(a,S) = CC_u(a,S_i)$ using Lemma \ref{lemm:crucial2}.  On the other hand:

(i) If $v \in V_{\geq 2}(G')$ and $a=v \in A(S,\Delta_0,T(v))$, then $CC_u(a,S) = \overline{T(v)} \cup \left( \cup_{S_j \neq \emptyset} T(v_j)\right)$. 

(ii) If $v\in V_1(G')$, $a \in \mathcal{U}_1(G')$ with $a \in v$ and $a \in A(S,\Delta_0,T(v))$ as well, then $|CC_u(a,S)| = (n-|a|) =  \sqrt{ \Delta_0}$.

\vskip 5pt

Therefore, suppose first that $v\in V_{\geq 2}(G') \cap A(S,\Delta_0,T(v))$. Then:

\begin{gather*}U(u,S,\Delta_0,|A|,T(v)) =  -\alpha |S| + \frac{1}{|A|}\sum_{a \in A(S,\Delta_0,T(v))}|CC_u(a,S)|=\\
=-\alpha \sum_{i=1}^{k(v)}|S_i|+\frac{1}{|A|}\left(\sum_{i=1}^{k(v)}\sum_{a \in A(S,\Delta_0,T(v_i))}|CC_u(a,S)|\right)+\frac{1}{|A|}|CC_u(v,S)|=\\
=-\alpha \sum_{i=1}^{k(v)}|S_i|+\frac{1}{|A|}\left(\sum_{i=1}^{k(v)}\sum_{a \in A(S,\Delta_0,T(v_i))}|CC_u(a,S_i)|\right)+\frac{1}{|A|}|CC_u(v,S)|=\\
=\sum_{j=1}^{k(v)} U(u,S_j,\Delta_0,|A|,T(v_j)) + \frac{1}{|A|} \left(|\overline{T(v)}|+\sum_{S_j \neq \emptyset} |T(v_j)| \right)\end{gather*}

Otherwise:

\begin{gather*}U(u,S,\Delta_0,|A|,T(v)) =   -\alpha |S| + \frac{1}{|A|}\sum_{a \in A(S,\Delta_0,T(v))}|CC_u(a,S)|=\\
=-\alpha \sum_{i=1}^{k(v)}|S_i|+\frac{1}{|A|}\left(\sum_{i=1}^{k(v)}\sum_{a \in A(S,\Delta_0,T(v_i))}|CC_u(a,S)|\right)+\sum_{a \in \mathcal{U}[v] \land \Delta(a,S) = \Delta_0}\frac{1}{|A|}|CC_u(a,S)|=\\
=-\alpha \sum_{i=1}^{k(v)}|S_i|+\frac{1}{|A|}\left(\sum_{i=1}^{k(v)}\sum_{a \in A(S,\Delta_0,T(v_i))}|CC_u(a,S_i)|\right)+\frac{1}{|A|} \sqrt{\Delta_0} \cdot \widetilde{m}(v,\Delta_0)=\\
=\sum_{j=1}^{k(v)} U(u,S_j,\Delta_0,|A|,T(v_j)) + \frac{1}{|A|}  \sqrt{\Delta_0} \cdot \widetilde{m}(v,\Delta_0) \end{gather*}

\end{proof}

\section{Restricted BR-Strategies: A Bottom-Up Characterisation}

In this section we provide the key results 
in order to design a polynomial time algorithm
that computes  a best response strategy for a given player $u$. 
We introduce the concept of  \emph{restricted BR-strategy} which corresponds somehow to the natural concept of a restricted strategy having the maximum possible restricted utility among all such restricted strategies on a certain sub-tree $T(v)$ from the meta-tree. We show that this key concept has good properties and it allows us to exploit the structure of the meta-tree in an efficient way. We prove that the best response problem on the whole graph we can be  solved
computing the restricted best response problems  for each subtree in a bottom-to-top approach.

\subsection{Restricted BR-strategies on the subtrees}

Recall the definition of restricted strategy. Since we refer to the subtrees $T(v_i)$ it will be convenient to extend such definition as follows: 

\begin{definition}
We say that a strategy  $S \subseteq V_1(G')$ for $u$  is a $(\Delta_0,m,\overline{m},T(v))$-strategy with $\overline{m}=[m_1,...,m_{k(v)}]$ iff $S$ is a $(\Delta_0,m,T(v))$-strategy and $m_i = |A(S,\Delta_0,T(v_i))|$ for each $i=1,...,k(v)$.\end{definition}

Let us define the restricted version of a BR-strategy.

\begin{definition}
We say that a strategy  $S \subseteq V_1(G')$  is a $(\Delta_0,|A|,m,T(v))$-BR-strategy iff $S$ is a $(\Delta_0,m,T(v))$-strategy of maximum restricted $(\Delta_0,|A|,T(v))$-utility among all such $(\Delta_0,m,T(v))$-strategies.
\end{definition}

\begin{definition}
We say that a strategy $S \subseteq V_1(G')$  is a $(\Delta_0,|A|,m,\overline{m},T(v))$-BR-strategy with $\overline{m}=[m_1,...,m_{k(v)}]$ iff $S$ is a $(\Delta_0,m,\overline{m},T(v))$-strategy of maximum restricted $(\Delta_0,|A|,T(v))$-utility over all such $(\Delta_0,m,\overline{m},T(v))$-strategies.
\end{definition}

\begin{definition}
Let $S \subseteq T(v)$ be a strategy for $u$. We say that a pair of meta-tree vertices $(a,b)$ is a $(S,\Delta_0,T(v))$-blocking pair iff $a \in A(S,\Delta_0,T(v))$, $b \in S$ and $b \not \in CC_u(a,\emptyset)$. 
\end{definition}

Notice that for a $(S,\Delta_0,T(v))$-blocking pair $(a,b)$ since $b \in \mathcal{I}(G')$ and $a \in \mathcal{U}(G')$ it follows that we always have $a \neq b$.

\begin{lemma}
\label{lemm:sandwich}
Let $S \subseteq T(v_i)$. Suppose that $\emptyset$ and $S\neq \emptyset$ are $(\Delta_0,m_i,T(v_i))$-strategies. Then, $A(\emptyset, \Delta_0,T(v_i)) = A(S,\Delta_0,T(v_i))$. 
\end{lemma}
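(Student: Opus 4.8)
The plan is to prove the set equality $A(\emptyset,\Delta_0,T(v_i))=A(S,\Delta_0,T(v_i))$ by establishing a one-sided inclusion together with a cardinality argument, the inclusion being a consequence of the fact that passing from the empty strategy to $S$ can only increase the delta values of the vulnerable meta-nodes of $T(v_i)$.

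First I would observe that, by Theorem~\ref{thm:1}, every endpoint of $S$ points to an immunised meta-node, and $u$ itself is immunised (recall that throughout these sections the immunisation value is fixed to $1$). Hence every edge present in $G(S)$ but not in $G(\emptyset)$ joins two immunised nodes. Fix a vulnerable meta-node $z\in\mathcal{U}[G']\cap T(v_i)$ and let $R_z$ denote the set of nodes deleted when forming $\Delta(z,\cdot)$, i.e. $z$ together with all vulnerable nodes joined to $z$ by a path of vulnerable nodes. I would argue that $R_z$ is the same in $G(\emptyset)$ and in $G(S)$: the newly added edges have immunised endpoints, so they cannot belong to a vulnerable path and cannot merge $z$ (a maximal connected set of vulnerable nodes) with anything else, so exactly the same vertices are deleted in both graphs. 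Moreover none of these added edges is incident to $R_z$, so $G(S)\setminus R_z$ has the same vertex set as $G(\emptyset)\setminus R_z$ and a superset of its edges; therefore its connected components are obtained by merging some of those of $G(\emptyset)\setminus R_z$. Since $(\sum_j a_j)^2\ge\sum_j a_j^2$ for non-negative reals, this yields $\Delta(z,\emptyset)\le\Delta(z,S)$ for every $z\in\mathcal{U}[G']\cap T(v_i)$.

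Next I would combine this monotonicity with the hypotheses. Let $z\in A(S,\Delta_0,T(v_i))$, that is, $z$ is a vulnerable meta-node of $T(v_i)$ with $\Delta(z,S)=\Delta_0$. The inequality above gives $\Delta(z,\emptyset)\le\Delta_0$, while condition (i) of the definition of a $(\Delta_0,m_i,T(v_i))$-strategy, applied to $\emptyset$, gives $\Delta(z,\emptyset)\ge\Delta_0$; hence $\Delta(z,\emptyset)=\Delta_0$ and $z\in A(\emptyset,\Delta_0,T(v_i))$. This proves $A(S,\Delta_0,T(v_i))\subseteq A(\emptyset,\Delta_0,T(v_i))$. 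Finally, condition (ii) of the definition of a $(\Delta_0,m_i,T(v_i))$-strategy, applied to both $\emptyset$ and $S$, says exactly that $|A(\emptyset,\Delta_0,T(v_i))|=m_i=|A(S,\Delta_0,T(v_i))|$; since these are finite sets with one contained in the other, they coincide.

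The only delicate point is the monotonicity step, and in particular the claim that $R_z$ is unchanged: one must check carefully that adding immunised–immunised edges neither extends a vulnerable path nor alters the underlying node set of a vulnerable meta-node, so that precisely the same vertices are removed in $G(\emptyset)$ and in $G(S)$ (this is in the same spirit as the case analysis in Lemma~\ref{lemm:swap}). Once that is granted, what remains is the elementary fact that adding edges only coarsens the partition of a graph into connected components and that coarsening can only increase the sum of the squares of the block sizes, and then a purely set-theoretic inclusion-plus-equal-cardinality argument finishes the proof.
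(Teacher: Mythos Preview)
Your proof is correct and follows essentially the same route as the paper: establish the inclusion $A(S,\Delta_0,T(v_i))\subseteq A(\emptyset,\Delta_0,T(v_i))$ via the monotonicity $\Delta(z,\emptyset)\le\Delta(z,S)$, then conclude equality from the common cardinality $m_i$. The only difference is that the paper simply invokes the monotonicity $\Delta(z,Z)\le\Delta(z,Z')$ for $Z\subseteq Z'$ as a known fact, whereas you spell out a justification of it (unchanged deleted set $R_z$ plus coarsening of components); this extra detail is sound but not needed for the comparison.
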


\begin{proof} Since $\Delta(z,Z) \leq \Delta(z,Z')$ whenever $Z \subseteq Z'$, then we know that $\Delta(z,\emptyset) \leq \Delta(z,S)$.  Therefore, if $\Delta(z,S) = \Delta_0$ then $\Delta(z,\emptyset) = \Delta_0$ and from here, $A(S,\Delta_0,T(v_i)) \subseteq A(\emptyset, \Delta_0,T(v_i))$. However, by assumption, $|A(\emptyset,\Delta_0,T(v_i))| = |A(S,\Delta_0,T(v_i))| = m_i$. Then the conclusion follows easily. 

\end{proof}

\begin{lemma}
\label{lemm:sandwich2}
Let $S \subseteq T(v_i)$. Suppose that $\emptyset$ and $S\neq \emptyset$ are $(\Delta_0,m_i,T(v_i))$-strategies. Then, any subset $S' \subseteq S$ is a $(\Delta_0,m_i,T(v_i))$-strategy and, more specifically, $A(S',\Delta_0,T(v_i)) = A(S,\Delta_0,T(v_i))$.
\end{lemma}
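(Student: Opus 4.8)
The plan is to reduce Lemma~\ref{lemm:sandwich2} to the already-proved Lemma~\ref{lemm:sandwich} together with the monotonicity of the delta value under subsets of link-strategies. First I would fix $S' \subseteq S$ and observe the chain of inclusions $\emptyset \subseteq S' \subseteq S$, so that for every $z \in \mathcal{U}[G'] \cap T(v_i)$ monotonicity gives $\Delta(z,\emptyset) \leq \Delta(z,S') \leq \Delta(z,S)$. Since $\emptyset$ is a $(\Delta_0,m_i,T(v_i))$-strategy we have $\Delta(z,\emptyset) \geq \Delta_0$ for all such $z$, hence $\Delta(z,S') \geq \Delta_0$ as well, which establishes condition (i) of being a $(\Delta_0,\cdot,T(v_i))$-strategy.

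Next I would pin down the attacked sets. From the same inequality $\Delta(z,\emptyset) \leq \Delta(z,S') \leq \Delta(z,S)$, if $\Delta(z,S') = \Delta_0$ then $\Delta(z,\emptyset) = \Delta_0$, so $A(S',\Delta_0,T(v_i)) \subseteq A(\emptyset,\Delta_0,T(v_i))$; and if $\Delta(z,S) = \Delta_0$ (i.e. $z \in A(S,\Delta_0,T(v_i))$) then, because $\Delta(z,S') \le \Delta(z,S) = \Delta_0$ while also $\Delta(z,S') \ge \Delta_0$ from the previous paragraph, we get $\Delta(z,S') = \Delta_0$, so $A(S,\Delta_0,T(v_i)) \subseteq A(S',\Delta_0,T(v_i))$. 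Combining with Lemma~\ref{lemm:sandwich}, which tells us $A(\emptyset,\Delta_0,T(v_i)) = A(S,\Delta_0,T(v_i))$, all three sets coincide:
\begin{equation*}
A(S,\Delta_0,T(v_i)) \subseteq A(S',\Delta_0,T(v_i)) \subseteq A(\emptyset,\Delta_0,T(v_i)) = A(S,\Delta_0,T(v_i)).
\end{equation*}
In particular $A(S',\Delta_0,T(v_i)) = A(S,\Delta_0,T(v_i))$, which is the second assertion of the lemma.

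Finally, condition (ii) of being a $(\Delta_0,m_i,T(v_i))$-strategy follows immediately: $m_i = |A(\emptyset,\Delta_0,T(v_i))| = |A(S',\Delta_0,T(v_i))|$ counts exactly the meta-nodes of $\mathcal{U}[T(v_i)]$ with delta value $\Delta_0$ under the strategy $S'$, so $S'$ is a $(\Delta_0,m_i,T(v_i))$-strategy. I do not expect a genuine obstacle here; the only point requiring a little care is making sure the monotonicity fact $\Delta(z,Z) \le \Delta(z,Z')$ for $Z \subseteq Z'$ is applied correctly to the triple $\emptyset \subseteq S' \subseteq S$ and not merely to the pair already used in Lemma~\ref{lemm:sandwich}, but this is routine once the inclusions are written down. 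The entire argument is essentially a sandwiching of $S'$ between $\emptyset$ and $S$, which is why the statement packages so naturally as a corollary of the preceding lemma.
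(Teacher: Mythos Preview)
Your argument is correct and essentially identical to the paper's own proof: both sandwich $S'$ between $\emptyset$ and $S$, use monotonicity of $\Delta(z,\cdot)$ to get the two inclusions $A(S,\Delta_0,T(v_i)) \subseteq A(S',\Delta_0,T(v_i)) \subseteq A(\emptyset,\Delta_0,T(v_i))$, and then invoke Lemma~\ref{lemm:sandwich} to close the chain. There is nothing to add.
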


\begin{proof} Since $\Delta(z,Z) \leq \Delta(z,Z')$ whenever $Z \subseteq Z'$, we know that $\Delta(z,\emptyset) \leq \Delta(z,S') \leq \Delta(z,S)$ for every $z\in \mathcal{U}[T(v_i)]$.  Therefore, $\Delta(z,S') \geq \Delta(z,\emptyset) \geq \Delta_0$ for every $z \in \mathcal{U}[T(v_i)]$. Moreover:

(i) If $z$ verifies $\Delta(z,S') = \Delta_0$ then $\Delta_0 \leq \Delta(z,\emptyset) \leq \Delta(z,S') =\Delta_0$ implying $z\in A(\emptyset,\Delta_0,T(v_i))$. Therefore $A(S',\Delta_0,T(v_i)) \subseteq A(\emptyset,\Delta_0,T(v_i))$.

(ii) If $z$ verifies $\Delta(z,S) = \Delta_0$ then $\Delta_0 \leq \Delta(z,S') \leq \Delta(z,S) = \Delta_0$ implying $z \in A(S',\Delta_0,T(v_i))$. Therefore $A(S,\Delta_0,T(v_i)) \subseteq A(S',\Delta_0,T(v_i))$. 

(iii) By lemma \ref{lemm:sandwich} we know that $A(\emptyset,\Delta_0,T(v_i)) = A(S,\Delta_0,T(v_i))$.

Using (i), (ii) and (iii) we obtain that $A(S',\Delta_0,T(v_i)) = A(\emptyset,\Delta_0,T(v_i)) = A(S,\Delta_0,T(v_i))$. Since $m_i = |A(\emptyset,\Delta_0,T(v_i))|$ we conclude that $|A(S',\Delta_0,T(v_i))| = m_i$ as we wanted to see.

\end{proof}

\begin{lemma}
\label{lemm:blockingpair2}
Let $S \subseteq T(v_i)$. Suppose that $\emptyset$ and $S \neq \emptyset$ are $(\Delta_0,m_i,T(v_i))$-strategies. Then, there cannot exist any $(S,\Delta_0,T(v_i))-$blocking pair.  
\end{lemma}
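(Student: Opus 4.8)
The plan is to argue by contradiction. Suppose $(a,b)$ is an $(S,\Delta_0,T(v_i))$-blocking pair, so that $a\in A(S,\Delta_0,T(v_i))$, $b\in S$, and $b\notin CC_u(a,\emptyset)$. I will show that even the single link to $b$ already pushes $\Delta(a,\cdot)$ strictly above $\Delta_0$, which contradicts the conclusion of Lemma~\ref{lemm:sandwich2}.

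First I would isolate the value of $\Delta(a,\cdot)$ under two strategies. Applying Lemma~\ref{lemm:sandwich2} to the subset $\{b\}\subseteq S$ (legitimate, since by hypothesis both $\emptyset$ and $S\neq\emptyset$ are $(\Delta_0,m_i,T(v_i))$-strategies), $\{b\}$ is a $(\Delta_0,m_i,T(v_i))$-strategy and $A(\{b\},\Delta_0,T(v_i))=A(S,\Delta_0,T(v_i))$; in particular $a\in A(\{b\},\Delta_0,T(v_i))$, so $\Delta(a,\{b\})=\Delta_0$. Similarly $a\in A(\emptyset,\Delta_0,T(v_i))$ (by Lemma~\ref{lemm:sandwich}, or simply because $\emptyset\subseteq\{b\}$ gives $\Delta_0\le\Delta(a,\emptyset)\le\Delta(a,\{b\})=\Delta_0$), so $\Delta(a,\emptyset)=\Delta_0$ as well.

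Next I would compare $\Delta(a,\emptyset)$ and $\Delta(a,\{b\})$ directly, at the level of the graph obtained after deleting $a$ together with its vulnerable region. Under the standing convention we are in the regime $i=1$, so $u$ is immunised, and by Corollary~\ref{corol:simplification2} (equivalently Theorem~\ref{thm:1}) the link to $b$ points to an immunised meta-node, whose endpoint $w$ is therefore immunised too. Hence neither $u$ nor $w$ lies in the vulnerable region of $a$, so deleting that region commutes with adding the edge $(u,w)$: the graph $G(\{b\})$ with $a$'s region removed is exactly $G(\emptyset)$ with $a$'s region removed plus the single extra edge $(u,w)$. Since $b\notin CC_u(a,\emptyset)$, this edge joins two distinct components $CC_u(a,\emptyset)$ and $CC_w(a,\emptyset)$, each of size at least one (they contain $u$ and $w$ respectively), and leaves all other components unchanged. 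Replacing $x^2+y^2$ by $(x+y)^2$ with $x,y\ge1$ strictly increases the sum of squares, so
\begin{equation*}
\Delta(a,\{b\})=\Delta(a,\emptyset)+2\,|CC_u(a,\emptyset)|\cdot|CC_w(a,\emptyset)|>\Delta(a,\emptyset)=\Delta_0,
\end{equation*}
contradicting $\Delta(a,\{b\})=\Delta_0$. This contradiction shows no blocking pair can exist.

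The main obstacle is the bookkeeping in the comparison step: one must be careful that the endpoints of the link survive the removal of $a$'s vulnerable region (this is exactly where immunisation of $u$ and of the target meta-node enters), and that adding a single edge merges precisely two connected components, so that the change in $\Delta$ is the clean positive quantity above. Everything else reduces to direct appeals to Lemmas~\ref{lemm:sandwich} and~\ref{lemm:sandwich2}.
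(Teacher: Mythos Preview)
Your proof is correct and follows essentially the same idea as the paper: derive a contradiction by showing that a link to a node outside $CC_u(a,\emptyset)$ forces a strict increase in $\Delta(a,\cdot)$, clashing with the constraint $\Delta(a,\cdot)=\Delta_0$. The only minor difference is the direction of comparison: the paper removes from $S$ all endpoints lying outside $CC_u(a,\emptyset)$ to obtain $S'\subsetneq S$ and argues directly that $\Delta_0=\Delta(a,\emptyset)\le\Delta(a,S')<\Delta(a,S)=\Delta_0$, without appealing to Lemma~\ref{lemm:sandwich2}; you instead restrict to the singleton $\{b\}$ and invoke Lemma~\ref{lemm:sandwich2} to pin $\Delta(a,\{b\})=\Delta_0$ before showing $\Delta(a,\{b\})>\Delta(a,\emptyset)=\Delta_0$. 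Your version is slightly more explicit about why the inequality is strict (the $2xy$ term), while the paper's avoids the dependency on Lemma~\ref{lemm:sandwich2}.
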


\begin{proof} Suppose the contrary, then there exists at least one $a \in A(S,\Delta_0,T(v))$ and one $b \in S$ such that $b \not \in CC_u(a,\emptyset)$. Let $S'$ be the strategy that we obtain from $S$ after the removal of all the elements $\tilde{b}$ from $S$ that satisfy $\tilde{b} \not \in CC_u(a,\emptyset)$. Then, since $\Delta(z,Z) \leq \Delta(z,Z')$ whenever $Z \subseteq Z'$, we obtain that $\Delta_0=\Delta(a,\emptyset) \leq \Delta(a,S') < \Delta(a,S) = \Delta_0$, a contradiction. 

\end{proof}

\begin{lemma}
\label{lemm:ccblockingpair} Let $S \subseteq T(v_i)$. If there does not exist any $(S,\Delta_0,T(v_i))-$blocking pair then $CC_u(a,S) = CC_u(a,\emptyset)$ for any $a \in A(S,\Delta_0,T(v_i))$.

\end{lemma}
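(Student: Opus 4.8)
The plan is to compare the two post-attack graphs $G(\emptyset)\setminus a$ and $G(S)\setminus a$ and to observe that the absence of a blocking pair forces every edge that $S$ adds to have both endpoints inside $CC_u(a,\emptyset)$; adding such ``chords'' cannot change the connected component of $u$.

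First I would fix an arbitrary $a\in A(S,\Delta_0,T(v_i))$ and unpack what the hypothesis says for this particular $a$: if some $b\in S$ satisfied $b\notin CC_u(a,\emptyset)$ then $(a,b)$ would be a $(S,\Delta_0,T(v_i))$-blocking pair, contrary to assumption; hence $b\in CC_u(a,\emptyset)$ for every $b\in S$. Next I would recall, using Theorem~\ref{thm:1} and the standing convention $i=1$, that every meta-node of $S$ is immunised and that $u$ itself is immunised, whereas $a$ is a vulnerable meta-node. Consequently, when the adversary attacks $a$ the attack cannot reach $u$ nor any node inside a meta-node of $S$, so all the edges contributed by $S$ (one edge from $u$ to some node of each $b\in S$) survive the attack, and one obtains the identity $G(S)\setminus a = \bigl(G(\emptyset)\setminus a\bigr)\cup E_S$, where $E_S$ denotes exactly this surviving set of $S$-edges.

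Then I would argue that adding the edges of $E_S$ leaves the connected component of $u$ unchanged. Each edge of $E_S$ joins $u$, which lies in $CC_u(a,\emptyset)$ because it survives the attack, to a node of some $b\in S$, and the first step showed this node lies in $CC_u(a,\emptyset)$ as well; since $b$ is connected and disjoint from the vulnerable region $a$, in fact all of $b$ remains inside $CC_u(a,\emptyset)$. Thus every edge of $E_S$ has both endpoints inside the single connected component $CC_u(a,\emptyset)$ of $G(\emptyset)\setminus a$, and adding an edge whose two endpoints already lie in the same component merges no components. Therefore the component of $u$ is the same in $G(\emptyset)\setminus a$ and in $G(S)\setminus a$, that is, $CC_u(a,S) = CC_u(a,\emptyset)$, which is the claim.

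The only delicate point — and what I regard as the main (and fairly minor) obstacle — is the bookkeeping that guarantees the $S$-edges really do survive the attack on $a$ and that the meta-node level statement ``$b\in CC_u(a,\emptyset)$'' transfers to the node level, i.e.\ every node of $b$ lies in $CC_u(a,\emptyset)$; both facts follow from $S$ pointing only to immunised meta-nodes (Corollary~\ref{corol:simplification2}) together with the connectedness of each meta-node and its disjointness from $a$. Everything else is just the elementary observation that adding chords within a connected component does not change connected components.
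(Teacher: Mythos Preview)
Your argument is correct and follows exactly the same line as the paper's proof: the absence of a blocking pair means every $b\in S$ already lies in $CC_u(a,\emptyset)$, so the edges contributed by $S$ are chords inside that component and cannot change it. The paper's own proof is a terse two-sentence version of what you wrote; your additional bookkeeping about survival of the $S$-edges and the meta-node/node level translation is sound and simply makes explicit what the paper leaves implicit.
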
 

\begin{proof} All the endpoints of $S$ belong to the same connected component in which $u$ belongs when removing $a$ from the network. Therefore, $CC_u(a,S) = CC_u(a,\emptyset)$. 

\end{proof}

\begin{corollary}
\label{corol:ccblockingpair}
Let $S \subseteq T(v_i)$. Suppose that $\emptyset$ and $S \neq \emptyset$ are $(\Delta_0,m_i,T(v_i))$-strategies. Then $CC_u(a,S) = CC_u(a,\emptyset)$ for any $a\in A(S,\Delta_0,T(v_i))$.
\end{corollary}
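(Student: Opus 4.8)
The plan is to obtain the corollary as a two-step chaining of the two immediately preceding results, together with the monotonicity fact $\Delta(z,Z) \le \Delta(z,Z')$ whenever $Z \subseteq Z'$ that underlies both of them. First I would invoke Lemma~\ref{lemm:blockingpair2}: its hypotheses are exactly those of the corollary, namely that both $\emptyset$ and $S \neq \emptyset$ are $(\Delta_0,m_i,T(v_i))$-strategies, and its conclusion is that there is no $(S,\Delta_0,T(v_i))$-blocking pair. Intuitively, were there a target meta-node $a \in A(S,\Delta_0,T(v_i))$ and an endpoint $b \in S$ with $b \notin CC_u(a,\emptyset)$, then — since adding edges can only merge components — removing $a$ from $G(S)$ would leave $u$ in a strictly larger component than in $G(\emptyset)$, forcing $\Delta(a,S) > \Delta(a,\emptyset) = \Delta_0$ and contradicting $a \in A(S,\Delta_0,T(v_i))$.

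Then I would feed this directly into Lemma~\ref{lemm:ccblockingpair}, whose hypothesis is precisely the absence of an $(S,\Delta_0,T(v_i))$-blocking pair and whose conclusion is exactly $CC_u(a,S) = CC_u(a,\emptyset)$ for every $a \in A(S,\Delta_0,T(v_i))$. Combining the two gives the claim with no further work: the absence of blocking pairs means every endpoint of $S$ already lies in the component of $u$ once $a$ is removed, so the extra edges bought by $S$ cannot alter $CC_u(a,\cdot)$.

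Since the corollary is a direct composition of facts already established, there is no genuine obstacle here; the only point worth checking carefully is that the quantifiers and hypotheses line up verbatim — in particular that Lemma~\ref{lemm:blockingpair2} really does need both $\emptyset$ and $S$ to be $(\Delta_0,m_i,T(v_i))$-strategies (so the corollary's hypotheses suffice and are not stronger than necessary), and that the "for any $a \in A(S,\Delta_0,T(v_i))$" clause transported through Lemma~\ref{lemm:ccblockingpair} matches the statement. No case analysis, new construction, or estimation is required.
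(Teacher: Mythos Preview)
Your proposal is correct and matches the paper's intended argument: the corollary is stated in the paper without proof, precisely because it is the immediate composition of Lemma~\ref{lemm:blockingpair2} (no $(S,\Delta_0,T(v_i))$-blocking pair exists) followed by Lemma~\ref{lemm:ccblockingpair} (no blocking pair implies $CC_u(a,S)=CC_u(a,\emptyset)$). Your hypothesis-checking is on point and nothing further is needed.
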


\begin{proposition}  \label{prop:new1}

Let $S$ be a $(\Delta_0,m,\overline{m},T(v))-$strategy with $\overline{m}=[m_1, \ldots, m_{k(v)}]$ and let  $B_i$ be a $(\Delta_0,|A|,m_i,T(v_i))$-BR-strategy. If $|B_i| > 0$, then $|S \cap T(v_i)| > 0$.
\end{proposition}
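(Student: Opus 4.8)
The plan is to argue by contradiction. Suppose $|S \cap T(v_i)| = 0$, i.e.\ $S \cap T(v_i) = \emptyset$. I will show that under this assumption every \emph{nonempty} $(\Delta_0,m_i,T(v_i))$-strategy has strictly smaller restricted $(\Delta_0,|A|,T(v_i))$-utility than the empty strategy; since $B_i$ is a nonempty $(\Delta_0,m_i,T(v_i))$-strategy by hypothesis, this contradicts the maximality that makes $B_i$ a $(\Delta_0,|A|,m_i,T(v_i))$-BR-strategy.

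The first step is to note that $\emptyset$ is itself a $(\Delta_0,m_i,T(v_i))$-strategy \emph{with the same} $m_i$. Indeed, because $S$ is a $(\Delta_0,m,\overline{m},T(v))$-strategy with $\overline{m}=[m_1,\ldots,m_{k(v)}]$, the definition gives $m_i = |A(S,\Delta_0,T(v_i))|$, and Lemma \ref{lemm:closedbyintersection} says $S\cap T(v_i)$ is a $(\Delta_0,m_i,T(v_i))$-strategy; under our assumption $S\cap T(v_i)=\emptyset$, so $\emptyset$ is a $(\Delta_0,m_i,T(v_i))$-strategy. Meanwhile $B_i$ is, by definition of BR-strategy, a $(\Delta_0,m_i,T(v_i))$-strategy, and $|B_i|>0$ means $B_i\neq\emptyset$.

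Now I apply the "sandwich" results to the pair $\emptyset \subseteq B_i$: since both $\emptyset$ and $B_i\neq\emptyset$ are $(\Delta_0,m_i,T(v_i))$-strategies, Lemma \ref{lemm:sandwich} yields $A(B_i,\Delta_0,T(v_i)) = A(\emptyset,\Delta_0,T(v_i))$, and Corollary \ref{corol:ccblockingpair} yields $CC_u(a,B_i) = CC_u(a,\emptyset)$ for every $a$ in this common attack set. Substituting into the definition of the restricted utility, the entire sum term of $B_i$ coincides with that of $\emptyset$, so $U(u,B_i,\Delta_0,|A|,T(v_i)) = -\alpha|B_i| + U(u,\emptyset,\Delta_0,|A|,T(v_i))$. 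Because $\alpha>0$ and $|B_i|>0$, this gives $U(u,B_i,\Delta_0,|A|,T(v_i)) < U(u,\emptyset,\Delta_0,|A|,T(v_i))$, so $B_i$ is not of maximum restricted utility among $(\Delta_0,m_i,T(v_i))$-strategies, contradicting that it is a BR-strategy. Hence $S\cap T(v_i)\neq\emptyset$.

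There is no genuinely hard analytic step here; the only thing to be careful about is verifying that the hypotheses of Lemma \ref{lemm:sandwich} and Corollary \ref{corol:ccblockingpair} are actually met — specifically that $\emptyset$ appears as a $(\Delta_0,m_i,T(v_i))$-strategy with exactly the index $m_i$ recorded in $\overline{m}$ — which is immediate from the definition of a $(\Delta_0,m,\overline{m},T(v))$-strategy together with Lemma \ref{lemm:closedbyintersection}. Everything else is a short cancellation in the restricted-utility formula.
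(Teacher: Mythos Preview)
Your proof is correct and follows essentially the same approach as the paper: argue by contradiction, use Lemma~\ref{lemm:closedbyintersection} to certify that $\emptyset$ is a $(\Delta_0,m_i,T(v_i))$-strategy, then invoke Lemma~\ref{lemm:sandwich} and Corollary~\ref{corol:ccblockingpair} to conclude that the restricted utilities of $\emptyset$ and $B_i$ differ only by the link cost $-\alpha|B_i|$, contradicting optimality of $B_i$.
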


\begin{proof} 

Let us suppose the contrary, $|B_i| >0$ and $|S \cap T(v_i)| = 0$

Since $S$ is a $(\Delta_0,m,\overline{m},T(v))-$strategy, then $(S \cap T(v_i))=\emptyset$ is a $(\Delta_0,T(v_i),m_i)-$strategy by Lemma \ref{lemm:closedbyintersection}. Therefore, we have $\emptyset$ and $B_i$ with $B_i \neq \emptyset$ are $(\Delta_0,m_i,T(v_i))-$strategies implying:

\begin{itemize}

\item[(i)] $A(\emptyset,\Delta_0,T(v_i)) = A(B_i,\Delta_0,T(v_i))$, by Lemma \ref{lemm:sandwich}.

\item[(ii)] $CC_u(a,B_i) = CC_u(a,\emptyset)$, by Corollary \ref{corol:ccblockingpair}. 

\end{itemize}
Hence, combining (i) with (ii) we have that $U(u, \emptyset, |A|,m_i,T(v_i))>U(u, B_i, |A|,m_i,T(v_i))$ contradicting the fact that $B_i$ is  a $(\Delta_0,|A|,m_i,T(v_i))$-BR-strategy. 






\end{proof}

\begin{proposition} \label{prop:new2}
Let $S$ be a $(\Delta_0,|A|,m,\overline{m},T(v))-$BR-strategy and suppose that $B_i$ is a $(\Delta_0,|A|,m_i,T(v_i))$-BR-strategy. If $|B_i| = 0$ then $|S \cap T(v_i)| \in \left\{0,1\right\}$. 
\end{proposition}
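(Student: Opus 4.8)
The plan is to argue by contradiction: assume $|S\cap T(v_i)|\ge 2$ and construct a competing $(\Delta_0,m,\overline{m},T(v))$-strategy with strictly larger restricted $(\Delta_0,|A|,T(v))$-utility, contradicting that $S$ is a BR-strategy. First note that since $B_i$ is a $(\Delta_0,|A|,m_i,T(v_i))$-BR-strategy it is in particular a $(\Delta_0,m_i,T(v_i))$-strategy, and $|B_i|=0$ forces $B_i=\emptyset$; hence $\emptyset$ is a $(\Delta_0,m_i,T(v_i))$-strategy. On the other hand, because $S$ is a $(\Delta_0,m,\overline{m},T(v))$-strategy with $\overline{m}=[m_1,\dots,m_{k(v)}]$ we have $m_i=|A(S,\Delta_0,T(v_i))|$, so Lemma \ref{lemm:closedbyintersection} shows $S\cap T(v_i)$ is a $(\Delta_0,m_i,T(v_i))$-strategy too. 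Thus $\emptyset$ and $S\cap T(v_i)\neq\emptyset$ are both $(\Delta_0,m_i,T(v_i))$-strategies, which puts us in the hypotheses of Lemma \ref{lemm:sandwich2} and Corollary \ref{corol:ccblockingpair}. I would then pick any singleton $S_i'\subseteq S\cap T(v_i)$ and set $\widehat{S}:=(S\setminus T(v_i))\cup S_i'$, so that $\widehat{S}\cap T(v_i)=S_i'$, $\widehat{S}\cap T(v_j)=S\cap T(v_j)$ for $j\neq i$, and $|\widehat{S}|=|S|-|S\cap T(v_i)|+1<|S|$.

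Next I would verify that $\widehat{S}$ is again a $(\Delta_0,m,\overline{m},T(v))$-strategy. By Lemma \ref{lemm:sandwich2}, $S_i'$ is a $(\Delta_0,m_i,T(v_i))$-strategy with $A(S_i',\Delta_0,T(v_i))=A(S\cap T(v_i),\Delta_0,T(v_i))=A(S,\Delta_0,T(v_i))$, the last equality by Lemma \ref{lemm:crucial}. For $z\in\mathcal{U}[G']\cap T(v_j)$ with $j\neq i$, Lemma \ref{lemm:crucial} gives $\Delta(z,\widehat{S})=\Delta(z,\widehat{S}\cap T(v_j))=\Delta(z,S\cap T(v_j))=\Delta(z,S)$; for $z\in\mathcal{U}[G']\cap T(v_i)$, Lemma \ref{lemm:crucial} gives $\Delta(z,\widehat{S})=\Delta(z,S_i')\ge\Delta_0$, and the equality of restricted attack sets above shows that the set of such $z$ with $\Delta(z,\widehat{S})=\Delta_0$ coincides with the one for $S$; and for the $\mathcal{U}_1$-meta-nodes sitting inside $v$ (when $v\in V_1(G')$) the delta value is strategy-independent. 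The one case needing a separate argument is $z=v$ when $v\in V_{\ge 2}(G')$: removing the meta-node $v$ detaches the subtrees $T(v_1),\dots,T(v_{k(v)})$ from one another and from $\overline{T(v)}$, so $\Delta(v,\cdot)$ depends on the strategy only through the set of indices $j$ for which the strategy has a non-empty restriction to $T(v_j)$; since $S_i'\neq\emptyset$ and $S\cap T(v_i)\neq\emptyset$ this set is the same for $S$ and $\widehat{S}$, whence $\Delta(v,\widehat{S})=\Delta(v,S)$. Consequently $A(\widehat{S},\Delta_0,T(v))=A(S,\Delta_0,T(v))$, the count $m$ is unchanged, and $|A(\widehat{S},\Delta_0,T(v_j))|=m_j$ for every $j$, so $\widehat{S}$ is a $(\Delta_0,m,\overline{m},T(v))$-strategy.

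Finally I would compare the restricted utilities. Write $A^\ast:=A(S,\Delta_0,T(v))=A(\widehat{S},\Delta_0,T(v))$. For $a\in A^\ast$ lying in some $T(v_j)$ with $j\neq i$, Lemma \ref{lemm:crucial2} gives $CC_u(a,\widehat{S})=CC_u(a,S\cap T(v_j))=CC_u(a,S)$; for $a=v$ (when $v\in V_{\ge 2}(G')$) both $CC_u(v,S)$ and $CC_u(v,\widehat{S})$ equal $\overline{T(v)}\cup\bigcup_{j}T(v_j)$ over the same set of non-empty indices; for a $\mathcal{U}_1$-meta-node $a$ inside $v$, $|CC_u(a,\cdot)|=n-|a|$ is strategy-independent; and for $a\in A(S,\Delta_0,T(v_i))$, Lemma \ref{lemm:crucial2} together with Corollary \ref{corol:ccblockingpair} gives $CC_u(a,S)=CC_u(a,S\cap T(v_i))=CC_u(a,\emptyset)=CC_u(a,S_i')=CC_u(a,\widehat{S})$. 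Hence $\sum_{a\in A^\ast}|CC_u(a,S)|=\sum_{a\in A^\ast}|CC_u(a,\widehat{S})|$, and therefore
\begin{equation*}
U(u,\widehat{S},\Delta_0,|A|,T(v))-U(u,S,\Delta_0,|A|,T(v)) = -\alpha|\widehat{S}|+\alpha|S| = \alpha\bigl(|S\cap T(v_i)|-1\bigr) > 0,
\end{equation*}
contradicting that $S$ is a $(\Delta_0,|A|,m,\overline{m},T(v))$-BR-strategy. Thus $|S\cap T(v_i)|\in\{0,1\}$.

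The step I expect to be the real obstacle is the middle one: checking that trimming the portion of $S$ inside $T(v_i)$ from at least two links down to a single link leaves every delta value on $\mathcal{U}[G']\cap T(v)$ untouched — in particular the delta value of the articulation meta-node $v$ itself, which a priori could change if the trimming disconnected part of $T(v_i)$ from $u$. What makes this go through is precisely that $S\cap T(v_i)$ and $S_i'$ are both non-empty, so $T(v_i)$ stays attached to $u$ in exactly the same way in $G'(\widehat{S})$ as in $G'(S)$, and that by Lemma \ref{lemm:sandwich2} — which relies on $\emptyset$ being a $(\Delta_0,m_i,T(v_i))$-strategy, i.e.\ on the hypothesis $|B_i|=0$ — passing to a subset of $S\cap T(v_i)$ does not disturb the restricted attack set inside $T(v_i)$.
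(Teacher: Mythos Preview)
Your proof is correct and follows essentially the same contradiction argument as the paper: both trim the links of $S$ inside $T(v_i)$ and use Lemmas \ref{lemm:closedbyintersection}, \ref{lemm:sandwich2}, \ref{lemm:crucial}, \ref{lemm:crucial2} and the blocking-pair machinery to show that the trimmed strategy has the same restricted attack set and the same $CC_u$ values, hence strictly larger restricted utility. The only cosmetic difference is that the paper removes a single link $b$ to obtain $S'=S\setminus\{b\}$, whereas you remove all but one link in $T(v_i)$ at once; your case analysis (covering $v\in V_1(G')$ explicitly) is in fact slightly more thorough than the paper's.
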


\begin{proof} Suppose the contrary, that $|B_i| = 0$ but $|S \cap T(v_i)| \geq 2$ and we reach a contradiction. First, since $S$ is a $(\Delta_0,|A|,m,\overline{m},T(v))-$strategy then $S \cap T(v_i)$ is a $(\Delta_0,|A|,m_i,T(v_i))$-strategy by Lemma \ref{lemm:closedbyintersection}. But $B_i = \emptyset$ is a $(\Delta_0,m_i,T(v_i))-$strategy as well. Let $b \in S \cap T(v_i)$ and define $S' = S \setminus \left\{b\right\}$.

\begin{itemize}
\item[(i)] $A(S\cap T(v_i),\Delta_0,T(v_i)) = A(S'\cap T(v_i),\Delta_0,T(v_i))$, by Lemma \ref{lemm:sandwich2}.



\item[(ii)] Even more than this, now we see that $CC_u(a,S') = CC_u(a,S)$ for every $a \in A(S, \Delta_0,T(v))$:

\hspace{0.5cm} (ii.a) If $a =v$. Then $CC_u(a,S) = CC_u(a,S')$ because by hypothesis $|S' \cap T(v_i)| \geq 1 >0$ so that $S \cap T(v_i),S' \cap T(v_i) \neq \emptyset$. 

\hspace{0.5cm} (ii.b) If $a \in A(S , \Delta_0,T(v)) \cap (T(v) \setminus (\left\{v\right\} \cup T(v_i)))$ then $CC_u(a,S)  =CC_u(a,S')$ as well, because $b \in T(v_i)$.

\hspace{0.5cm} (ii.c) If $z \in A(S, \Delta_0,T(v)) \cap T(v_i)$, then $CC_u(a,S)  =CC_u(a,S')$ because by Lemma \ref{lemm:blockingpair2} there does not exist any $(S,
\Delta_0,T(v_i))-$blocking pair.

\end{itemize}

Therefore $U(u, S',\Delta_0,|A|,T(v)) = U(u,S,\Delta_0,|A|,T(v))+\alpha > U(u,S,\Delta_0,|A|,T(v))$.

Finally, combining (i) with (ii) we obtain a contradiction with the hypothesis that $S$ is a $(\Delta_0,|A|,m,\overline{m},T(v))-$BR-strategy.  

\end{proof}

\subsection{Composing Restricted BR-Strategies}


Here now we prove the main results that take advantage of the already proven technical lemmas together with the structure of the meta-tree. Since you can think about the meta-tree as a tree having two kind of nodes either from $V_1(G')$ or $V_{\geq 2}(G')$, we need to make a clear distinction between these cases.  

\textbf{The subproblem when $v \in V_1(G')$.} This is the easiest case from the two scenarios that we must consider. Recall that $\widetilde{m}(v,\Delta_0)$ is the number of meta-nodes from $\mathcal{U}_1(G')$ contained in $v$ having delta value equal to $\Delta_0$ if $v \in V_1(G')$ and $0$ otherwise.

\begin{definition}
Suppose that $B_i$ are $(\Delta_0,|A|,m_i,T(v_i))$-BR-strategies for each $i = 1,...,k(v)$ with $m_1+...+m_{k(v)} = m$. Define $S[B_1,...,B_{k(v)};m_1,...,m_{k(v)}] = \cup_{i \geq 1}B_i$. 
\end{definition}

\begin{proposition}
\label{prop:strategy1}
Let $B_j$ be a $(\Delta_0,m_j,T(v_j))-$strategy for $j=1,...,k(v)$ and let $m=m_1+...+m_{k(v)}$. Then $S=S[B_1,...,B_{k(v)};m_1,...,m_{k(v)}]$ is a $(\Delta_0,m+\widetilde{m}(\Delta_0,v),\overline{m},T(v))$-strategy.  
\end{proposition}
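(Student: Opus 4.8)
The plan is to verify, one at a time, the three requirements packed into the definition of a $(\Delta_0,\,m+\widetilde{m}(v,\Delta_0),\,\overline{m},\,T(v))$-strategy: that $S\subseteq V_1(G')$, that $S$ is a $(\Delta_0,\,m+\widetilde{m}(v,\Delta_0),\,T(v))$-strategy, and that $|A(S,\Delta_0,T(v_i))|=m_i$ for every $i=1,\dots,k(v)$. Write $S=S[B_1,\dots,B_{k(v)};m_1,\dots,m_{k(v)}]=\bigcup_{j=1}^{k(v)}B_j$. Since $v\in V_1(G')$, the child subtrees $T(v_1),\dots,T(v_{k(v)})$ hanging from $v$ are pairwise disjoint and each $B_j\subseteq T(v_j)$, so $S\cap T(v_i)=B_i$ for every $i$; and $S\subseteq V_1(G')$ is immediate because each $B_j$ is. It remains to establish conditions (i) and (ii) of a $(\Delta_0,\cdot,T(v))$-strategy together with the profile $\overline{m}$.

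For condition (i), let $z$ be a vulnerable meta-node of $T(v)$. If $z$ lies in some $T(v_i)$, then Lemma~\ref{lemm:crucial} gives $\Delta(z,S)=\Delta(z,S\cap T(v_i))=\Delta(z,B_i)\geq\Delta_0$, the last inequality because $B_i$ is a $(\Delta_0,m_i,T(v_i))$-strategy. Otherwise $z$ is a vulnerable meta-node contained in the $2_{\mathcal{I}}$-vertex-connected component $v$ itself; the articulation-point meta-nodes contained in $v$ are exactly its meta-tree neighbours (those lying in $T(v)$ being the children $v_1,\dots,v_{k(v)}$), so $z\in\mathcal{U}_1(G')$, and then $\Delta(z,S)=(n-|z|)^2$ does not depend on $S$ at all (removing the meta-node $z$ leaves $G'$ connected) and is at least $\Delta_0$ (the parameter $\Delta_0$ never exceeds a fixed delta value on $T(v)$, as is implicit in the discussion of $(\Delta_0,\cdot,T(v))$-strategies). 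This proves (i).

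For condition (ii) and the profile $\overline{m}$ I would use the disjoint decomposition $\mathcal{U}[T(v)]=\big(\bigsqcup_{i=1}^{k(v)}\mathcal{U}[T(v_i)]\big)\sqcup\{\,z\in\mathcal{U}_1(G'):z\subseteq v\,\}$, which is legitimate because a non-articulation vulnerable meta-node lies in a unique $2_{\mathcal{I}}$-vertex-connected component, while the only meta-node common to $v$ and a child subtree $T(v_i)$ is the articulation point $v_i\in\mathcal{U}_{\geq2}(G')$, counted once inside $\mathcal{U}[T(v_i)]$ and never by $\widetilde{m}$. On each $T(v_i)$, Lemma~\ref{lemm:crucial} gives $\Delta(z,S)=\Delta(z,B_i)$ for every $z\in\mathcal{U}[T(v_i)]$, hence $A(S,\Delta_0,T(v_i))=A(B_i,\Delta_0,T(v_i))$, whose cardinality is $m_i$ because $B_i$ is a $(\Delta_0,m_i,T(v_i))$-strategy; this is precisely the claim $|A(S,\Delta_0,T(v_i))|=m_i$. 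On $\{\,z\in\mathcal{U}_1(G'):z\subseteq v\,\}$ the number of $z$ with delta value $\Delta_0$ is $\widetilde{m}(v,\Delta_0)$ by definition, again independently of $S$. Summing over the decomposition, the number of vulnerable meta-nodes of $T(v)$ with delta value equal to $\Delta_0$ is $\sum_{i=1}^{k(v)}m_i+\widetilde{m}(v,\Delta_0)=m+\widetilde{m}(v,\Delta_0)$, which is condition (ii).

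The step I expect to need the most care is the bookkeeping behind this decomposition of $\mathcal{U}[T(v)]$: one must be sure that the vulnerable meta-nodes of $T(v)$ split cleanly into the $\mathcal{U}_1(G')$ meta-nodes internal to $v$ and the vulnerable meta-nodes of the child subtrees, with each shared articulation meta-node $v_i$ charged exactly once (to $T(v_i)$) and never double-counted by $\widetilde{m}$, and that $S\cap T(v_i)=B_i$ despite meta-tree nodes being sets of meta-nodes that may overlap across different meta-tree vertices. Once this is pinned down, every remaining step is a direct application of Lemma~\ref{lemm:crucial} and the definition of a $(\Delta_0,m_i,T(v_i))$-strategy, so there is no genuinely analytic difficulty.
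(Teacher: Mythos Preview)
Your proof is correct and follows essentially the same approach as the paper's: both use Lemma~\ref{lemm:crucial} to reduce $\Delta(z,S)$ to $\Delta(z,B_i)$ on each subtree $T(v_i)$, and both observe that the delta values of the $\mathcal{U}_1(G')$ meta-nodes inside $v$ are strategy-independent, then sum over the decomposition $|A(S,\Delta_0,T(v))|=\sum_i m_i+\widetilde{m}(v,\Delta_0)$. You are simply more explicit than the paper in checking condition (i) and in spelling out the decomposition of $\mathcal{U}[T(v)]$; the paper compresses all of this into two sentences.
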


\begin{proof} Since $\Delta(z,B) = \Delta(z, B \cap T(v_i)) = \Delta(z,B_i)$ for every $z \in \mathcal{U}[T(v_i)]$ we count exactly $m_i$ attacked meta-nodes from $T(v_i)$. Moreover, the delta value of the vulnerable meta-nodes from $\mathcal{U}[v]$ does not depend on the choice of the strategy. Therefore there are exactly $\widetilde{m}(\Delta_0,v)$ attacked meta-nodes from $v$. The conclusion now follows easily because of the following equality: 
\begin{equation*}|A(S, \Delta_0,T(v))| = m_1+...+m_{k(v)}+\widetilde{m}(\Delta_0,v)\end{equation*}
 
\end{proof} 

Even more than this, we can show that:

\begin{theorem}
\label{THM:1}
Suppose that $m_1,...,m_{k(v)}$ with $m_1+...+m_{k(v)}=m$ are given and $B_j$ is a $(\Delta_0,|A|,m_j,T(v_j))$-BR-strategy for each $j=1,...,k(v)$. Then, $S=S[B_1,...,B_{k(v)};m_1,...,m_{k(v)}]$ is a $(\Delta_0,|A|,m+\widetilde{m}(\Delta_0,v),\overline{m},T(v))-$BR-strategy. \end{theorem}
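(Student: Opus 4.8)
The plan is to split the claim into its two defining requirements: (i) that $S$ belongs to the family of $(\Delta_0,m+\widetilde{m}(\Delta_0,v),\overline{m},T(v))$-strategies, and (ii) that $S$ attains the largest restricted $(\Delta_0,|A|,T(v))$-utility inside that family. Part (i) is exactly Proposition \ref{prop:strategy1}. For part (ii) the engine is the additive formula of Proposition \ref{lemm:restrictedutility}: since we are in the case $v\in V_1(G')$, we have $v\notin V_{\geq 2}(G')$, so for every strategy the ``otherwise'' branch of that proposition applies, and the restricted utility over $T(v)$ decomposes as the sum of the restricted utilities over $T(v_1),\dots,T(v_{k(v)})$ plus the strategy-independent term $\frac{1}{|A|}\sqrt{\Delta_0}\,\widetilde{m}(\Delta_0,v)$.

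First I would record the restricted utility of $S$ itself. The subtrees $T(v_1),\dots,T(v_{k(v)})$ are pairwise disjoint and each $B_j$ is a $(\Delta_0,m_j,T(v_j))$-strategy contained in $T(v_j)$, so $S\cap T(v_j)=B_j$; applying Proposition \ref{lemm:restrictedutility} to $S=\bigcup_j B_j$ yields
\begin{equation*}
U(u,S,\Delta_0,|A|,T(v)) \;=\; \sum_{j=1}^{k(v)} U(u,B_j,\Delta_0,|A|,T(v_j)) \;+\; \frac{\sqrt{\Delta_0}}{|A|}\,\widetilde{m}(\Delta_0,v).
\end{equation*}

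Next, let $S'$ be an arbitrary $(\Delta_0,m+\widetilde{m}(\Delta_0,v),\overline{m},T(v))$-strategy; the goal is $U(u,S',\Delta_0,|A|,T(v))\le U(u,S,\Delta_0,|A|,T(v))$. The set $A(S',\Delta_0,T(v))$ breaks up, disjointly, into the sets $A(S',\Delta_0,T(v_j))$ for $j=1,\dots,k(v)$ together with the attacked non-articulation vulnerable meta-nodes lying inside the component $v$; the latter collection does not depend on $S'$, it has exactly $\widetilde{m}(\Delta_0,v)$ elements, and each such $a$ contributes $|CC_u(a,S')|=n-|a|=\sqrt{\Delta_0}$. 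For $a\in A(S',\Delta_0,T(v_j))$, Lemma \ref{lemm:crucial2} gives $CC_u(a,S')=CC_u(a,S'\cap T(v_j))$, while Lemma \ref{lemm:crucial} gives $A(S',\Delta_0,T(v_j))=A(S'\cap T(v_j),\Delta_0,T(v_j))$; and since $S'$ lies in the prescribed class with $\overline{m}=[m_1,\dots,m_{k(v)}]$, Lemma \ref{lemm:closedbyintersection} shows $S'\cap T(v_j)$ is a $(\Delta_0,m_j,T(v_j))$-strategy. Following the same computation as in the proof of Proposition \ref{lemm:restrictedutility}, but replacing the equality $-\alpha|S|=-\alpha\sum_i|S_i|$ by the inequality $-\alpha|S'|\le-\alpha\sum_j|S'\cap T(v_j)|$, one obtains
\begin{equation*}
U(u,S',\Delta_0,|A|,T(v)) \;\le\; \sum_{j=1}^{k(v)} U(u,S'\cap T(v_j),\Delta_0,|A|,T(v_j)) \;+\; \frac{\sqrt{\Delta_0}}{|A|}\,\widetilde{m}(\Delta_0,v).
\end{equation*}
Finally, since $B_j$ is a $(\Delta_0,|A|,m_j,T(v_j))$-BR-strategy and $S'\cap T(v_j)$ is a $(\Delta_0,m_j,T(v_j))$-strategy, we have $U(u,S'\cap T(v_j),\Delta_0,|A|,T(v_j))\le U(u,B_j,\Delta_0,|A|,T(v_j))$ for every $j$; summing and comparing with the displayed value of $U(u,S,\Delta_0,|A|,T(v))$ gives $U(u,S',\Delta_0,|A|,T(v))\le U(u,S,\Delta_0,|A|,T(v))$, which proves (ii).

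The step I expect to cost the most care is the bookkeeping in the decomposition of $A(S',\Delta_0,T(v))$: one must check that each vulnerable articulation meta-node $v_1,\dots,v_{k(v)}$ is counted exactly once, inside the corresponding $A(S',\Delta_0,T(v_j))$ (and not again among the $\widetilde{m}(\Delta_0,v)$ non-articulation meta-nodes of $v$), and that the parent articulation meta-node of $v$ is left out, being part of $\overline{T(v)}$ rather than of $T(v)$. A second point worth making explicit is the convention that a restricted (BR-)strategy on a subtree $T(w)$ may be taken with all its links inside $T(w)$ --- a link outside only adds $\alpha$ to the cost while leaving every set $A(\cdot,\Delta_0,T(w))$ and every $CC_u(\cdot,\cdot)$ unchanged, because $w\in V_1(G')$ is not an articulation point of $G'$ --- which is what legitimises the identity $S\cap T(v_j)=B_j$ used above.
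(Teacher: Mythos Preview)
Your proposal is correct and follows essentially the same route as the paper: invoke Proposition~\ref{prop:strategy1} for membership, then use Lemma~\ref{lemm:closedbyintersection} and the BR property of the $B_j$ together with the additive decomposition of Proposition~\ref{lemm:restrictedutility} to bound the restricted utility of any competitor $S'$. The only cosmetic difference is that the paper applies Proposition~\ref{lemm:restrictedutility} directly to $S'$ (tacitly writing $|S'|=\sum_j|S'\cap T(v_j)|$), whereas you re-derive that computation and keep the cost term as an inequality $-\alpha|S'|\le-\alpha\sum_j|S'\cap T(v_j)|$; your added remarks about the bookkeeping of $A(S',\Delta_0,T(v))$ and about links outside $T(v_j)$ make explicit points the paper leaves implicit, but the argument is the same.
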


\begin{proof} By Proposition \ref{prop:strategy1} we know that $S$ is a $(\Delta_0,m+\widetilde{m}(\Delta_0,v),\overline{m},T(v))$-strategy. 

Now, we are going to prove that $S$ corresponds to a strategy with maximum utility for $u$ among all such  $(\Delta_0,m+\widetilde{m}(\Delta_0,v),\overline{m},T(v))$-strategies. In order to prove this we show that $U(u,S,\Delta_0,|A|,T(v)) \geq U(u,S',\Delta_0,|A|,T(v))$ for any  $(\Delta_0,m+\widetilde{m}(\Delta_0,v),\overline{m},T(v))-$strategy $S'$. 
 
Since $S'$ is a $(\Delta_0,m+\widetilde{m}(\Delta_0,v),\overline{m},T(v))-$strategy then by Lemma \ref{lemm:closedbyintersection} $S' \cap T(v_j)$ is a $(\Delta_0,m_j,T(v_j))-$strategy and, therefore, by definition of best response we must have $U(u,S' \cap T(v_j),\Delta_0,|A|,T(v_j)) \leq U(u,B_j,\Delta_0,|A|,T(v_j))$ for every $j \in \left\{1,...,k(v)\right\}$. 
 
 Then using this together with the formula from Proposition \ref{lemm:restrictedutility} we obtain the following:
 \begin{equation*}U(u,S',\Delta_0,|A|,T(v)) = \end{equation*}
 \begin{equation*}=\sum_{j = 1}^{k(v)}U(u,S' \cap T(v_j), \Delta_0,|A|,T(v_j))+ \frac{1}{|A|} \Delta_0 \cdot \widetilde{m}(\Delta_0,v) \leq\end{equation*}
  \begin{equation*}\leq  \sum_{j = 1}^{k(v)}U(u,B_j, \Delta_0,|A|,T(v_j))+ \frac{1}{|A|} \Delta_0 \cdot \widetilde{m}(\Delta_0,v)|\end{equation*}
  
  But by construction of $S$:
  \begin{equation*} \sum_{j = 1}^{k(v)}U(u,B_j, \Delta_0,|A|,T(v_j))+ \frac{1}{|A|} \Delta_0 \cdot \widetilde{m}(\Delta_0,v) =\end{equation*}
  \begin{equation*}=  \sum_{j = 1}^{k(v)}U(u,S \cap T(v_j), \Delta_0,|A|,T(v_j))+ \frac{1}{|A|} \Delta_0 \cdot \widetilde{m}(\Delta_0,v)=\end{equation*}
  \begin{equation*}=U(u,S,\Delta_0,|A|,T(v))\end{equation*}
  
  And now the conclusion follows easily.

\end{proof}


\textbf{The subproblem when $v \in V_{\geq 2}(G')$.} In contrast to the previous case which was easier, this scenario gets a little more involved. 

\begin{proposition} \label{prop:new3}
For a given $v\in V_{\geq 2}(G')$, let $S$ be a $(\Delta_0,|A|,m,\overline{m},T(v))-$BR-strategy and $B_i$ a $(\Delta_0,|A|,m_i,T(v_i))-$BR-strategy with $B_i \neq \emptyset$. Then $S' = (S \setminus T(v_i)) \cup B_i$ is a $(\Delta_0,|A|,m,\overline{m},T(v))-$BR-strategy as well.  
\end{proposition}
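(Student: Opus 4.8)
The plan is to establish two facts: that $S' = (S\setminus T(v_i))\cup B_i$ is again a $(\Delta_0,m,\overline{m},T(v))$-strategy, and that it attains the same restricted $(\Delta_0,|A|,T(v))$-utility as $S$; combining these with the BR-property of $S$ will give that $S'$ is itself a $(\Delta_0,|A|,m,\overline{m},T(v))$-BR-strategy.

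For the first fact I would write $S_j = S\cap T(v_j)$ and $S'_j = S'\cap T(v_j)$, so that (using $B_i\subseteq T(v_i)$, the standing convention for restricted BR-strategies ``on $T(v_i)$'') $S'_j = S_j$ for $j\neq i$ and $S'_i = B_i$. Since $B_i\neq\emptyset$, Proposition \ref{prop:new1} forces $S_i\neq\emptyset$. By Lemma \ref{lemm:crucial}, for $z\in\mathcal{U}[G']\cap T(v_j)$ we have $\Delta(z,S') = \Delta(z,S'_j)$, which equals $\Delta(z,S_j)=\Delta(z,S)\geq\Delta_0$ when $j\neq i$ and equals $\Delta(z,B_i)\geq\Delta_0$ when $j=i$ (the latter because $B_i$ is a $(\Delta_0,m_i,T(v_i))$-strategy). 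The only remaining vulnerable meta-node of $T(v)$ is $v$ itself (it lies in $\mathcal{U}_{\geq 2}(G')$), and the key observation is that removing the meta-node $v$ from $G'$ separates off exactly the subtrees $T(v_j)$ with $S_j=\emptyset$; hence $CC_u(v,S) = \overline{T(v)}\cup\bigcup_{S_j\neq\emptyset}T(v_j)$, and since the index set $\{j:S_j\neq\emptyset\}$ coincides with $\{j:S'_j\neq\emptyset\}$ (both contain $i$, as $S_i\neq\emptyset$ and $S'_i=B_i\neq\emptyset$) we get $CC_u(v,S')=CC_u(v,S)$ and $\Delta(v,S')=\Delta(v,S)\geq\Delta_0$. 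Thus $S'$ satisfies condition (i) of a restricted strategy on $T(v)$, $v$ is attacked under $S'$ iff it is under $S$, and counting attacked meta-nodes subtree by subtree — using $|A(B_i,\Delta_0,T(v_i))| = m_i$ for the $i$-th subtree — gives $|A(S',\Delta_0,T(v_j))| = m_j$ for all $j$ and $|A(S',\Delta_0,T(v))| = m$. Hence $S'$ is a $(\Delta_0,m,\overline{m},T(v))$-strategy.

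For the second fact I would invoke Lemma \ref{lemm:closedbyintersection} to see that $S_j$ and $S'_j$ are $(\Delta_0,m_j,T(v_j))$-strategies, note that $S=\bigcup_j S_j$ and $S'=\bigcup_j S'_j$ (because $v\notin V_1(G')$ is not an endpoint of either strategy), and then apply the formula of Proposition \ref{lemm:restrictedutility} to both $S$ and $S'$. In the case $v\in A(S,\Delta_0,T(v))$ (equivalently $v\in A(S',\Delta_0,T(v))$) the formula expresses each restricted utility as $\sum_j U(u,\cdot_j,\Delta_0,|A|,T(v_j))$ plus $\tfrac{1}{|A|}\bigl(|\overline{T(v)}| + \sum_{\cdot_j\neq\emptyset}|T(v_j)|\bigr)$, and in the complementary case as $\sum_j U(u,\cdot_j,\Delta_0,|A|,T(v_j))$ alone (here $\widetilde{m}(v,\Delta_0)=0$ since $v\notin V_1(G')$). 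In either case the additive term is the same for $S$ and $S'$ (the index sets $\{j:S_j\neq\emptyset\}$ agree), the $j\neq i$ summands coincide, and for $j=i$ we have $U(u,B_i,\Delta_0,|A|,T(v_i))\geq U(u,S_i,\Delta_0,|A|,T(v_i))$ because $B_i$ is a $(\Delta_0,|A|,m_i,T(v_i))$-BR-strategy while $S_i$ is a $(\Delta_0,m_i,T(v_i))$-strategy. Hence $U(u,S',\Delta_0,|A|,T(v))\geq U(u,S,\Delta_0,|A|,T(v))$. Since $S$ is a $(\Delta_0,|A|,m,\overline{m},T(v))$-BR-strategy and $S'$ is a $(\Delta_0,m,\overline{m},T(v))$-strategy, the reverse inequality also holds, so equality follows and $S'$ is a $(\Delta_0,|A|,m,\overline{m},T(v))$-BR-strategy, as desired.

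The hard part will be the claim in the first step that the attack status of $v$ — and, when $v$ is attacked, the component $CC_u(v,\cdot)$ — is unchanged by the swap. This is exactly where the hypothesis $B_i\neq\emptyset$ is essential: via Proposition \ref{prop:new1} it guarantees $S\cap T(v_i)\neq\emptyset$, so $T(v_i)$ stays on the $u$-side of the articulation meta-node $v$ both before and after the swap, leaving the component structure at $v$ (hence $\Delta(v,\cdot)$) intact; were $B_i$ empty this could genuinely change and the statement would fail. Everything else is routine bookkeeping with Lemmas \ref{lemm:crucial} and \ref{lemm:closedbyintersection} and the utility formula of Proposition \ref{lemm:restrictedutility}.
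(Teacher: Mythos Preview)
Your proposal is correct and follows essentially the same approach as the paper: first verify that $S'$ is a $(\Delta_0,m,\overline{m},T(v))$-strategy by checking the $\Delta$-values subtree by subtree via Lemma~\ref{lemm:crucial}, using Proposition~\ref{prop:new1} together with $B_i\neq\emptyset$ to ensure $S\cap T(v_i)\neq\emptyset$ and hence that the component structure (and $\Delta$-value) at the articulation meta-node $v$ is unchanged; then compare utilities and reduce the difference to $U(u,B_i,\Delta_0,|A|,T(v_i))-U(u,S\cap T(v_i),\Delta_0,|A|,T(v_i))\geq 0$. The only cosmetic difference is that you invoke the decomposition formula of Proposition~\ref{lemm:restrictedutility} to carry out the utility comparison, whereas the paper expands the restricted utilities directly and cancels the common terms by hand; the substance of the argument is identical.
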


\begin{proof} First, we claim that $S'$ is a $(\Delta_0,m,\overline{m},T(v))-$strategy. This is because: 

(i) $\Delta(z,S') = \Delta(z,S' \cap T(v_i)) = \Delta(z,B_i)$ for every $z \in \mathcal{U}[T(v_i)]$, using Lemma \ref{lemm:crucial}. Therefore, $|A(S',\Delta_0,T(v_i))|=|A(B_i,\Delta_0,T(v_i))| = m_i$, too. 

(ii) $\Delta(z,S') = \Delta(z,S)$ for every $z \in \mathcal{U}[T(v_j)]$ with $j \neq i$ because the subset of nodes from $S'$ that are not nodes from $S$ together with the set of nodes from $S$ that are not nodes from $S'$ are a subset of $T(v_i)$ which clearly does not belong to $T(v_j)$. Therefore, $|A(S',\Delta_0,T(v_j))|=|A(B_j,\Delta_0,T(v_j))| = m_j$ for every $j \neq i$.

(iii) Finally, $\Delta(v,S') = \Delta(v,S)$ because by hypothesis, $B_i \neq \emptyset$ implying $S \cap T(v_i) \neq \emptyset$ as well by Proposition \ref{prop:new1}. 
 
\vskip 5pt

By (i)+(ii) we deduce that $A(S \cap T(v_j),\Delta_0,T(v_j)) = A(S' \cap T(v_j),\Delta_0,T(v_j))$ for each $j = 1,..., k(v)$ and together with (iii) then we conclude $A(S,\Delta_0,T(v)) = A(S', \Delta_0,T(v))$. 

\vskip 5pt

Furthermore, we claim that $CC_u(a,S')=CC_u(a,S)$ for every $a \in A(S,   \Delta_0, T(v)) \setminus A(S \cap T(v_i),\Delta_0,T(v_i))$. First, notice that $CC_u(v,S') = CC_u(v,S)$ if $v \in A(S, \Delta_0, T(v))$ because $B_i \neq \emptyset$ implies $S\cap T(v_i) \neq \emptyset$ as well using Proposition \ref{prop:new1}. Moreover, $CC_u(a,S) = CC_u(a,S\cap T(v_j)) = CC_u(a,S' \cap T(v_j)) =CC_u(a,S')$ for every $a \in A(S,\Delta_0,T(v_j))$ with $j \neq i$ using Lemma \ref{lemm:crucial2}. In this way:


\begin{equation*}U(u,S',\Delta_0,|A|,T(v)) - U(u,S,\Delta_0,|A|,T(v)) = \end{equation*}
\begin{equation*}= -\alpha |S'|+\alpha |S| + \frac{1}{|A|}\left(\sum_{a \in A(S',\Delta_0, T(v))}|CC_u(a,S')| - \sum_{a \in A(S,\Delta_0,T(v))}|CC_u(a,S)|\right)=\end{equation*}
\begin{equation*} =- \alpha |B_i| + \alpha |S \cap T(v_i)| + \frac{1}{|A|}\left(\sum_{a \in A(S',\Delta_0, T(v_i))}|CC_u(a,S')|-\sum_{a \in A(S,\Delta_0, T(v_i))}|CC_u(a,S)|\right) =\end{equation*}
\begin{equation*}=U(u,B_i,\Delta_0,|A|,T(v_i)) - U(u,S\cap T(v_i),\Delta_0,|A|,T(v_i)) \geq 0 \end{equation*}

Finally, since $S$ is a $(\Delta_0,|A|,m,\overline{m},T(v))-$BR-strategy we conclude that $U(u,S',\Delta_0,|A|,T(v)) = U(u,S,\Delta_0,|A|,T(v))$. 

Therefore, $S'$ has maximum $(\Delta_0,|A|,T(v))-$restricted utility over all $(\Delta_0,m,\overline{m},T(v))-$strategies, as we wanted to see.

\end{proof}

\begin{definition}
Suppose that $B_i$ are $(\Delta_0,|A|,m_i,T(v_i))$-BR-strategies for each $i = 1,...,k(v)$ with $m_1+...+m_{k(v)} = m$. Let $I \subseteq [k(v)]$ be the subset of indices $i$ for which $B_i = \emptyset$ and let $\overline{I} = [k(v)] \setminus I$.

\begin{center}

\tikzset{every picture/.style={line width=0.75pt}} 

\begin{tikzpicture}[x=0.75pt,y=0.75pt,yscale=-0.75,xscale=0.75]

\draw   (174,35) .. controls (184.84,30) and (210,38) .. (220,48) .. controls (230,58) and (268.97,81.72) .. (282.97,118.9) .. controls (296.97,156.08) and (299,211.85) .. (293,217.3) .. controls (287,222.75) and (117,227.75) .. (99,218.3) .. controls (81,208.85) and (96.97,123.62) .. (107,101.85) .. controls (117.03,80.08) and (118,71.85) .. (136,60) .. controls (154,48.15) and (163.16,40) .. (174,35) -- cycle ;
\draw   (223,110) .. controls (232,110.98) and (238,134.82) .. (242,149.82) .. controls (246,164.82) and (247,187.3) .. (247,206.3) .. controls (247,225.3) and (231,218.3) .. (214,221.3) .. controls (197,224.3) and (194.62,192.63) .. (201,154.82) .. controls (207.38,117) and (214,109.02) .. (223,110) -- cycle ;
\draw  [fill={rgb, 255:red, 250; green, 242; blue, 242 }  ,fill opacity=1 ] (180.19,32) .. controls (180.19,27.39) and (183.78,23.65) .. (188.21,23.65) .. controls (192.65,23.65) and (196.24,27.39) .. (196.24,32) .. controls (196.24,36.61) and (192.65,40.35) .. (188.21,40.35) .. controls (183.78,40.35) and (180.19,36.61) .. (180.19,32) -- cycle ;
\draw   (173,112) .. controls (182,112.98) and (188,136.82) .. (192,151.82) .. controls (196,166.82) and (198,188.3) .. (198,207.3) .. controls (198,226.3) and (180,218.3) .. (163,221.3) .. controls (146,224.3) and (144.62,194.63) .. (151,156.82) .. controls (157.38,119) and (164,111.02) .. (173,112) -- cycle ;
\draw   (123,111) .. controls (132,111.98) and (138,135.82) .. (142,150.82) .. controls (146,165.82) and (146,194.3) .. (145,211.3) .. controls (144,228.3) and (119,216.78) .. (109,218.3) .. controls (99,219.82) and (94.62,193.63) .. (101,155.82) .. controls (107.38,118) and (114,110.02) .. (123,111) -- cycle ;
\draw  [fill={rgb, 255:red, 250; green, 242; blue, 242 }  ,fill opacity=1 ] (110,118.4) .. controls (110,112.22) and (115.1,107.22) .. (121.39,107.22) .. controls (127.69,107.22) and (132.79,112.22) .. (132.79,118.4) .. controls (132.79,124.58) and (127.69,129.58) .. (121.39,129.58) .. controls (115.1,129.58) and (110,124.58) .. (110,118.4) -- cycle ;
\draw   (272,111) .. controls (281,111.98) and (287,135.82) .. (291,150.82) .. controls (295,165.82) and (296,182.82) .. (296,201.82) .. controls (296,220.82) and (280,216.3) .. (263,219.3) .. controls (246,222.3) and (247,193.3) .. (250,155.82) .. controls (253,118.33) and (263,110.02) .. (272,111) -- cycle ;
\draw  [fill={rgb, 255:red, 250; green, 242; blue, 242 }  ,fill opacity=1 ] (176.43,82.42) .. controls (176.43,79.33) and (178.79,76.83) .. (181.69,76.83) .. controls (184.6,76.83) and (186.96,79.33) .. (186.96,82.42) .. controls (186.96,85.5) and (184.6,88) .. (181.69,88) .. controls (178.79,88) and (176.43,85.5) .. (176.43,82.42) -- cycle ;
\draw  [fill={rgb, 255:red, 250; green, 242; blue, 242 }  ,fill opacity=1 ] (160,118.4) .. controls (160,112.22) and (165.1,107.22) .. (171.39,107.22) .. controls (177.69,107.22) and (182.79,112.22) .. (182.79,118.4) .. controls (182.79,124.58) and (177.69,129.58) .. (171.39,129.58) .. controls (165.1,129.58) and (160,124.58) .. (160,118.4) -- cycle ;
\draw  [fill={rgb, 255:red, 250; green, 242; blue, 242 }  ,fill opacity=1 ] (211,118.4) .. controls (211,112.22) and (216.1,107.22) .. (222.39,107.22) .. controls (228.69,107.22) and (233.79,112.22) .. (233.79,118.4) .. controls (233.79,124.58) and (228.69,129.58) .. (222.39,129.58) .. controls (216.1,129.58) and (211,124.58) .. (211,118.4) -- cycle ;
\draw  [fill={rgb, 255:red, 250; green, 242; blue, 242 }  ,fill opacity=1 ] (257,119.4) .. controls (257,113.22) and (262.1,108.22) .. (268.39,108.22) .. controls (274.69,108.22) and (279.79,113.22) .. (279.79,119.4) .. controls (279.79,125.58) and (274.69,130.58) .. (268.39,130.58) .. controls (262.1,130.58) and (257,125.58) .. (257,119.4) -- cycle ;
\draw    (177,85.22) -- (129,111.22) ;
\draw    (181.69,88) -- (175,108.22) ;
\draw    (186.96,82.42) -- (268.39,108.22) ;
\draw    (186,86.22) -- (222.39,107.22) ;
\draw   (418,36) .. controls (428.84,31) and (454,39) .. (464,49) .. controls (474,59) and (512.97,82.72) .. (526.97,119.9) .. controls (540.97,157.08) and (543,212.85) .. (537,218.3) .. controls (531,223.75) and (361,228.75) .. (343,219.3) .. controls (325,209.85) and (340.97,124.62) .. (351,102.85) .. controls (361.03,81.08) and (362,72.85) .. (380,61) .. controls (398,49.15) and (407.16,41) .. (418,36) -- cycle ;
\draw   (467,111) .. controls (476,111.98) and (482,135.82) .. (486,150.82) .. controls (490,165.82) and (491,188.3) .. (491,207.3) .. controls (491,226.3) and (475,219.3) .. (458,222.3) .. controls (441,225.3) and (438.62,193.63) .. (445,155.82) .. controls (451.38,118) and (458,110.02) .. (467,111) -- cycle ;
\draw  [fill={rgb, 255:red, 250; green, 242; blue, 242 }  ,fill opacity=1 ] (424.19,33) .. controls (424.19,28.39) and (427.78,24.65) .. (432.21,24.65) .. controls (436.65,24.65) and (440.24,28.39) .. (440.24,33) .. controls (440.24,37.61) and (436.65,41.35) .. (432.21,41.35) .. controls (427.78,41.35) and (424.19,37.61) .. (424.19,33) -- cycle ;
\draw   (417,113) .. controls (426,113.98) and (432,137.82) .. (436,152.82) .. controls (440,167.82) and (442,189.3) .. (442,208.3) .. controls (442,227.3) and (424,219.3) .. (407,222.3) .. controls (390,225.3) and (388.62,195.63) .. (395,157.82) .. controls (401.38,120) and (408,112.02) .. (417,113) -- cycle ;
\draw   (367,112) .. controls (376,112.98) and (382,136.82) .. (386,151.82) .. controls (390,166.82) and (390,195.3) .. (389,212.3) .. controls (388,229.3) and (363,217.78) .. (353,219.3) .. controls (343,220.82) and (338.62,194.63) .. (345,156.82) .. controls (351.38,119) and (358,111.02) .. (367,112) -- cycle ;
\draw  [fill={rgb, 255:red, 250; green, 242; blue, 242 }  ,fill opacity=1 ] (354,119.4) .. controls (354,113.22) and (359.1,108.22) .. (365.39,108.22) .. controls (371.69,108.22) and (376.79,113.22) .. (376.79,119.4) .. controls (376.79,125.58) and (371.69,130.58) .. (365.39,130.58) .. controls (359.1,130.58) and (354,125.58) .. (354,119.4) -- cycle ;
\draw   (516,112) .. controls (525,112.98) and (531,136.82) .. (535,151.82) .. controls (539,166.82) and (540,183.82) .. (540,202.82) .. controls (540,221.82) and (524,217.3) .. (507,220.3) .. controls (490,223.3) and (491,194.3) .. (494,156.82) .. controls (497,119.33) and (507,111.02) .. (516,112) -- cycle ;
\draw  [fill={rgb, 255:red, 250; green, 242; blue, 242 }  ,fill opacity=1 ] (420.43,83.42) .. controls (420.43,80.33) and (422.79,77.83) .. (425.69,77.83) .. controls (428.6,77.83) and (430.96,80.33) .. (430.96,83.42) .. controls (430.96,86.5) and (428.6,89) .. (425.69,89) .. controls (422.79,89) and (420.43,86.5) .. (420.43,83.42) -- cycle ;
\draw  [fill={rgb, 255:red, 250; green, 242; blue, 242 }  ,fill opacity=1 ] (404,119.4) .. controls (404,113.22) and (409.1,108.22) .. (415.39,108.22) .. controls (421.69,108.22) and (426.79,113.22) .. (426.79,119.4) .. controls (426.79,125.58) and (421.69,130.58) .. (415.39,130.58) .. controls (409.1,130.58) and (404,125.58) .. (404,119.4) -- cycle ;
\draw  [fill={rgb, 255:red, 250; green, 242; blue, 242 }  ,fill opacity=1 ] (455,119.4) .. controls (455,113.22) and (460.1,108.22) .. (466.39,108.22) .. controls (472.69,108.22) and (477.79,113.22) .. (477.79,119.4) .. controls (477.79,125.58) and (472.69,130.58) .. (466.39,130.58) .. controls (460.1,130.58) and (455,125.58) .. (455,119.4) -- cycle ;
\draw  [fill={rgb, 255:red, 250; green, 242; blue, 242 }  ,fill opacity=1 ] (501,120.4) .. controls (501,114.22) and (506.1,109.22) .. (512.39,109.22) .. controls (518.69,109.22) and (523.79,114.22) .. (523.79,120.4) .. controls (523.79,126.58) and (518.69,131.58) .. (512.39,131.58) .. controls (506.1,131.58) and (501,126.58) .. (501,120.4) -- cycle ;
\draw    (421,86.22) -- (373,112.22) ;
\draw    (425.69,89) -- (419,109.22) ;
\draw    (430.96,83.42) -- (512.39,109.22) ;
\draw    (430,87.22) -- (466.39,108.22) ;
\draw    (369,229.87) -- (369,254.87) ;
\draw [shift={(369,257.87)}, rotate = 270] [fill={rgb, 255:red, 0; green, 0; blue, 0 }  ][line width=0.08]  [draw opacity=0] (8.93,-4.29) -- (0,0) -- (8.93,4.29) -- cycle    ;
\draw    (423,229.87) -- (423,254.87) ;
\draw [shift={(423,257.87)}, rotate = 270] [fill={rgb, 255:red, 0; green, 0; blue, 0 }  ][line width=0.08]  [draw opacity=0] (8.93,-4.29) -- (0,0) -- (8.93,4.29) -- cycle    ;
\draw    (469,229.87) -- (469,254.87) ;
\draw [shift={(469,257.87)}, rotate = 270] [fill={rgb, 255:red, 0; green, 0; blue, 0 }  ][line width=0.08]  [draw opacity=0] (8.93,-4.29) -- (0,0) -- (8.93,4.29) -- cycle    ;
\draw    (519,229.87) -- (519,254.87) ;
\draw [shift={(519,257.87)}, rotate = 270] [fill={rgb, 255:red, 0; green, 0; blue, 0 }  ][line width=0.08]  [draw opacity=0] (8.93,-4.29) -- (0,0) -- (8.93,4.29) -- cycle    ;

\draw (175.31,-1) node [anchor=north west][inner sep=0.75pt]   [align=left] {$\displaystyle  \begin{array}{{>{\displaystyle}l}}
u\\
\end{array}$};
\draw (107,226) node [anchor=north west][inner sep=0.75pt]   [align=left] {$\displaystyle B_{1}$};
\draw (112.14,110.2) node [anchor=north west][inner sep=0.75pt]   [align=left] {$\displaystyle v_{1}$};
\draw (162,227) node [anchor=north west][inner sep=0.75pt]   [align=left] {$\displaystyle B_{2}$};
\draw (212,227) node [anchor=north west][inner sep=0.75pt]   [align=left] {$\displaystyle B_{3}$};
\draw (264,224) node [anchor=north west][inner sep=0.75pt]   [align=left] {$\displaystyle B_{4}$};
\draw (174.14,56.2) node [anchor=north west][inner sep=0.75pt]   [align=left] {$\displaystyle v$};
\draw (161.14,110.2) node [anchor=north west][inner sep=0.75pt]   [align=left] {$\displaystyle v_{2}$};
\draw (211.14,110.2) node [anchor=north west][inner sep=0.75pt]   [align=left] {$\displaystyle v_{3}$};
\draw (259.14,110.2) node [anchor=north west][inner sep=0.75pt]   [align=left] {$\displaystyle v_{4}$};
\draw (114,153) node [anchor=north west][inner sep=0.75pt]   [align=left] {$\displaystyle 0$};
\draw (165,154) node [anchor=north west][inner sep=0.75pt]   [align=left] {$\displaystyle 0$};
\draw (268,153) node [anchor=north west][inner sep=0.75pt]   [align=left] {$\displaystyle 0$};
\draw (212,152.82) node [anchor=north west][inner sep=0.75pt]   [align=left] {$\displaystyle 1^{+}$};
\draw (105,261) node [anchor=north west][inner sep=0.75pt]   [align=left] {$\displaystyle I=\{1,2,4\}$};
\draw (221,258) node [anchor=north west][inner sep=0.75pt]   [align=left]  {$\displaystyle \overline{I}=\{3\}$ };
\draw (419.31,0) node [anchor=north west][inner sep=0.75pt]   [align=left] {$\displaystyle  \begin{array}{{>{\displaystyle}l}}
u\\
\end{array}$};
\draw (356.14,111.2) node [anchor=north west][inner sep=0.75pt]   [align=left] {$\displaystyle v_{1}$};
\draw (407,260) node [anchor=north west][inner sep=0.75pt]   [align=left] {$\displaystyle 0/1$};
\draw (418.14,57.2) node [anchor=north west][inner sep=0.75pt]   [align=left] {$\displaystyle v$};
\draw (405.14,111.2) node [anchor=north west][inner sep=0.75pt]   [align=left] {$\displaystyle v_{2}$};
\draw (455.14,111.2) node [anchor=north west][inner sep=0.75pt]   [align=left] {$\displaystyle v_{3}$};
\draw (503.14,111.2) node [anchor=north west][inner sep=0.75pt]   [align=left] {$\displaystyle v_{4}$};
\draw (358,154) node [anchor=north west][inner sep=0.75pt]   [align=left] {$\displaystyle 0$};
\draw (409,155) node [anchor=north west][inner sep=0.75pt]   [align=left] {$\displaystyle 0$};
\draw (512,154) node [anchor=north west][inner sep=0.75pt]   [align=left] {$\displaystyle 0$};
\draw (456,153.82) node [anchor=north west][inner sep=0.75pt]   [align=left] {$\displaystyle 1^{+}$};
\draw (352,260) node [anchor=north west][inner sep=0.75pt]   [align=left] {$\displaystyle 0/1$};
\draw (504,259) node [anchor=north west][inner sep=0.75pt]   [align=left] {$\displaystyle 0/1$};
\draw (457,262) node [anchor=north west][inner sep=0.75pt]   [align=left] {$\displaystyle =$};

\end{tikzpicture}

\end{center}


Let $J_= \subseteq I$ be such that $-\alpha |J_=| + \frac{1}{|A|} \sum_{j \in J_=} |T(v_j)|$ is maximum among all subsets $Z \subseteq I$ verifying: 
\begin{equation*} \Delta(v, \left(\cup_{i \in \overline{I}} B_i \right) \cup \left( \cup_{j \in Z} \left\{ v_j \right\} \right) ) = \Delta_0 \end{equation*}

Similarly, let $J_> \subseteq I$ be such that $-\alpha |J_>| + \frac{1}{|A|} \sum_{j \in J_>} |T(v_j)|$ is maximum among all subsets $Z \subseteq I$ verifying: 
\begin{equation*} \Delta(v, \left(\cup_{i \in \overline{I}} B_i \right) \cup \left( \cup_{j \in Z} \left\{ v_j \right\} \right) ) > \Delta_0 \end{equation*}

Finally, define $S_=[B_1,...,B_{k(v)};m_1,...,m_{k(v)}] = \left( \cup_{i \in \overline{I}} B_i \right) \cup \left( \cup_{j \in J_=} \left\{v_j \right\} \right)$ if such subset $J_=$ exists and $S_>[B_1,...,B_{k(v)};m_1,...,m_{k(v)}] = \left( \cup_{i \in \overline{I}} B_i \right) \cup \left( \cup_{j \in J_>} \left\{v_j \right\} \right)$ if such subset $J_>$ exists.
\end{definition}

\begin{proposition}
\label{prop:strategy2}
Let $B_j$ be a $(\Delta_0,m_j,T(v_j))-$strategy for $j=1,...,k(v)$ and let $m=m_1+...+m_{k(v)}$. Then $S_=[B_1,...,B_{k(v)};m_1,...,m_{k(v)}]$ is a $(\Delta_0,m+1,\overline{m},T(v))$-strategy and $S_>[B_1,...,B_{k(v)};m_1,...,m_{k(v)}]$ is a $(\Delta_0,m,\overline{m},T(v))-$strategy otherwise.  
\end{proposition}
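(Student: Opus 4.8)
The plan is to verify, directly and condition by condition, that $S_=$ fulfils the definition of a $(\Delta_0,m+1,\overline m,T(v))$-strategy, and likewise that $S_>$ fulfils that of a $(\Delta_0,m,\overline m,T(v))$-strategy, under the standing assumption that the index set $J_=$ (resp. $J_>$) exists, so that the strategy in question is defined at all. The first move is the decomposition over the children of $v$: writing $S_= = \bigl(\bigcup_{i\in\overline I}B_i\bigr)\cup\bigl(\bigcup_{j\in J_=}\{v_j\}\bigr)$ and using that the subtrees $T(v_1),\dots,T(v_{k(v)})$ are pairwise disjoint while $v_j$ is the root of $T(v_j)$, one reads off that $S_=\cap T(v_i)$ equals $B_i$ when $i\in\overline I$, equals $\{v_i\}$ when $i\in J_=$, and equals $\emptyset$ when $i\in I\setminus J_=$; in particular $S_=\subseteq V_1(G')$, so it is a legitimate link-strategy.

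The heart of the argument is to show that each $S_=\cap T(v_i)$ is again a $(\Delta_0,m_i,T(v_i))$-strategy. For $i\in\overline I$ and for $i\in I\setminus J_=$ there is nothing to prove, since $S_=\cap T(v_i)$ is $B_i$, which is one by hypothesis. The only new case is $i\in J_=$, where $B_i=\emptyset$ and I must check that inserting the single link $\{v_i\}$ changes nothing inside $T(v_i)$, i.e. $\Delta(z,\{v_i\})=\Delta(z,\emptyset)$ for every $z\in\mathcal U[G']\cap T(v_i)$. Monotonicity of $\Delta$ under inclusion of link-strategies already gives $\Delta(z,\{v_i\})\ge\Delta(z,\emptyset)$, so the task reduces to showing that the edge $\{v_i\}$ is redundant upon deleting $z$. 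This is precisely where $v_i\in V_1(G')$ being a maximal $2_{\mathcal I}$-vertex-connected component is used: the endpoint of $\{v_i\}$ is some immunised meta-node $w$ of $v_i$, and since $z$ is vulnerable it is distinct from $w$ and from the parent articulation meta-node $v$ (which is not itself a meta-node of $T(v_i)$). If $z$ lies in $v_i$, then $v_i\setminus\{z\}$ stays connected and still contains both $w$ and $v$; if $z$ lies strictly below $v_i$, deleting $z$ leaves $v_i$ and the $v$--$u$ path (which runs through $\overline{T(v_i)}$, hence is untouched by $z\in T(v_i)$) intact. Either way, after deleting $z$ the nodes $w$ and $W(u,\emptyset)$ already lie in a common connected component without the help of $\{v_i\}$, so the connected components, and hence $\Delta(z,\cdot)$, are unchanged. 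Consequently $A(\{v_i\},\Delta_0,T(v_i))=A(\emptyset,\Delta_0,T(v_i))$, so $|A(\{v_i\},\Delta_0,T(v_i))|=m_i$ and $\{v_i\}$ is indeed a $(\Delta_0,m_i,T(v_i))$-strategy. (Equivalently, the subcase ``$z$ in a child subtree $T(v_{i,\ell})$ of $v_i$'' is immediate from Lemma~\ref{lemm:crucial} because $v_i\notin T(v_{i,\ell})$, while ``$z$ inside $v_i$'' is the $2_{\mathcal I}$-connectivity observation above.)

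Granting this, the conclusion follows from Lemma~\ref{lemm:crucial} together with the defining properties of $J_=$ and $J_>$. For each $i$ and each $z\in\mathcal U[G']\cap T(v_i)$, Lemma~\ref{lemm:crucial} gives $\Delta(z,S_=)=\Delta(z,S_=\cap T(v_i))$; combined with the previous step this yields $\Delta(z,S_=)\ge\Delta_0$ for all such $z$, the equalities $A(S_=,\Delta_0,T(v_i))=A(S_=\cap T(v_i),\Delta_0,T(v_i))$ and hence $|A(S_=,\Delta_0,T(v_i))|=m_i$ (the $\overline m$-condition), and thus that exactly $\sum_{i=1}^{k(v)}m_i=m$ meta-nodes of $\bigcup_i\mathcal U[T(v_i)]$ have delta value $\Delta_0$. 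It remains only to account for $v$ itself, using $\mathcal U[T(v)]=\{v\}\cup\bigcup_i\mathcal U[T(v_i)]$ as a disjoint union: by construction of $J_=$ one has $\Delta(v,S_=)=\Delta_0$, so $v$ satisfies $\Delta(v,S_=)\ge\Delta_0$ and contributes exactly one further attacked meta-node, giving $|A(S_=,\Delta_0,T(v))|=m+1$; hence $S_=$ is a $(\Delta_0,m+1,\overline m,T(v))$-strategy. The argument for $S_>$ is word for word the same, except that by construction of $J_>$ one has $\Delta(v,S_>)>\Delta_0$, so $v$ is not among the attacked meta-nodes and $|A(S_>,\Delta_0,T(v))|=m$, proving $S_>$ is a $(\Delta_0,m,\overline m,T(v))$-strategy.

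The one step I expect to require genuine care is the redundancy of the link $\{v_i\}$ inside $T(v_i)$ — that adding a single link into the $2_{\mathcal I}$-vertex-connected component $v_i$ leaves every delta value within $T(v_i)$, and the attack set of every such subtree, intact; everything else is bookkeeping on top of Lemma~\ref{lemm:crucial}, monotonicity of $\Delta$, and the definitions of $J_=$ and $J_>$.
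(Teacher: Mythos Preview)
Your argument is correct and follows essentially the same route as the paper: decompose over the children via Lemma~\ref{lemm:crucial}, verify that $S_x\cap T(v_i)$ is a $(\Delta_0,m_i,T(v_i))$-strategy for each $i$, and then use the defining property of $J_x$ to settle whether $v$ itself is attacked. The only cosmetic difference is that where you spell out the $2_{\mathcal I}$-connectivity argument to show that adding the single link $\{v_i\}$ leaves every $\Delta(z,\cdot)$ inside $T(v_i)$ unchanged, the paper simply invokes the earlier observation (stated just before Proposition~\ref{lemm:restrictedutility}) that $\Delta(z,\cdot)$ is strategy-independent for $z\in\mathcal U_1(G')$, together with Lemma~\ref{lemm:crucial} for $z$ in a proper subtree below $v_i$.
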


\begin{proof}   Let $x$ denote the symbol $=$ or $>$ and let $S=S_x[B_1,...,B_{k(v)};m_1,...,m_{k(v)}]$. If $i \in \overline{I} \cup (I \setminus J_x) $ and $z \in \mathcal{U}[T(v_i)]$, then using Lemma \ref{lemm:crucial} $\Delta(z,S) = \Delta(z, S \cap T(v_i)) = \Delta(z,B_i)$ so that $|A(S,\Delta_0,T(v_i))| = m_i$. Suppose now that $i \in J_x$. Since the only links to $B_i$ that we are adding in this case is the link to the node $v_i$, then it is clear, using Lemma \ref{lemm:crucial} that $\Delta(z,S) = \Delta(z,S \cap T(v_i)) = \Delta(z,B_i \cup \left\{v_i\right\}) = \Delta(z,B_i)$ for any $z \in \mathcal{U}[T(v_i) \setminus \left\{v_i\right\}]$. 

Furthermore, when $z \in \mathcal{U}_1(G')$ is a vulnerable meta-node from $v_i \in V_1(G')$ with $i \in J_x$, then $\Delta(z,S) = \Delta(z,\emptyset)$. Therefore, the number of vulnerable meta-nodes from $T(v_i)$ achieving a delta value equal to $\Delta_0$ is $|A(B_i,\Delta_0,T(v_i)\setminus \left\{v_i\right\})|+\widetilde{m}(v_i,\Delta_0) =|A(B_i,\Delta_0,T(v_i))|= m_i$. 

Finally, by construction, the delta value of $v$ satisfies that it is equal to $\Delta_0$ if $x$ equals $=$ and therefore in this case the cardinality of the restricted attack set $A(S,\Delta_0,T(v))$ is one unit larger than $m$ or greater than $\Delta_0$ otherwise and then in this case the cardinality of the restricted attack set $A(S,\Delta_0,T(v))$ equals exactly $m$. This completes the proof of the result. 

\end{proof}

\begin{theorem}
\label{thm:2}
Suppose that $m_1,...,m_{k(v)}$ with $m_1+...+m_{k(v)}=m$ are given and $B_j$ is a $(\Delta_0,|A|,m_j,T(v_j))$-BR-strategy for each $j=1,...,k(v)$. Also, let $S_x=S_x[B_1,...,B_{k(v)};m_1,...,m_{k(v)}]$ with $x \in \left\{ =, > \right\}$. Then, $S_=$ maximises $U(u,\cdot,\Delta_0,|A|,m+1,T(v))$ over all $(\Delta_0,m+1,\overline{m},T(v))-$strategies and $S_>$ maximises $U(u,\cdot, \Delta_0,|A|,m,T(v))$ over all $(\Delta_0,m,\overline{m},T(v))-$strategies. 
\end{theorem}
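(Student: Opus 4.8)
The plan is to follow the template of the proof of Theorem~\ref{THM:1}: decompose the restricted utility over the subtrees $T(v_1),\dots,T(v_{k(v)})$ by means of Proposition~\ref{lemm:restrictedutility}, invoke the BR-property of the $B_j$'s on the subtrees, and reassemble. The new feature compared with Theorem~\ref{THM:1} is that for $v\in V_{\geq 2}(G')$ both the contribution of $v$ to the restricted utility and whether $v$ lies in the restricted attack set are controlled by the set of children that the strategy touches, so this set must be tracked. Fix $x\in\{=,>\}$, put $m_=:=m+1$ and $m_>:=m$, and recall from Proposition~\ref{prop:strategy2} that $S_x:=S_x[B_1,\dots,B_{k(v)};m_1,\dots,m_{k(v)}]$ is a $(\Delta_0,m_x,\overline{m},T(v))$-strategy; hence it suffices to show $U(u,S',\Delta_0,|A|,T(v))\le U(u,S_x,\Delta_0,|A|,T(v))$ for an arbitrary $(\Delta_0,m_x,\overline{m},T(v))$-strategy $S'$. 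By Lemmas~\ref{lemm:crucial} and~\ref{lemm:crucial2} (links of $S'$ lying outside $T(v)$ only inflate the cost $\alpha|S'|$ without changing the connected components that matter) we may assume $S'\subseteq\bigcup_j T(v_j)$, so that $S'=\bigcup_j(S'\cap T(v_j))$ and Proposition~\ref{lemm:restrictedutility} is applicable to $S'$. Put $I:=\{i:B_i=\emptyset\}$, $\overline{I}:=[k(v)]\setminus I$, and for a strategy $\sigma\subseteq T(v)$ let $\mathrm{supp}(\sigma):=\{j:\sigma\cap T(v_j)\neq\emptyset\}$.

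The per-subtree estimates come first. By Lemma~\ref{lemm:closedbyintersection}, $S'\cap T(v_j)$ is a $(\Delta_0,m_j,T(v_j))$-strategy, so $U(u,S'\cap T(v_j),\Delta_0,|A|,T(v_j))\le U(u,B_j,\Delta_0,|A|,T(v_j))$; by Proposition~\ref{prop:new1}, $\overline{I}\subseteq\mathrm{supp}(S')$; and for $j\in I$ with $S'\cap T(v_j)\neq\emptyset$, Lemma~\ref{lemm:sandwich} and Corollary~\ref{corol:ccblockingpair} give $A(S'\cap T(v_j),\Delta_0,T(v_j))=A(\emptyset,\Delta_0,T(v_j))$ and $CC_u(a,S'\cap T(v_j))=CC_u(a,\emptyset)$ for these $a$, whence $U(u,S'\cap T(v_j),\Delta_0,|A|,T(v_j))=U(u,\emptyset,\Delta_0,|A|,T(v_j))-\alpha|S'\cap T(v_j)|\le U(u,B_j,\Delta_0,|A|,T(v_j))-\alpha$. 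The same computation, now with equalities, shows that $U(u,S_x\cap T(v_j),\Delta_0,|A|,T(v_j))$ equals $U(u,B_j,\Delta_0,|A|,T(v_j))$ for $j\in\overline{I}\cup(I\setminus J_x)$ and equals $U(u,B_j,\Delta_0,|A|,T(v_j))-\alpha$ for $j\in J_x$ (using that each $\{v_j\}$, $j\in J_x$, is a $(\Delta_0,m_j,T(v_j))$-strategy, which is part of Proposition~\ref{prop:strategy2}).

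Next I would feed these into Proposition~\ref{lemm:restrictedutility}, the key structural point being that $\Delta(v,\sigma)$ depends only on $\mathrm{supp}(\sigma)$: removing the articulation meta-node $v$ splits $G'$ into $\overline{T(v)}$ and the $T(v_j)$'s, and the links bought by $u$ merge into $u$'s component exactly the subtrees indexed by $\mathrm{supp}(\sigma)$; hence $\Delta(v,S')=\Delta\big(v,(\bigcup_{i\in\overline{I}}B_i)\cup(\bigcup_{j\in\mathrm{supp}(S')\setminus\overline{I}}\{v_j\})\big)$. Consider first $S_=$. The count $m_= = m+1 = \sum_j m_j+1$ forces $v\in A(S',\Delta_0,T(v))$ and $v\in A(S_=,\Delta_0,T(v))$, so the first formula of Proposition~\ref{lemm:restrictedutility} applies to both; substituting the subtree values, everything collapses to
\[ U(u,S',\Delta_0,|A|,T(v))\ \le\ C + \Big(-\alpha|K'| + \tfrac{1}{|A|}\sum_{j\in K'}|T(v_j)|\Big), \]
\[ U(u,S_=,\Delta_0,|A|,T(v))\ =\ C + \Big(-\alpha|J_=| + \tfrac{1}{|A|}\sum_{j\in J_=}|T(v_j)|\Big), \]
where $K':=\mathrm{supp}(S')\setminus\overline{I}$ and $C:=\sum_j U(u,B_j,\Delta_0,|A|,T(v_j)) + \tfrac{1}{|A|}\big(|\overline{T(v)}| + \sum_{j\in\overline{I}}|T(v_j)|\big)$ does not depend on $S'$. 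Because $\Delta(v,S')=\Delta_0$ exhibits $K'$ as an admissible set in the optimisation defining $J_=$, the maximality of $J_=$ yields $U(u,S',\Delta_0,|A|,T(v))\le U(u,S_=,\Delta_0,|A|,T(v))$. For $S_>$ one has $\widetilde{m}(v,\Delta_0)=0$ and $\Delta(v,\cdot)>\Delta_0$ for both $S'$ and $S_>$, so the ``otherwise'' formula of Proposition~\ref{lemm:restrictedutility} applies with vanishing correction term; the same bookkeeping, this time with no $|T(v_j)|$ contribution appearing, reduces the comparison to an inequality between $J_>$ and $K'$ that is again delivered by the extremal choice made in the definition of $J_>$.

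The step I expect to be the main obstacle is precisely this last piece of bookkeeping: proving cleanly that $\Delta(v,\cdot)$ is a function of the support alone, that the residual $K'=\mathrm{supp}(S')\setminus\overline{I}$ of \emph{any} competitor is always admissible for the optimisation defining $J_x$ (so that maximality of $J_x$ can be invoked), and that each $S_x\cap T(v_j)=\{v_j\}$ with $j\in J_x$ really has restricted utility $U(u,B_j,\Delta_0,|A|,T(v_j))-\alpha$ (which rests on Proposition~\ref{prop:strategy2}, Lemma~\ref{lemm:sandwich} and Corollary~\ref{corol:ccblockingpair}). Everything else is routine: a substitution into Proposition~\ref{lemm:restrictedutility} together with the definition of a BR-strategy.
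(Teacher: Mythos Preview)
Your argument for the $x={=}$ case is correct and is a genuine alternative to the paper's route. The paper starts from a $(\Delta_0,|A|,m{+}1,\overline{m},T(v))$-\emph{BR}-strategy $S'$ and first normalises it to $S''$ by replacing $S'\cap T(v_i)$ with $B_i$ on each $i\in\overline I$ (Proposition~\ref{prop:new3}) and then uses Proposition~\ref{prop:new2} to force $|S''\cap T(v_j)|\le 1$ for $j\in I$; only after this reduction does it plug into Proposition~\ref{lemm:restrictedutility} and compare the resulting index set $J'$ against $J_=$. You bypass both of these propositions: working with an arbitrary competitor $S'$, you use the subtree BR-inequalities directly, invoke Proposition~\ref{prop:new1} to get $\overline I\subseteq\mathrm{supp}(S')$, and make explicit the structural fact that $\Delta(v,\sigma)$ depends only on $\mathrm{supp}(\sigma)$ in order to exhibit $K'=\mathrm{supp}(S')\setminus\overline I$ as admissible for the optimisation defining $J_=$. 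What you gain is a shorter, more elementary proof that never needs $S'$ to be a BR-strategy; what the paper's normalisation buys is that the final comparison is between two strategies of identical shape (each touching the $j\in I$ subtrees in at most one node), so the bookkeeping is an exact equality rather than an inequality.

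One caution on the $x={>}$ case, which both you and the paper treat only in passing. When $v\notin A(\cdot,\Delta_0,T(v))$ the ``otherwise'' clause of Proposition~\ref{lemm:restrictedutility} contributes no $|T(v_j)|$ term, so the comparison collapses to $-\alpha|K'|$ versus $-\alpha|J_>|$. But $J_>$ is defined to maximise $-\alpha|Z|+\tfrac{1}{|A|}\sum_{j\in Z}|T(v_j)|$, not $-\alpha|Z|$; the extremality of $J_>$ therefore does not immediately deliver the needed inequality. You should either argue that, once $v\notin A$, any link into a $j\in I$ subtree strictly hurts the restricted utility (so the maximiser must in fact have minimal admissible support), or note that the objective defining $J_>$ should be read as $-\alpha|Z|$ in this branch. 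The paper's ``analogously'' glosses over the same point.
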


\begin{proof}  We start proving the case $x$ equals $=$, the other case $x$ equals $>$ can be proved analogously. 

Let $S = S_=$. We show that $U(u,S,\Delta_0,|A|,T(v))$ is the maximum utility that any $(\Delta_0,|A|,m+1,\overline{m},T(v))-$strategy can attain. In order to do this let $S'$ be a $(\Delta_0,|A|,m+1,\overline{m},T(v))-$BR-strategy and then we show that $U(u,S,\Delta_0,|A|,T(v))=U(u,S',\Delta_0,|A|,T(v))$.

Since $S'$ is a $(\Delta_0,|A|,m+1,\overline{m},T(v))$-BR-strategy, and $B_i$ are $(\Delta_0,|A|,m_i,T(v_i))-$BR-strategies then $S'' = \cup_{i \in \overline{I}}B_i \cup (S' \setminus \cup_{i\in \overline{I}}T(v_i))$ is a $(\Delta_0,|A|,m+1,\overline{m},T(v))-$BR-strategy due to Proposition \ref{prop:new3}.

From this we get that: 
\begin{equation*}U(u,S',\Delta_0,|A|,T(v)) =U(u,S'',\Delta_0,|A|,T(v))\end{equation*}

Therefore, it is enough if we prove that $U(u,S,\Delta_0,|A|,T(v))=U(u,S'',\Delta_0,|A|,T(v))$.

Now notice that the following results hold. On the one hand, we have that $U(u,S'' \cap T(v_j),\Delta_0,|A|,T(v_j)) = U(u,B_j,\Delta_0,|A|,T(v_j))$ for every $j\in \overline{I}$ because $S'' \cap T(v_j) = B_j$ and, by construction, there is no $(S,\Delta_0,T(v_j))-$blocking pair with $j \in J_=$ because there exists only one element from $S$ in $T(v_j)$ which is $v_j$. On the other hand, let $j \in J'$, where $J'$ is the subset of subindices $j$ from $I$ such that $S'' \cap T(v_j) \neq \emptyset$ i.e., $|S'' \cap T(v_j)| = 1$, because, for each $i \in I$ we know by Proposition \ref{prop:new2} that $|S'' \cap T(v_i)| \leq 1$. We claim that there does not exist any $(S'',\Delta_0,T(v_j))-$blocking pair with $j \in J'$ neither. This is because of the following reasoning: If $(a,b)$ is such a $(S'',\Delta_0,T(v_j))$-blocking pair, first we have that $\Delta(a,\emptyset) < \Delta(a,S'') = \Delta_0$, because  of the property $\Delta(z,Z) \leq \Delta(z,Z')$ whenever $Z \subseteq Z'$ together with the fact that $|S'' \cap T(v_j)| = 1$ by hypothesis. In second place, since $B_j=\emptyset$ and $B_j$ is a $(\Delta_0,T(v_j),m_j)$-strategy (in  fact $B_j$ is a $(\Delta_0,|A|, T(v_j),m_j)$-BR-strategy) then $\Delta(a,\emptyset)= \Delta(a,B_j) \geq \Delta_0$ for every $a\in \mathcal{U}[T(v_j)]$. Hence we have that   $ \Delta_0 \leq  \Delta(a,\emptyset) <  \Delta_0$, which is a contradiction. 


Therefore, combining these results together with Lemma \ref{lemm:ccblockingpair} we obtain that $CC_u(a,B_j) = CC_u(a,\emptyset)$ for every $a\in A(S \cap T(v_j),\Delta_0,T(v_j))$ with $j \in J_=$ and $CC_u(a,S'' \cap T(v_j)) = CC_u(a,\emptyset)$ for every $a \in A(S'' \cap T(v_j),\Delta_0,T(v_j))$ with $j \in J'$. Hence, for every $j \in J_=$:
\begin{equation*}U(u,B_j, \Delta_0,|A|,T(v_j)) = - \alpha + U(u,\emptyset,\Delta_0,|A|,T(v_j))\end{equation*} 
And, similarly, for every $j \in J'$:
\begin{equation*}U(u,S'' \cap T(v_j), \Delta_0,|A|,T(v_j)) = - \alpha + U(u,\emptyset,\Delta_0,|A|,T(v_j))\end{equation*} 
Furthermore, $U(u,S'' \cap T(v_j), \Delta_0,|A|,T(v_j)) = U(u,\emptyset, \Delta_0,|A|,T(v_j))$ for every $j \in I \setminus J'$ and $U(u,S \cap T(v_j), \Delta_0,|A|,T(v_j))=U(u,\emptyset, \Delta_0,|A|,T(v_j))$ for every $j \in I \setminus J_=$. From all these relationships we obtain:
\begin{equation*}\sum_{j=1}^{k(v)} U(u,S'' \cap T(v_j),\Delta_0,|A|,T(v_j)) =\sum_{j\in \overline{I}} U(u,S'' \cap T(v_j),\Delta_0,|A|,T(v_j)) +\end{equation*}
\begin{equation*}+\sum_{j\in J'} U(u,S'' \cap T(v_j),\Delta_0,|A|,T(v_j))+\sum_{j\in I\setminus J'} U(u,S'' \cap T(v_j),\Delta_0,|A|,T(v_j))=\end{equation*}
\begin{equation*}= \sum_{j\in \overline{I}} U(u,B_j,\Delta_0,|A|,T(v_j))+\sum_{j\in J'}\left(-\alpha+ U(u,\emptyset,\Delta_0,|A|,T(v_j))\right) +\end{equation*}
\begin{equation*}+\sum_{j\in I\setminus J'} U(u,\emptyset,\Delta_0,|A|,T(v_j)) = \sum_{j=1}^{k(v)} U(u,B_j,\Delta_0,|A|,T(v_j)) - \alpha |J'|\end{equation*}

And, similarly:
\begin{equation*}\sum_{j=1}^{k(v)} U(u,S \cap T(v_j),\Delta_0,|A|,T(v_j)) =  \sum_{j=1}^{k(v)} U(u,B_j,\Delta_0,|A|,T(v_j)) - \alpha |J_=|\end{equation*}

Now, using the  formula from Proposition \ref{lemm:restrictedutility}, we obtain: 
\begin{equation*}U(u,S'',\Delta_0,|A|,T(v)) = \end{equation*}
\begin{equation*}=\sum_{j=1}^{k(v)} U(u,S'' \cap T(v_j),\Delta_0,|A|,T(v_j)) + \frac{1}{|A|} \left(|\overline{T(v)}|+\sum_{j \in \overline{I} \cup J'} |T(v_j)| \right) = \end{equation*}
\begin{equation*}=   \sum_{j=1}^{k(v)} U(u,B_j,\Delta_0,|A|,T(v_j)) - \alpha |J'|+ \frac{1}{|A|} \left(|\overline{T(v)}|+\sum_{j \in \overline{I} \cup J'} |T(v_j)| \right)\end{equation*}

And, similarly:
\begin{equation*}U(u,S,\Delta_0,|A|,T(v)) = \end{equation*}
\begin{equation*}= \sum_{j=1}^{k(v)} U(u,B_j,\Delta_0,|A|,T(v_j)) - \alpha |J_=|+ \frac{1}{|A|} \left(|\overline{T(v)}|+\sum_{j \in \overline{I} \cup J_=} |T(v_j)| \right)\end{equation*}

Therefore, by construction of $S = S_=[B_1,...,B_{k(v)};m_1,...,m_{k(v)}]$,
\begin{equation*}U(u,S,\Delta_0,|A|,T(v)) -U(u,S'',\Delta_0,|A|,T(v)) = \end{equation*}
\begin{equation*}=-\alpha (|J_=|-|J'|)+\frac{1}{|A|}\left(\sum_{j \in J_=}|T(v_j)|-\sum_{j \in J'}|T(v_j)| \right) \geq 0\end{equation*}

Implying $U(u,S,\Delta_0,|A|,T(v))=U(u,S'',\Delta_0,|A|,T(v))$, as we wanted to see.
\end{proof}


\section{Computing a BR in Polynomial Time}

Finally, we have reached the last section in which we are ready to give the main algorithm to compute the best response. We use a dynamic programming approach exploiting all the main results of the previous sections. More precisely, the main technique in our algorithm has some reminiscence to the  knapsack-problem, although is considerably more complex and involved. In order to understand our algorithm we introduce some arrays that help us to obtain a final solution.  

\subsection{Definition of the arrays}

Let $M[v,\Delta_0,|A|,m]$ be the maximum $(\Delta_0,|A|,T(v))-$restricted utility a $(\Delta_0,m,T(v))-$strategy can attain or $-\infty$ if there does not exist any such $(\Delta_0,m,T(v))-$strategy. In case that $M[v,\Delta_0,|A|,m] \neq -\infty$ then $S[v,\Delta_0,|A|,m]$ will be such a $(\Delta_0,|A|,m,T(v))-$BR-restricted strategy.

\textbf{Auxiliary arrays.} Let $v\in V(T)$ be a node from the meta-tree. For each $i = 1,..., k(v)$ define $T^i(v) = \cup_{j \geq 1}^i T(v_j)$. We introduce an auxiliary array that can help us to compute the values $M[v,\Delta_0, |A|, m]$:

\vskip 5pt
\noindent \emph{(a) If $v \in V_1(G')$.} Let $M_{aux}[v,\Delta_0,|A|,m,i]$ be the maximum $(\Delta_0,|A|,T^i(v))-$restricted utility a $(\Delta_0,m,T^i(v))-$strategy can attain or $-\infty$ if there does not exist any such $(\Delta_0,m,T^i(v))-$strategy. In case that $M_{aux}[v,\Delta_0,|A|,m,i] \neq \infty$ then $S_{aux}[v,\Delta_0,|A|,m,i]$ will be such a $(\Delta_0,|A|,m,T^i(v))-$BR-restricted strategy.

\vskip 10pt

\noindent \emph{(b) If $v \in V_{\geq 2}(G')$}. Let $M_{aux}[v,\Delta_0,|A|,m,i,CC_u,CC_{u,2},JJ]$ be the array with value in every position to be the maximum $(\Delta_0,|A|,T^i(v))-$restricted utility a $(\Delta_0,m,T^{i}(v))-$strategy $S \subseteq T^i(v)$  can attain given that

(i) $|CC_u(v,S)| = |\overline{T(v)}|+\sum_{S \cap T(v_j) \neq \emptyset} |T(v_j)| = CC_u$

(ii) $|\overline{T(v)}|^2+\sum_{S \cap T(v_j) \neq \emptyset}|T(v_j)|^2 = CC_{u,2}$

(iii) $-\alpha |J(S)| + \frac{1}{|A|} \sum_{j \in J(S) }|T(v_j)| = JJ$ where  $J(S)$ is the subset of indices $j$ with $1\leq j \leq i$ satisfying $S \cap T(v_j) = \left\{v_j\right\}$. 

Or $-\infty$ if such a $(\Delta_0,m,T^i(v))-$strategy does not exist. In case that $M_{aux}[v,\Delta_0,|A|,m,i,CC_u,CC_{u,2},JJ] \neq -\infty$ then $S_{aux}[v,\Delta_0,|A|,m,i,CC_u,CC_{u,2},JJ]$ will be such a $(\Delta_0,|A|,T^i(v),m)-$BR-restricted strategy.

\subsection{Recurrence relations}

We have seen in the previous sections how a restricted BR-strategy for a node $v\in V(T)$ can be obtained in terms of the restricted BR-strategies for the corresponding subtrees $T(v_1),...,T(v_{k(v)})$. This is what we are going to exploit in the next results.

\textbf{The arrays $M,S$ in terms of the arrays $M_{aux},S_{aux}$.} 

First, we see how can we compute the corresponding values from $M,S$ assuming that we have previously computed the corresponding values from $M_{aux}, S_{aux}$, and distinguishing between the two cases $v\in V_1(G')$ and $v\in V_{\geq 2}(G')$. 

These results follow easily from the definitions of the arrays:

\noindent \emph{(a) Suppose that $v \in V_{1}(G')$.}  Then:
$$M[v,\Delta_0,|A|,m] = M_{aux}[v, \Delta_0, |A|, m - \widetilde{m}(v,\Delta_0),k(v)]$$
And if $M[v,\Delta_0,|A|,m] \neq -\infty$ then:
$$S[v,\Delta_0,|A|,m] = S_{aux}[v, \Delta_0, |A|, m - \widetilde{m}(v,\Delta_0),k(v)]$$

\noindent \emph{(b) Suppose that $v\in V_{\geq 2}(G')$.} Then notice first of all the next result. 

For any strategy $S$: 
$$\Delta(v,S) = \Delta(v,\emptyset)- \left( |\overline{T(v)}|^2 + \sum_{S \cap T(v_j) \neq \emptyset} |T(v_j)|^2\right)+|CC_u(v,S)|^2$$

Therefore, given the parameters $\Delta_0$, $CC_{u,2}$ and $CC_u$ the next value indicates whether $v$ is attacked in the BR-strategy we are building (or $-\infty$ if there is no such BR-strategy):

$$\epsilon(v,\Delta_0,CC_u,CC_{u,2}) = 
\begin{cases} 
0   & \text{if $\Delta_0 < \Delta(v,\emptyset)-CC_{u,2}+CC_u^2$} \\ 
1   & \text{if $\Delta_0 = \Delta(v,\emptyset)-CC_{u,2}+CC_u^2$} \\ 
-\infty & \text{ otherwise}\end{cases} $$

Then:
$$M[v,\Delta_0,|A|,m] =$$
$$= \max_{\substack{CC_u, CC_{u,2} ,JJ \\ \epsilon(v,\Delta_0,CC_u,CC_{u,2}) \neq -\infty} }M_{aux}[v, \Delta_0, |A|, m-\epsilon(v,\Delta_0,CC_u,CC_{u,2})  ,k(v),CC_u,CC_{u,2}, JJ]$$
If $M_{aux}[v,\Delta_0,|A|,m] \neq -\infty$ consider 
$$CC_u^*,CC_{u,2}^*,JJ^*=$$
$$= \arg\max_{\substack{CC_u, CC_{u,2} ,JJ \\ \epsilon(v,\Delta_0,CC_u,CC_{u,2}) \neq -\infty} }M_{aux}[v, \Delta_0, |A|, m-\epsilon(v,\Delta_0,CC_u,CC_{u,2})  ,k(v),CC_u,CC_{u,2}, JJ]$$
Finally we  we set $S[v,\Delta_0,|A|,m] =  S_{aux}[v,\Delta_0,|A|,m,i,CC_u^*,CC_{u,2}^*,JJ^*]$.

\textbf{The arrays $M_{aux},S_{aux}$ in terms of the arrays $M,S,M_{aux},S_{aux}$.}  

Here comes a trickier part. As before, we need to distinguish between the two scenarios $v\in V_1(G')$ and $v\in V_{\geq 2}(G')$.

\noindent \emph{(a) Suppose that $v\in V_{1}(G')$.}  

In this scenario we distinguish the main recurrence (the case $i>0$) and the corresponding initialization of the array (the case $i=0$).

\begin{itemize}

\item If $i > 0$. We know by Theorem \ref{THM:1} that by the disjoint union of the restricted BR-strategies from $T(v_i)$ with $1 \leq i \leq k(v)$ we obtain a restrict BR-strategy for $T(v)$ with no need of buying more links. In view of this we set:
 
$$M_{aux}[v,\Delta_0,|A|,m,i] = \max_{0 \leq m_i \leq m} \left\{ M_{aux}[v,\Delta_0,|A|,m-m_i,i-1]+M[v_i,\Delta_0,|A|,m_i] \right\}$$
And if $M_{aux}[v,\Delta_0,|A|,m,i] \neq -\infty$ consider 
$$m_i^* = \argmax_{0 \leq m_i \leq m} \left\{M_{aux}[v,\Delta_0,|A|,m-m_i,i-1]+M[v_i,\Delta_0,|A|,m_i] \right\}$$
Then in such case we can set $S_{aux}[v,\Delta_0,|A|,m,i] =  S_{aux}[v,\Delta_0,|A|,m-m_i^{*},i-1]\cup S[v_i,\Delta_0,|A|,m_i^{*}]$. 

\item  Otherwise, if $i=0$, then $M_{aux}[v,\Delta_0,|A|,m,i] = -\infty$ if $m \neq 0$ and $M_{aux}[v,\Delta_0,|A|,m,i] = 0$ and $S_{aux}[v,\Delta_0,|A|,m_i]=\emptyset$ otherwise. 

\end{itemize}

\noindent \emph{(b) Suppose that $v\in V_{\geq 2}(G')$.} 

As before, we distinguish the main recurrence (the case $i>0$) and the corresponding initialization of the array (the case $i=0$). 

\begin{itemize}

\item If $i > 0$, let $B_i$ be the restricted BR-strategy corresponding to the subtree $T(v_i)$.  We distinguish between three situations: 

Scenario 1: if $B_i \neq \emptyset$, then $u$ does not need to buy any link to $T(v_i)$ using Theorem \ref{thm:2}. 

Scenario 2: if $B_i = \emptyset$. Then, again by Theorem \ref{thm:2} $u$ buys at most one link to $B_i$ so there are two possible cases:

\hspace{1cm} Case 2.1: $u$ buys one link towards $\left\{v_i\right\}$.

\hspace{1cm} Case 2.2: $u$ buys no link towards $T(v_i)$. 
Therefore:

$$M_{aux}[v,\Delta_0,|A|,m,i,CC_u,CC_{u,2},JJ] =\max(M_1,M_2,M_3)$$

Where $M_1,M_2,M_3$ correspond to scenario 1 and scenario 2 cases 2.1 and 2.2, respectively:
$$M_1= \max_{0 \leq m_i \leq m} \{ M_{aux}[v,\Delta_0,|A|,m-m_i,i-1,CC_u-|T(v_i)|,CC_{u,2}-|T(v_i)|^2,JJ] +$$
$$+M[v_i,\Delta_0,|A|,m_i] +\frac{1}{|A|}|T(v_i)| \}$$
$$M_2 = \max_{0 \leq m_i \leq m} \{ M_{aux}[v,\Delta_0,|A|,m-m_i,i-1,CC_u-|T(v_i)|,CC_{u,2}-|T(v_i)|^2,JJ-\alpha+ \frac{|T(v_i)|}{|A|}]+$$
$$+M[v_i,\Delta_0,|A|,m_i]+\frac{1}{|A|}|T(v_i)| \}$$
$$M_3= \max_{0 \leq m_i \leq m} \{M_{aux}[v,\Delta_0,|A|,m-m_i,i-1,CC_u,CC_{u,2},JJ] + M[v_i,\Delta_0,|A|,m_i]\}$$

If $M_1 \neq -\infty$ let 
$$m_{i,1}^* = \argmax_{0 \leq m_i \leq m} \{M_{aux}[v,\Delta_0,|A|,m-m_i,i-1,CC_u-|T(v_i)|,CC_{u,2}-|T(v_i)|^2,JJ] +$$
$$+M[v_i,\Delta_0,|A|,m_i] +\frac{1}{|A|}|T(v_i)|\}$$
Then let  $S_1 = S_{aux}[v,\Delta_0,|A|,m-m_{i,1}^*,i-1,CC_u-|T(v_i)|,CC_{u,2}-|T(v_i)|^2,JJ] \cup S[v_i,\Delta_0,|A|,m_{i,1}^*]$.

If $M_2 \neq -\infty$ let 
$$m_{i,2}^* = \argmax_{0 \leq m_i \leq m} \left\{M_{aux}[v,\Delta_0,|A|,m-m_i,i-1]+M[v_i,\Delta_0,|A|,m_i] \right\}$$
Then let  $S_2 = \left\{v_i\right\} \cup S_{aux}[v,\Delta_0,|A|,m-m_{i,2}^*,i-1,CC_u-|T(v_i)|,CC_{u,2}-|T(v_i)|^2,JJ] \cup S[v_i,\Delta_0,|A|,m_{i,2}^*]$.

If $M_3 \neq -\infty$ let 
$$m_{i,3}^* = \argmax_{0 \leq m_i \leq m} \{ M_{aux}[v,\Delta_0,|A|,m-m_i,i-1,CC_u,CC_{u,2},JJ] + M[v_i,\Delta_0,|A|,m_i]\}$$
Then let  $S_3 = S_{aux}[v,\Delta_0,|A|,m-m_{i,3}^*,i-1,CC_u,CC_{u,2},JJ] \cup S[v_i,\Delta_0,|A|,m_{i,3}^*]$.

Then if $\max(M_1,M_2,M_3) \neq -\infty$ we set $S_{aux}[v,\Delta_0,|A|,m,i] =  S_{j^*}$ where $j^* =  \argmax_{ 1 \leq j \leq 3} \{M_j\}$.

\item Otherwise, if $i = 0$, then:
$$M_{aux}[v,\Delta_0,|A|,m,0,CC_u,CC_{u,2},JJ]=-\infty$$
If $m \neq 0$, $CC_u \neq 0$, $CC_{u,2} \neq 0$ or $JJ \neq 0$. Otherwise:
$$M_{aux}[v,\Delta_0,|A|,0,0,0,0,0]=0$$
$$S_{aux}[v,\Delta_0,|A|,0,0,0,0,0]= \emptyset$$
\end{itemize}

\subsection{Final recurrence}

Let 
$$\Delta_1^*, |A|_1^* = \argmax_{  \Delta_0, |A|} M[u,\Delta_0,|A|,|A|]$$ 
And
$$\Delta_2^*,|A|_2^* = \argmax_{ \Delta_0, |A| }M[u,\Delta_0,|A|+1,|A|]$$

Then, by Proposition \ref{prop:restricted} we can pick as a best response $(S[u, \Delta_1^*,|A|_1^*,|A|_1^*] ,1)$ if  $M[u,\Delta_1*,|A|_1^*,|A|_1^*]-\beta > M[u,\Delta_2^*,|A|_2^*+1,|A|_2^*]$ or $(S[u, \Delta_2^*,|A|_2^*+1,|A|_2^*] ,0)$ otherwise. 

Furthermore, notice that there are a polynomial (in $n$) number of possible values taking at most polynomial values (again, with respect $n$) for the distinct parameters of the array, then our algorithm can be computed in polynomial time, which is what the conjecture in this particular case claims.

Even though our algorithm solves the best response problem considering that the initial graph resulting from the strategies of all players but $u$, is connected, we believe that our techniques can be adapted to solve the problem in the general case.

\end{document}